\newcommand*{\da@rightarrow}{\mathchar"0\hexnumber@\symAMSa 4B }
\newcommand*{\da@leftarrow}{\mathchar"0\hexnumber@\symAMSa 4C }
\newcommand*{\xdashrightarrow}[2][]{%
  \mathrel{%
    \mathpalette{\da@xarrow{#1}{#2}{}\da@rightarrow{\,}{}}{}%
  }%
}
\newcommand{\xdashleftarrow}[2][]{%
  \mathrel{%
    \mathpalette{\da@xarrow{#1}{#2}\da@leftarrow{}{}{\,}}{}%
  }%
}
\newcommand*{\da@xarrow}[7]{%
  \sbox0{$\ifx#7\scriptstyle\scriptscriptstyle\else\scriptstyle\fi#5#1#6\m@th$}%
  \sbox2{$\ifx#7\scriptstyle\scriptscriptstyle\else\scriptstyle\fi#5#2#6\m@th$}%
  \sbox4{$#7\dabar@\m@th$}%
  \dimen@=\wd0 %
  \ifdim\wd2 >\dimen@
    \dimen@=\wd2 %
  \fi
  \count@=2 %
  \def\da@bars{\dabar@\dabar@}%
  \@whiledim\count@\wd4<\dimen@\do{%
    \advance\count@\@ne
    \expandafter\def\expandafter\da@bars\expandafter{%
      \da@bars
      \dabar@ 
    }%
  }%
  \mathrel{#3}%
  \mathrel{%
    \mathop{\da@bars}\limits
    \ifx\\#1\\%
    \else
      _{\copy0}%
    \fi
    \ifx\\#2\\%
    \else
      ^{\copy2}%
    \fi
  }%
  \mathrel{#4}%
}
\newtheorem{theorem}{Theorem}
\newtheorem{lemma}[theorem]{Lemma}
\newtheorem{proposition}[theorem]{Proposition}
\newtheorem{definition}{Definition}
\newtheorem{example}{Example}
\newcommand{\BibTeX}{B\kern-.05em{\sc i\kern-.025em b}\kern-.08em\TeX}
\newcommand{\viola}[1]{}
\lstdefinelanguage{MuAC}{
		keywords = {Gives, Me, Requester, with},
	    comment=[l]{//},
	}
\lstdefinestyle{MuAC}{
		language=MuAC,
		basicstyle=\footnotesize\ttfamily, 
		otherkeywords={:-},
		keywordstyle=\color{blue},
		commentstyle=\color{gray},
	}
\def\makenewenum#1#2{%
\newcounter{cnt#1}
\newenvironment{#1}%
{\begin{list}{\makebox[0pt][r]{#2}}%
{\setlength{\itemsep}{0pt}%
 \setlength{\parsep}{.2em}%
 \setlength{\leftmargin}{2em}%
 \setlength{\labelwidth}{.2em}%
 \usecounter{cnt#1}}}%
{\end{list}}}
\newenvironment{restate-theorem}[1]%
  {\begin{trivlist}\item[]{\normalsize\bfseries{\sffamily}
        Restatement of Theorem~#1.}\hspace*{0mm}\it}%
  {\end{trivlist}}
\newenvironment{restate-lemma}[1]%
  {\begin{trivlist}\item[]{\normalsize\bfseries{\sffamily}
        Restatement of Lemma~#1.}\hspace*{0mm}\it}%
  {\end{trivlist}}
\newenvironment{restate-proposition}[1]%
  {\begin{trivlist}\item[]{\normalsize\bfseries{\sffamily}
        Restatement of Proposition~#1.}\hspace*{0mm}\it}%
  {\end{trivlist}}
\newcommand{\tr}{tr}
\newcommand{\semden}[1]{\llparenthesis\, #1 \,\rrparenthesis}
\def\makenewenum#1#2{%
	\newcounter{cnt#1}
	\newenvironment{#1}%
	{\begin{list}{\makebox[0pt][r]{#2}}%
			{\setlength{\itemsep}{0pt}%
				\setlength{\parsep}{.2em}%
				\setlength{\leftmargin}{2em}%
				\setlength{\labelwidth}{.2em}%
				\usecounter{cnt#1}}}%
		{\end{list}}}
\newcommand{\semantics}[1]{\llparenthesis#1\rrparenthesis}
\DeclareMathOperator{\linearcontract}{\multimap\hspace{-8pt}\multimap}
\newcommand{\contract}{\twoheadrightarrow}
\newcommand{\MuACL}{CEL}
\newcommand{\MuACstate}{\mathit{st}}
\newcommand{\st}{\MuACstate}
\newcommand{\MuACStates}{\mathit{St}}
\newcommand{\usr}{\mathit{a}}
\newcommand{\Usr}{\mathcal{A}}
\newcommand{\coal}{\mathit{C}}
\newcommand{\res}{\mathit{r}}
\newcommand{\Res}{\mathcal{R}}
\newcommand{\Tran}{\mathit{Tr}}
\newcommand{\tran}{\mathit{tr}}
\newcommand{\exc}{\mathit{exc}}
\newcommand{\Exc}{\mathit{Exc}}
\newcommand{\MuACLs}{\MuACL$_{\text{(Cut)}}$}
\newcommand{\Pol}{\mathit{Pol}}
\newcommand{\pol}{\mathit{pol}}
\newcommand{\Alice}{\texttt{A}}
\newcommand{\Bob}{\texttt{B}}
\newcommand{\Charlie}{\texttt{C}}
\newcommand{\ki}{\texttt{k}}
\newcommand{\lm}{\texttt{l}}
\newcommand{\ma}{\texttt{m}}
\begin{document}

\begin{frontmatter}

\paperid{2843}

\title{A Logic for Policy Based Resource Exchanges in Multiagent Systems}

\author[A]{\fnms{Lorenzo}~\snm{Ceragioli}
}
\author[A,B]{\fnms{Pierpaolo}~\snm{Degano}
}
\author[A]{\fnms{Letterio}~\snm{Galletta}
}
\author[C]{\fnms{Luca}~\snm{Viganò}
} 

\address[A]{IMT School for Advanced Studies Lucca, Italy}
\address[B]{Dipartimento di Informatica, University of Pisa, Italy} %
\address[C]{Department of Informatics, King's College London, UK}

\begin{abstract}
In multiagent systems autonomous agents interact with each other to achieve individual and collective goals. 
Typical interactions concern negotiation and agreement on resource exchanges.
Modeling and formalizing these agreements pose significant challenges, particularly in capturing the dynamic behaviour of agents, while ensuring that resources are correctly handled.
Here, we propose exchange environments as a formal setting where agents specify and obey exchange policies, which are declarative statements about what resources they offer and what they require in return.
Furthermore, we introduce a decidable extension of the computational fragment of linear logic as a fundamental tool for representing  exchange environments 
and studying their dynamics in terms of provability.

\end{abstract}

\end{frontmatter}

\section{Introduction}\label{sec:intro}

Multiagent systems represent complex environments where autonomous agents interact to achieve individual and collective goals. 
Central to these interactions is the concept of resource exchange, which requires agents to negotiate and reach an agreement about the allocation of resources. 
Modeling and formalizing these agreements pose significant challenges to capture the behaviour of agents, in particular for ensuring that resources are allocated profitably for the agents involved in exchanges.
Besides AI, the study of such models has an effect on a broad spectrum of research domains, ranging from cooperative problem-solving to sharing economy scenarios. 
In cooperative problem-solving, the allocation of resources to agents is essential for enabling each agent to fulfill its assigned tasks effectively. 
In a typical sharing economy scenario, a community of users rely on a digital platform to foster collaboration and to share and transfer to each other resources and assets via peer-to-peer transactions.
Dynamic resource management has been addressed from diverse perspectives, including the design of negotiation and optimization strategies, game-theoretical analysis and logics.
In particular, the last approach emphasizes the development of frameworks to represent and study resource allocation and negotiation in multiagent systems, leveraging logic as a fundamental tool for this purpose.
Here, we follow the logical approach, and provide a twofold contribution.
First, we introduce the notion \emph{exchange environment} to formalize a multiagent system that aims to model both cooperative and competitive behavior. 
An exchange environment is a transition system, where states record the ownership of the resources, and transitions represent resource exchanges between agents, who can form coalitions.
Furthermore,  exchanges are constrained by declarative statements, called \emph{exchange policies}, which agents specify in isolation to prescribe what resources they offer and what they require in return (examples are in~\autoref{sec:examples}).
Using such policies, agents regulate competitiveness and foster cooperation.
The exchanges in a transition must guarantee that each participant gives the promised resources and gets the required ones, namely it is an \emph{agreement}.

In addition, we consider agent's \emph{evaluation functions}, and characterize when each permitted exchange is beneficial to all the agents of a coalition, namely it is a \emph{deal}.
Our second contribution is checking that a resource exchange is a deal.
To achieve that, it suffices to inspect all the rules of a policy and to check that the utility value of the obtained resources is greater than that of those given away.

Moreover, a crucial issue arises when verifying that the policies of all the participants are met, i.e. the exchange is an agreement.
Agreements may be circular, as it is typical of human and of virtual contracts.
To address this issue, we extend the computational fragment of linear logic to obtain \emph{Computational Exchange Logic}, \MuACL\ for short.
This logic handles circularity through a specific operator, called \emph{linear contractual implication}, inspired by PCL~\cite{BZ}.
To the best our knowledge, \MuACL\ is the first logic that combines linear and contractual aspects.
Every exchange $\exc$ is then encoded as a \MuACL\ formula, and verifying that $\exc$ is  indeed an agreement is reduced to proving the corresponding formula.
This procedure is effective because the validity of \MuACL\ formulas is decidable, which is  another main technical result of ours.

Another advantage of \MuACL\ is that it provides us with the means to consider as legal sequences of exchanges where an agents can contract temporary debts by offering resources that they do not currently have, but will acquire in a subsequent exchange.
Remarkably, the cut rule of the logic suffices to handle debts.

We proceed as follows.
\autoref{sec:examples} presents a running example.
We formalize exchange environments, policies and their relation with valuation functions in~\autoref{sec:env}.
Our logic \MuACL\ is in~\autoref{sec:col-formal}, and its extension to deal with debts in~\autoref{subsec:fair-computations}.
Finally, \autoref{sec:col-related} presents related works and \autoref{sec:col-conclude} draws conclusions.
The appendices contain the full proofs.

\section{Working examples and exchange policies}\label{sec:examples}
\begin{figure*}[t]
\begin{subfigure}{0.2\textwidth}
	\centering
	\footnotesize
	\begin{tikzpicture}
		\node[inner sep=0pt] (a)  at (1.75,.66) {\begin{tabular}{c}\includegraphics[scale=0.04]{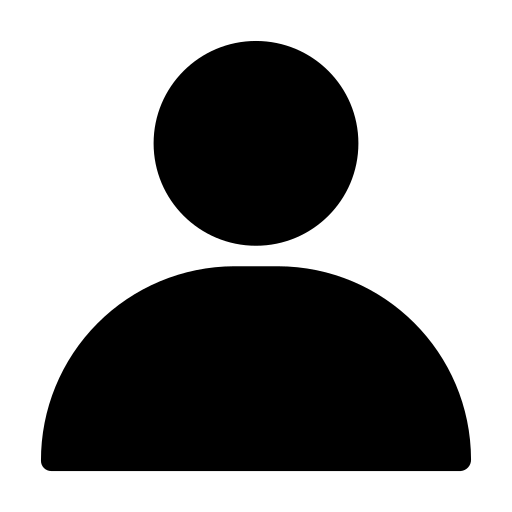}
				\\[-0.02cm] Alice
		\end{tabular}};
		
		\node[inner sep=0pt] (b)  at (3.5,0) {\begin{tabular}{c}\includegraphics[scale=0.04]{images/user}
				\\[-0.02cm] Bob
		\end{tabular}};

		{
			\draw[o->, thick, bend right] (a) edge node[above]{$l$} (b);
			\draw[o->, thick, bend right] (b) edge node[above]{$k$} (a);
		}
	\end{tikzpicture}
	\caption{Direct exchange.}
	\label{fig:ex:1}
\end{subfigure}
\begin{subfigure}{0.25\textwidth}
	\centering
	\footnotesize
	\begin{tikzpicture}
		\node[inner sep=0pt] (alice)  at (0,0) {\begin{tabular}{c}\includegraphics[scale=0.04]{images/user}
				\\[-0.02cm] Alice
		\end{tabular}};
		
		\node[inner sep=0pt] (bob)  at (1.5,.66) {\begin{tabular}{c}\includegraphics[scale=0.04]{images/user}
				\\[-0.02cm] Bob
		\end{tabular}};
		
		\node[inner sep=0pt] (carl)  at (3,0) {\begin{tabular}{c}\includegraphics[scale=0.04]{images/user}
				\\[-0.02cm] Carl
		\end{tabular}};	
		
		{
			\draw[o->, thick, bend left] (alice) edge node[above]{$k$} (bob);
			\draw[o->, thick, bend left] (bob) edge node[above]{$m$} (carl);
			\draw[<-o, thick, out=-20, in=-160] (alice) edge node[above]{$l$} (carl);
		}
	\end{tikzpicture}
	\caption{Circular exchange.}
	\label{fig:ex:2}
\end{subfigure}
\begin{subfigure}{0.25\textwidth}
	\centering
	\footnotesize
	\begin{tikzpicture}
		\node[inner sep=0pt] (bob)  at (0,0) {\begin{tabular}{c}\includegraphics[scale=0.04]{images/user}
				\\[-0.02cm] Bob
		\end{tabular}};
		
		\node[inner sep=0pt] (alice)  at (1.5,.66) {\begin{tabular}{c}\includegraphics[scale=0.04]{images/user}
				\\[-0.02cm] Alice
		\end{tabular}};
		
		\node[inner sep=0pt] (carl)  at (3,0) {\begin{tabular}{c}\includegraphics[scale=0.04]{images/user}
				\\[-0.02cm] Carl
		\end{tabular}};	
		
		{
			\draw[o->, thick, bend left] (bob) edge node[above]{$k$} (alice);
			\draw[<-o, thick, out=-20, in=-160] (bob) edge node[above]{$l$} (carl);
		}
	\end{tikzpicture}
	\caption{Coalition payment.}
	\label{fig:ex:3}
\end{subfigure}
\begin{subfigure}{0.25\textwidth}
	\centering
	\footnotesize
	\begin{tikzpicture}
		\node[inner sep=0pt] (carl)  at (0,0) {\begin{tabular}{c}\includegraphics[scale=0.04]{images/user}
				\\[-0.02cm] Carl
		\end{tabular}};
		
		\node[inner sep=0pt] (alice)  at (2,.66) {\begin{tabular}{c}\includegraphics[scale=0.04]{images/user}
				\\[-0.02cm] Alice
		\end{tabular}};
		
		\node[inner sep=0pt] (bob)  at (4,0) {\begin{tabular}{c}\includegraphics[scale=0.04]{images/user}
				\\[-0.02cm] Bob
		\end{tabular}};	
		
		{
			\draw[<-o, thick, bend left] (bob) edge node[above]{$l$} (alice);
			\draw[o->, thick, bend right,out=-30] (bob) edge node[above]{$2m$} (alice);
			\draw[o->, thick, bend left] (alice) edge node[above]{$m$} (carl);
			\draw[<-o, thick, out=-20, bend right] (alice) edge node[above]{$l$} (carl);
		}
	\end{tikzpicture}
	\caption{Debit.}
	\label{fig:ex:4}
\end{subfigure}
\vspace{\baselineskip}
\caption{Examples of agreements among players.}
\vspace{\baselineskip}
\label{fig:ex:agreements}
\end{figure*}
Consider three agents Alice, Bob and Carl (abbreviated \Alice, \Bob\ and \Charlie).
Let their set of resources consists of kiwis, lemons and mandarins (written \ki, \lm\ and \ma).
Assume agents freely form a coalition and interact by exchanging resources among them to obtain others more valuable for them (some examples of exchanges are in \autoref{fig:ex:agreements}.).
Instead of directly bargaining one with the others, agents define their policies in advance  
whose application is automatic: 
an agent accepts to perform an exchange if and only if it obeys her policy.
Our goal is twofold: $(1)$ to propose a model rich enough to handle the circular reasoning needed for resolving the conditions expressed by the policies; and $(2)$ to guarantee that the policy of a coalition accepts an exchange if and only if it advantages its members.

\begin{example}\label{ex:1}
Assume \Alice(lice) prefers \ki(iwis) over \lm(emons), while \Bob(ob) preferences are opposite.
Assume also that \Alice\ accepts to perform exchanges with \Bob.
Then, \Alice's policy includes a rule stating that she wills to exchange kiwis for lemons with \Bob.
Say that \Bob\ accepts exchanges with \Alice\ as far as he gets \lm\ in return for \ki\ from someone.
Their policies can be roughly described as the following statements, which permit the direct exchange depicted in~\autoref{fig:ex:1}:
\begin{description}
	\item[\textbf{A1}] I give \Bob\ a \lm\ if I get a \ki\ in return from him;
	\item[\textbf{B1}] I give \Alice\ a \ki\ if I get a \lm\ in return.
\end{description}
\end{example}
Direct exchanges can be composed, e.g. \Alice\ and \Bob\ can exchange two fruits at a time.
However, not every agreement involves two parts only: circular agreements allow agents to perform exchanges that cannot be expressed as combination of direct exchanges.
\begin{example}\label{ex:2}
Assume \Alice\ accepts to exchange a  \ki\ for a \lm, \Bob\ a \ma(andarin) for a \ki, and \Charlie\ a \lm\ for a \ma.
Their policies include:
\begin{description}
	\item[\textbf{A2}] I give any agent a \ki\  if I get a \lm\ in return;
	\item[\textbf{B2}] I give any agent a \ma\ if I get a \ki\  in return;
	\item[\textbf{C1}] I give any agent a \lm\ if I get a \ma\ in return.
\end{description}
Suppose that \Alice, \Bob, \Charlie\ have each a single \ki, \ma, and \lm, and want a \lm, a \ki, and a \ma, respectively.
No exchange between two agents satisfies their policies at the same time.
But a circular exchange as in~\autoref{fig:ex:2} can take place, where each agent gives something to another and is payed by a third one so that all of them are satisfied at the end.
\end{example}
It is not always the case that an agent wants something in return for herself: resources can be given for free (e.g. if their value is less than zero for the owner) or for helping someone else, e.g. a member of the same coalition.
\begin{example}\label{ex:3}
Assume \Alice\ and \Charlie\ are in a coalition: \Alice\ accepts to pay a \ma\ to everyone who gives a \ki\ to \Charlie, and \Charlie\ accepts to give a \lm\ in return for any \ki\ given to \Alice.
The policy of their coalition includes
\begin{description}
	\item[\textbf{AC1}] \Alice\  gives you a \ma\ if you give a \ki\ to \Charlie\  in return;
	\item[\textbf{AC2}] \Charlie\  gives you a \lm\  if you give a \ki\  to \Alice\  in return.
\end{description}
Since \Bob\ accepts to exchange \ki\  for \lm\  (rule \textbf{B1}), the exchange in~\autoref{fig:ex:3} can occur: \Bob\ is happy to receive a \lm\  for a \ki, \Alice\ to receive \ki, and \Charlie\ to pay for her
(in the rules, ``you'' stands for \emph{any} agent).
\end{example}
So far we have only evaluated exchanges in terms of satisfaction of agent's policies.
However, another constraint is put on exchanges: an agent $\usr$ offering a resource $\res$ must possess it before getting the wanted resource. 
In~\autoref{subsec:fair-computations}, we show that $\usr$ can still obtain what she wants, incurring a temporary debt that is paid back by acquiring $\res$ by some other agent.
\begin{example}\label{ex:4}
	Assume that \Bob\ and \Alice\ want to exchange a \lm\ for two \ma\ of \Bob. 
	Their policies then contain the rules:
	\begin{description}
		\item[\textbf{A3}] I give \Bob\ a \lm\ if he gives me two \ma\ in return;
	  	\item[\textbf{B3}] I give \Alice\ two \ma\ if she gives me a \lm\ in return.
	\end{description}
	A direct exchange can take place, if both have the needed resources, but assume that \Alice\ has no \lm\ and \ma.
	\Alice\ can however perform an exchange with \Charlie\  (allowed by rule \textbf{C1}) incurring a temporary debt of \lm, that is given to \Bob\ to obtain two \ma s, and pay the debt (see~\autoref{fig:ex:4}).
\end{example}

\section{Policy Based Exchange Environments}\label{sec:env}

In this section, we first introduce the notion of \emph{exchange environment} as a formal model of scenarios where agents join coalitions and exchange resources according to their preferences and goals. 
This model is a transition system where states are resources allocations and transitions are resource exchanges between agents of a coalition. 
Then, we introduce the notion of \emph{exchange policies} that are statements defined by coalitions in isolation to express what they offer to others and what they want in return.
The transitions must obey the policies of the involved agents and we call them \emph{agreements}.
Moreover, we introduce the notion of \emph{valuation function} to capture the utility that a given resource allocation has for each agent. 
We characterize then a \emph{deal} as an exchange that increases the utility for the agents of a coalition, and a policy as rational when it leads to deals.

\subsection{Exchange Environments}

\noindent
Below, we assume the following finite sets:
a set $\Res$ of \emph{resources}, ranged over by $\res, \res', \res''$ each associated with a (fixed) quantity $q(\res) \in \mathbb{N}$;
a set $\Usr$ of \emph{agents}, ranged over by $\usr, \usr', \usr''$.

We start by defining resource allocations.
Intuitively, they specify the resources each agent owns under the condition that we cannot assign more resources than the available ones. 
Formally,  
\begin{definition}[Resource Allocation]
A \emph{resource allocation} $\MuACstate$ is a function associating each agent with a multiset of $\Res$ such that
$\sum_{\usr \in \Usr} \MuACstate(\usr)(\res) = q(\res)$.
\end{definition}
Next, we introduce the notions of transfer and exchange. 
A \emph{transfer} occurs when a agent $\usr$ sends her resource $\res$ to another agent $\usr'$.
An \emph{exchange} is a finite multiset of transfers.%
\footnote{
		Hereafter we use multisets, i.e. sets with many occurrences of the same object.
		As usual, we represent multisets by 
		functions $f, g, \dots$ from each element to the natural number of its occurrences in the multiset.
		Also, the disjoint union for multisets $(f \uplus g)(x)$ is defined as $f(x) + g(x)$ for every $x$ in the domain.
		For simplicity, we carry the set notation over multisets.
}
Then, we define an \emph{exchange environment} as a transition system with allocations as states and exchanges as transitions.

\begin{definition}[Transfer, Exchange and Exchange Environment] \label{def:ee}
	An \emph{exchange} is a multiset $\exc \in \Exc$ of \emph{transfers} $\tran \in \Tran$, where $\tran$ is a triple 
	$\usr \xmapsto{\res} \usr'$, 
	with $\usr' \neq \usr$.

	An \emph{exchange environment} %
	is a pair $(\MuACStates, \rightarrow)$, where $\rightarrow\ \subseteq \MuACStates \times \Exc \times \MuACStates$ is the \emph{transition relation} that contains the triple $\MuACstate \xrightarrow{\exc} \MuACstate'$ if and only if for all $\usr \in \Usr$ and $\res \in \Res$ the following conditions hold
	\begin{align*}
	\text{(1)} \ &\sum\nolimits_{\usr'} \exc (\usr \xmapsto{\res} \usr') \leq \MuACstate(\usr)(\res) \quad \text{and}\\
	\text{(2)} \ &\MuACstate'(\usr)(\res) = \MuACstate(\usr)(\res) + 
	 \!\sum\nolimits_{\usr''}\!\! \exc (\usr''\!\xmapsto{\res} \usr) - \!\!\sum\nolimits_{\usr'}\!\! \exc (\usr \xmapsto{\res} \usr') 
	\end{align*}
\end{definition}
Condition $(1)$ ensures that an exchange $exc$ is possible only when an agent $a$ owns enough resources. Condition $(2)$ ensures that the allocation is correctly updated and that no resource is created or destroyed.
We say that an agent $\usr$ is \emph{involved} in a transition if it appears in the exchange labeling it.

\subsection{Exchange Policies}
So far, agent intents play no role, and thus there is no guarantee that a transition of the exchange environment complies with their preferences.
We propose an operational characterization of these preferences via \emph{exchange policies} where each policy specifies 
under which conditions an agent accepts an exchange.
The next subsection also introduces 
\emph{valuation functions} to provide a quantitative measure of exchanges, associating each allocation with a utility value for each agent.
As common in real world, agents can join \emph{coalitions}, i.e. sets of agents that define shared policies to obtain mutual benefits.
Coalitions define their policies in isolation, and rely on them to perform decisions about exchanges.

Roughly, an exchange policy is a set of \emph{exchange rules}, written 
$\exc \triangleleft \exc'$ to be read as follows: the agents in the coalition are willing to perform the exchanges $\exc$ in return for the exchange $\exc'$.
Of course, it is not possible to promise transfers on behalf of agents not in the coalition.
The exchange policy determines whether an exchange requires the pay-off to be given directly to the agent who gives away some resources or to another agent in the coalition, therefore allowing some agent to pay for others (and to accept that others are paying for them).
Formally:

\begin{definition}[Exchange Rules and Policies]\label{def:policy} 
	Given a coalition \mbox{$\coal \subseteq \Usr$,} an \emph{exchange rule} is a pair $\exc \triangleleft \exc' \in \Exc \times \Exc$ such that for all 
	$\usr \xmapsto{\res} \usr' \in \exc$ and for all $\usr' \xmapsto{\res} \usr \in \exc'$, $\usr \in \coal$.
	
	The \emph{exchange policy} $pol_\coal$ of $\coal$ is a set of exchange rules. 
\end{definition}
The simplest policies are for single agents (i.e.~singleton coalitions) giving resources for something in return.
	\begin{example}\label{ex:exc-pol-simp}
			We restate the rule \textbf{A1} of the example above, i.e. \Alice\  wants to exchange $\lm$ resources with \Bob\ for $\ki$.
				\begin{align*}
				\pol_{\{\Alice\}} \supseteq \{ \{\Alice\  \xmapsto{\lm} \Bob\} \triangleleft \{ \Bob \xmapsto{\ki} \Alice\  \} \}
				\end{align*}
The rule \textbf{B1} is more general, as \Bob\ does not care who is paying him.
				\begin{align*}
				\pol_{\{\Bob\}} \supseteq \bigcup\nolimits_{\usr \in \Usr} \{ \{\Bob\  \xmapsto{\ki} \Alice\} \triangleleft \{ \usr \xmapsto{\lm} \Bob\  \} \}
				\end{align*}
			Finally, in \textbf{C1}, \Charlie\  accepts to perform exchanges with everybody. 
				\begin{align*}
				\pol_{\{\Charlie\}} \supseteq \bigcup\nolimits_{\usr, \usr' \in \Usr} \{ \{\Charlie\  \xmapsto{\lm} \usr\} \triangleleft \{ \usr' \xmapsto{\ma} \Charlie\  \} \}
				\end{align*}
			The same is also true for the rules \textbf{A2} and \textbf{B2}.
				\begin{align*}
				\pol_{\{\Alice\}} &\supseteq \bigcup\nolimits_{\usr, \usr' \in \Usr} \{ \{\Alice\  \xmapsto{\ki} \usr\} \triangleleft \{ \usr' \xmapsto{\lm} \Alice\  \} \}\\
				\pol_{\{\Bob\}} &\supseteq \bigcup\nolimits_{\usr, \usr' \in \Usr} \{ \{\Bob\  \xmapsto{\ma} \usr\} \triangleleft \{ \usr' \xmapsto{\ki} \Bob\  \} \}				
				\end{align*}
	\end{example}
Note that we can write the rules $\emptyset \triangleleft \exc$ and $\exc' \triangleleft \emptyset$: the first means that the coalition accepts to receive the resources in $\exc$ for free, the second that they are happy to perform the exchange $\exc'$ without receiving anything in return.
	\begin{example}\label{ex:exc-pol}
			In the policy below, \{\Alice, \Charlie\} is a coalition, and \Charlie\ will pay
		with an \lm\ every agent that gives a \ki\ to \Alice\ (rule \textbf{AC2} of~\autoref{ex:2}).
			\begin{align*}
			\pol_{\{\Alice,\Charlie\}} \supseteq \bigcup\nolimits_{a \in A} \{ \{\Charlie \xmapsto{\lm} a\} \triangleleft \{ a \xmapsto{\ki} \Alice \} \}
			\end{align*}
	\end{example}

We now move towards the definition of agreements as exchanges that satisfy the policies of all agents involved.
First, we say that an exchange is a accepted by a coalition when it respects its policy.
Intuitively, this happens if all the agents of the coalition receive in return (as a payoff) what they are asking for each resource that they are giving away (as a contribution).
Note that this check can be done by the agents of the coalition in isolation.
Formally:

\begin{definition}[Accepted Exchanges]\label{def:accepted-exc} 
	Let $\pol_\coal \vDash_{\exc'} \exc$ be the smallest relation over $\Pol \times \Exc \times \Exc$ such that 
	\begin{enumerate}
		\item $\pol_{\coal} \vDash \emptyset \triangleleft \emptyset$;
		\item $\pol_\coal \vDash \exc \triangleleft \exc'$, if $\exc \triangleleft \exc' \in \pol_{\coal}$;
		\item $\pol_\coal \vDash (\exc_1 \uplus \exc_2) \triangleleft (\exc_1' \uplus \exc_2')$, 
			if $\pol_\coal \vDash \exc_i \triangleleft \exc_i'$, $i = 1,2$.
	\end{enumerate}
The coalition $\coal$ accepts the exchange $\exc \uplus \exc'$ if $\pol_\coal \vDash \exc' \triangleleft \exc$, and we call $\exc$ its contribution and $\exc'$ its payoff.
\end{definition}

Intuitively, an exchange is an \emph{agreement} if it satisfies the policies of all the involved coalitions.
Consider~\autoref{ex:1}: the exchange $\exc = \{\Alice\xmapsto{\lm}\Bob, \Bob\xmapsto{\ki}\Alice\}$ is an agreement because the left part of the rule of each agent matches the right part of the other, and their union is $\exc$.
This condition is lifted up to sets of rules: the union of the left parts of some rules of the agents must equate the union of all their right parts.
An example is the circular agreement $\{\Alice\xmapsto{\ki}\Bob, \Bob\xmapsto{\ma}\Charlie, \Charlie\xmapsto{\lm}\Alice\}$ in~\autoref{fig:ex:3}, which is obtained by using the rules \textbf{A2}, \textbf{B2} and \textbf{C1} of~\autoref{ex:exc-pol-simp}.
\begin{definition}[Agreement]\label{def:fairexc} 
	An exchange $\exc$ is an \emph{agreement} if and only if for all coalition $\coal$ such that $\coal \subseteq \Usr$ there exists a pair of exchanges $\exc_{\coal}$ and $\exc_{\coal}'$ such that $\pol_{\coal} \vDash {\exc_\coal}' \triangleleft \exc_{\coal}$ and $\biguplus_{\coal \subseteq \Usr} \exc_{\coal}' = \biguplus_{\coal \subseteq \Usr} \exc_{\coal} = \exc$.

\end{definition}
Note that the disjoint union of agreements is still an agreement.
Actually, disjoint union is critical for defining agreements.
Verifying disjointness requires a sort of global check on the partitioning of the exchanged resources.
Otherwise, the same resource can be \emph{offered more than once} to different agents (a sort of double spending), and this may go unnoticed since each coalition
only knows its policy and its portion of the overall exchange. 
\begin{example}\label{ex:fair-nonlocal}
	Consider~\autoref{ex:exc-pol-simp} and \ref{ex:exc-pol}, and the following exchanges where 
	\Bob\  gives an \ki\ to \Alice\ and both \Alice\ and \Charlie\ pay for it with an \lm\ (i.e. a double spending occurs):
	\begin{align*}
		\exc = \{ \Bob \xmapsto{\ki} \Alice, \Alice \xmapsto{\lm} \Bob, \Charlie \xmapsto{\lm} \Bob \}. 
	\end{align*}
	The exchange may seem fair to both \Alice\ (for \textbf{A1}) and \Charlie\ (for \textbf{AC2}), if $\exc$ is 
	decomposed as the (non-disjoint) union of 
		\begin{align*}
			\exc_{\{\Alice,\Charlie\}} &= \{ \Charlie \xmapsto{\lm} \Bob \} &
			             \exc_{\{\Alice,\Charlie\}}' &= \{ \Bob \xmapsto{\ki} \Alice \} \\
			\exc_{\{\Alice\}} &= \{ \Alice \xmapsto{\lm} \Bob \} & \exc_{\{\Alice\}}' &= \{ \Bob \xmapsto{\ki} \Alice \} \\
			\exc_{\{\Bob\}} &= \{ \Bob \xmapsto{\ki} \Alice, \Bob \xmapsto{\ki} \Alice \} &
			\exc_{\{\Bob\}}' &= \{ \Alice \xmapsto{\lm} \Bob, \Charlie \xmapsto{\lm} \Bob \}
		\end{align*}
	Of course, \Bob\ is scamming \Alice\ and \Charlie: they are both paying for the same resource.
	Indeed $\exc$ is not an agreement, but it is $\exc' = \{ \Bob \xmapsto{\ki} \Alice$, $\Bob \xmapsto{\ki} \Alice, \Alice \xmapsto{\lm} \Bob, \Charlie \xmapsto{\lm} \Bob \} = \exc_{\{\Alice,\Charlie\}} \uplus \exc_{\{\Alice\}} \uplus \exc_{\{\Bob\}} = \exc_{\{\Alice,\Charlie\}}' \uplus \exc_{\{\Alice\}}' \uplus \exc_{\{\Bob\}}'$ (note that $\exc'$ is the disjoint union of the exchanges in~\autoref{fig:ex:1} and \ref{fig:ex:3}).
\end{example}

\subsection{Policies and Valuation Functions}
We now characterize when a coalition increases its evaluation of an assignment through an agreement.
First, each resource is assigned a value by each agent $\usr$, using which $\usr$ gets the overall value of an assignment.
Note that the following definition introduces special valuation functions that are called additive in~\cite{Porello2010}.

\begin{definition}
Let $av_{\usr}: \Usr \times \Res \rightarrow \mathbb{Z}$ represent the (positive or negative) value that the agent $\usr$ 
associates with the case that $\usr'$ holds a resource $\res$.
\\
The \emph{valuation function} $v_{\usr} \colon \MuACStates \rightarrow \mathbb{Z}$ of the agent $\usr$ is the function 
$v_{\usr}(\MuACstate) = \sum_{\usr' \in \Usr} \sum_{\res \in \Res} \MuACstate(\usr')(\res) \cdot av_{\usr}(\usr',\res)$.
\end{definition}

\begin{definition}[Deal]
Given a coalition $\coal$ and an exchange $\exc$, let $\exc\!\downarrow_\coal = \{\usr_1 \xmapsto{r} \usr_2 \in \exc \mid \usr_1 \in \coal \text{ or } \usr_2 \in \coal\}$.
\\
Then ${\exc\!\downarrow_\coal}$ is a \emph{deal} for  $\coal$ if and only if $\, \forall \usr \in \coal$ all the transitions $\MuACstate \xrightarrow{\exc\!\downarrow_\coal} \MuACstate'$ 
are such that $v_{\usr}(\MuACstate') \geq v_{\usr}(\MuACstate)$.
\end{definition}

\begin{restatable}[Policies and deals]{theorem}{polanddeal}\label{th:pol-deal}
Let $\coal$ be a coalition and let $\exc$ be an exchange accepted by $\pol_{\coal}$.
If every $\exc'' \triangleleft \exc' \in \pol_{\coal}$ is such that $\,\forall \usr \in \coal.\,W(a, \exc'' \uplus \exc') \geq 0$,
then $\exc$ is a deal for $\coal$, 
where $W(\usr, \{\usr_i \xmapsto{\res_i} \usr_i' \}_{i \in I}) = \sum_{i \in I} 
(av_{\usr}(a_i', r_i) - av_{\usr}(a_i, r_i))$.
\end{restatable}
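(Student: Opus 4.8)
The plan is to reduce the deal condition to a single, state-independent sign check on $W$. The cornerstone is the identity that, for every transition $\st \xrightarrow{\exc} \st'$ and every agent $\usr$, the induced value variation is exactly $W$, i.e.\ $v_{\usr}(\st') - v_{\usr}(\st) = W(\usr,\exc)$. I would prove this by writing $v_{\usr}(\st')-v_{\usr}(\st) = \sum_{\usr'}\sum_{\res}\big(\st'(\usr')(\res)-\st(\usr')(\res)\big)\cdot av_{\usr}(\usr',\res)$ and substituting condition~$(2)$ of \autoref{def:ee} for the difference $\st'(\usr')(\res)-\st(\usr')(\res)$, which is the net flow of $\res$ into $\usr'$. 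Reindexing the resulting double sum by transfers rather than by (agent, resource) pairs turns each transfer $\usr_i \xmapsto{\res_i}\usr_i'$ into the two contributions $+av_{\usr}(\usr_i',\res_i)$ (the receiver now holds $\res_i$) and $-av_{\usr}(\usr_i,\res_i)$ (the sender no longer holds it), whose sum over all transfers is precisely $W(\usr,\exc)$. Two facts fall out immediately and are used below: the variation does not depend on the starting state $\st$, and $W$ is additive, $W(\usr,\exc_1\uplus\exc_2)=W(\usr,\exc_1)+W(\usr,\exc_2)$, since it is a plain sum over transfers.

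Next I would establish, by induction on the derivation of the accepted-exchange judgement of \autoref{def:accepted-exc}, that $\pol_\coal \vDash X \triangleleft Y$ implies $W(\usr, X \uplus Y) \geq 0$ for every $\usr \in \coal$. Case~(1) is trivial, as $W(\usr,\emptyset)=0$; case~(2) is exactly the hypothesis of the theorem applied to the rule $X \triangleleft Y \in \pol_\coal$ (taking $\exc'' \uplus \exc' = X \uplus Y$, the orientation being irrelevant since $\uplus$ is commutative); and case~(3) follows from the additivity of $W$ together with the two inductive hypotheses. Since $\exc$ being accepted means $\exc = \exc_c \uplus \exc_p$ with $\pol_\coal \vDash \exc_p \triangleleft \exc_c$, instantiating this with $X = \exc_p$ and $Y = \exc_c$ yields $W(\usr, \exc) \geq 0$ for all $\usr \in \coal$.

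To conclude, I would observe that $\exc\!\downarrow_\coal = \exc$: by \autoref{def:policy} every transfer on the left-hand side of a policy rule has its sender in $\coal$ and every transfer on a right-hand side has its receiver in $\coal$, so each transfer occurring in the accepted $\exc$ involves an agent of $\coal$ and therefore survives the restriction $\downarrow_\coal$. Hence, for every $\usr \in \coal$ and every transition $\st \xrightarrow{\exc\downarrow_\coal}\st'$, the first identity gives $v_{\usr}(\st')-v_{\usr}(\st) = W(\usr,\exc\!\downarrow_\coal) = W(\usr,\exc) \geq 0$, which is exactly the definition of $\exc$ being a deal for $\coal$. The only delicate step is the first identity: the reindexing that converts the state-update equation of \autoref{def:ee} into the transfer-indexed form of $W$ must be carried out carefully, pairing each occurrence of a transfer in the incoming and outgoing sums; once it is in place, the remaining induction and the restriction argument are routine bookkeeping.
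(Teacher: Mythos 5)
Your proof is correct and takes essentially the same route as the paper's: the state-independent identity $v_{\usr}(\st') = v_{\usr}(\st) + W(\usr,\exc)$, an induction on the derivation of $\pol_\coal \vDash \exc' \triangleleft \exc$ exploiting the additivity of $W$, and the theorem's sign hypothesis at rule instances. You are in fact somewhat more thorough than the paper, which asserts the identity ``by definition'' and never discusses the restriction $\exc\!\downarrow_\coal$; your explicit reindexing argument for the identity and your observation (via \autoref{def:policy}) that $\exc\!\downarrow_\coal = \exc$ for accepted exchanges fill in exactly the steps the paper leaves implicit.
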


An immediate consequence of \autoref{th:pol-deal} is that a policy is \emph{sound}, 
in that it accepts only deals for $\coal$ if its rules $\exc' \triangleleft \exc$ are such that
the increment of value granted by its payoff $\exc'$ is greater than the loss of its contribution $\exc$.
One would like to have in addition that the policy $\pol_{\coal}$ accepts all the deals for $\coal$, a sort of \emph{completeness}.
If both cases hold, one has the so-called \emph{rational} policies.%

While \autoref{th:pol-deal} suggests an easy way of checking a policy sound,  verifting it is complete may require a brute force analysis in the case of finite resources.
Also, one can build one (out of the equivalent) less restrictive correct policy, namely a rational one, starting from the valuation function in hand, because the proof of the following theorem is constructive.
The idea is to split an $exc\!\downarrow_C$ in all possible pairs $\exc'', \exc' $
such that their weight $W$ is non negative and insert the rule $\exc'' \triangleleft \exc'$ in the policy $\pol_C$.
\begin{restatable}{theorem}{existsrational}\label{th:exists-rational}
Given the valuation functions of all the agents of a coalition $\coal$, there exists a rational policy for $\coal$.
\end{restatable}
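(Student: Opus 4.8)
\emph{The plan.} Following the hint, I would take as $\pol_\coal$ the largest policy compatible with \autoref{th:pol-deal}: let $\pol_\coal$ be the set of \emph{all} well-formed exchange rules $\exc'' \triangleleft \exc'$ (i.e.\ rules satisfying the membership constraint of \autoref{def:policy}) such that $\forall \usr \in \coal.\ W(\usr, \exc'' \uplus \exc') \geq 0$. I then have to verify the two halves of rationality: soundness (every exchange accepted by $\pol_\coal$ is a deal) and completeness (every deal for $\coal$ is accepted by $\pol_\coal$).

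The technical engine of both directions is the identity linking weights to valuations: for every transition $\MuACstate \xrightarrow{\exc} \MuACstate'$ and every agent $\usr$ one has $W(\usr, \exc) = v_\usr(\MuACstate') - v_\usr(\MuACstate)$. I would prove it by expanding $v_\usr(\MuACstate') - v_\usr(\MuACstate) = \sum_{\usr'}\sum_\res (\MuACstate'(\usr')(\res) - \MuACstate(\usr')(\res)) \cdot av_\usr(\usr', \res)$ and substituting condition $(2)$ of \autoref{def:ee}; each transfer $\usr_i \xmapsto{\res_i} \usr_i'$ then contributes exactly $av_\usr(\usr_i', \res_i) - av_\usr(\usr_i, \res_i)$, which is precisely the corresponding summand of $W$. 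Soundness is immediate afterwards: by construction every rule of $\pol_\coal$ satisfies the hypothesis of \autoref{th:pol-deal}, so every accepted exchange is a deal for $\coal$.

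For completeness, suppose $\exc\!\downarrow_\coal$ is a deal, i.e.\ $v_\usr(\MuACstate') \geq v_\usr(\MuACstate)$ for all $\usr \in \coal$ and all transitions labelled by $\exc\!\downarrow_\coal$; by the identity above this reads $W(\usr, \exc\!\downarrow_\coal) \geq 0$ for every $\usr \in \coal$. I would then split $\exc\!\downarrow_\coal$ into $\exc''$, the transfers whose sender lies in $\coal$, and $\exc'$, the remaining transfers (whose receiver must lie in $\coal$, since every transfer of $\exc\!\downarrow_\coal$ touches $\coal$). This partition makes $\exc'' \triangleleft \exc'$ a well-formed rule in the sense of \autoref{def:policy}, and since $W$ is additive over $\uplus$ its total weight equals $W(\usr, \exc\!\downarrow_\coal) \geq 0$ for all $\usr \in \coal$. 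Hence $\exc'' \triangleleft \exc' \in \pol_\coal$, so $\pol_\coal \vDash \exc'' \triangleleft \exc'$ by clause $(2)$ of \autoref{def:accepted-exc}, which is exactly the statement that $\coal$ accepts $\exc'' \uplus \exc' = \exc\!\downarrow_\coal$.

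\emph{Main obstacle.} The routine parts are the weight/valuation identity and soundness. The delicate point is completeness: I must argue that the single split by sender-membership always yields a rule that simultaneously (i) is well-formed — contributions emitted only by members, payoffs received only by members — and (ii) reassembles to exactly $\exc\!\downarrow_\coal$. The first hinges on the fact that every transfer of $\exc\!\downarrow_\coal$ involves at least one member, so a transfer \emph{not} emitted by a member is necessarily received by one; the second is just that the split is a genuine partition of the underlying multiset of transfers. A secondary point worth recording is that $\pol_\coal$ may be infinite, which is harmless for mere existence; should a finite policy be desired, I would bound transfer multiplicities by the totals $q(\res)$, since condition $(1)$ of \autoref{def:ee} forbids any transition from exercising larger multiplicities.
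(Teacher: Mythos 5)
Your proposal is correct and follows essentially the same route as the paper: the paper likewise constructs $\pol_\coal$ by inserting every rule $\exc'' \triangleleft \exc'$ with non-negative weight $W$, relies on the identity $v_\usr(\st') = v_\usr(\st) + W(\usr,\exc)$ (established in its proof of \autoref{th:pol-deal}), and secures finiteness precisely by bounding multiplicities via $q(\res)$, exactly as in your closing remark. Your explicit sender-membership split as the completeness witness, and your (tacit) assumption that a deal labels at least one transition so that the valuation inequality converts to $W(\usr,\exc\!\downarrow_\coal)\geq 0$, are details the paper's terse proof leaves implicit (it restricts to ``exchanges that may appear as labels''), so no substantive divergence or gap remains.
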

Even when the policies of all the coalitions are rational, it is not always the case that a transition is a deal for \emph{every} coalition.
E.g. let $\coal = \{\usr_1, \usr_2\}$ and $\coal' = \{ \usr_1', \usr_2' \}$ be two disjoint coalitions and let $\usr_1$ and $\usr_1'$ exchange some resources with $\usr_2$ and $\usr_2'$, respectively.
Even if the gain of $\usr_1$ performing the exchange with $\usr_2$ is positive for $\coal$, it may be less than the loss of value for $\usr_1$ caused by $\usr_1'$ and $\usr_2'$ performing their exchange.
This is never the case with special valuation functions, through which an agent assigns the same value to all the resources that does not belong to her.
In this case an agreement, 
which is also a deal, causes a quasi-Pareto improvement, because the value of the allocation does not decrease for all agents.
\begin{restatable}{theorem}{pareto}\label{th:Pareto}
If every policy is rational and $av_{\usr}(\usr',\res) = k_a$ for some fixed $k_a \in \mathbb{Z}$ when $\usr \neq \usr'$, then each transition 
$\st \xrightarrow{\exc} \st'$ with $\exc$ an agreement is such that $v_\usr(\st') \geq v_\usr(\st)$ for all $\usr$.
\end{restatable}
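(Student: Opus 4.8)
The plan is to turn the quasi-Pareto statement into an additive bookkeeping claim about each agent's \emph{own} bundle and then discharge it coalition by coalition using soundness of the rational policies. First I would record the identity behind $W$: for a transition $\st \xrightarrow{\exc} \st'$, condition~(2) of Definition~\ref{def:ee} together with the linearity of $v_\usr$ in the allocation yields
\[
v_\usr(\st') - v_\usr(\st) = \sum_{\usr_i \xmapsto{\res_i} \usr_i' \in \exc} \bigl(av_\usr(\usr_i',\res_i) - av_\usr(\usr_i,\res_i)\bigr) = W(\usr,\exc),
\]
so the goal becomes $W(\usr,\exc)\ge 0$ for every $\usr$, and I would use throughout that $W(\usr,-)$ is additive over $\uplus$.

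Second, I would exploit the special shape of $av$. Putting $g_\usr(\res)=av_\usr(\usr,\res)-k_\usr$, any transfer both of whose endpoints differ from $\usr$ contributes $k_\usr-k_\usr=0$ to $W(\usr,-)$; hence for every multiset of transfers $X$ one has $W(\usr,X)=W(\usr,X\!\downarrow_{\{\usr\}})$. Two consequences matter: the target reduces to showing that $\exc\!\downarrow_{\{\usr\}}$ is a deal for the singleton $\{\usr\}$, and, for any coalition $\coal$ with $\usr\in\coal$, the inequality $W(\usr,F_\coal)\ge 0$ collapses to a statement about $\usr$'s own transfers inside $F_\coal$ alone.

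Third, I would feed in the agreement. Fix witnessing families with $\biguplus_\coal \exc_\coal'=\biguplus_\coal \exc_\coal=\exc$ and $\pol_\coal \vDash \exc_\coal'\triangleleft \exc_\coal$, and set $F_\coal=\exc_\coal'\uplus\exc_\coal$. By soundness (the hypothesis of \autoref{th:pol-deal}, which holds because each rule of a rational policy is itself a deal) every $F_\coal$ is a deal for $\coal$, so $W(\usr,F_\coal)\ge 0$ for all $\usr\in\coal$. Now collect $\usr$'s share: let $S_\usr^\coal$ be the transfers of $\exc_\coal'$ sent by $\usr$ and $R_\usr^\coal$ those of $\exc_\coal$ received by $\usr$, and put $H_\coal=S_\usr^\coal\uplus R_\usr^\coal$. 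Every transfer emitted by $\usr$ lies in exactly one $\exc_\coal'$ with $\coal\ni\usr$ (its sender is in $\coal$) and every transfer received by $\usr$ lies in exactly one $\exc_\coal$ with $\coal\ni\usr$, so $\biguplus_{\coal\ni\usr}H_\coal=\exc\!\downarrow_{\{\usr\}}$. Since a contribution leaves $\coal$ and a payoff enters it, $\usr$ has no transfer with both endpoints in $\coal$, whence $H_\coal=F_\coal\!\downarrow_{\{\usr\}}$ and therefore $W(\usr,H_\coal)=W(\usr,F_\coal)\ge 0$ by the collapse of the second step. Summing, $W(\usr,\exc)=W(\usr,\exc\!\downarrow_{\{\usr\}})=\sum_{\coal\ni\usr}W(\usr,H_\coal)\ge 0$, as required.

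The step I expect to be delicate is the reassembly in the third paragraph, because $\biguplus_\coal\exc_\coal'=\exc$ and $\biguplus_\coal\exc_\coal=\exc$ are \emph{independent} tilings: a single transfer incident to $\usr$ is classified once by the coalition contributing it and once by the coalition cashing it as payoff, and these coalitions may differ, so I must argue that $\usr$'s outgoing and incoming transfers are each counted exactly once and that no internal transfer of a coalition containing $\usr$ slips into $H_\coal$ with the wrong sign. The crux is precisely the identity $H_\coal=F_\coal\!\downarrow_{\{\usr\}}$: it is where the boundary discipline of contributions and payoffs (Definition~\ref{def:accepted-exc}) is used, and where the special valuation is indispensable — without it the exchanges of coalitions \emph{not} containing $\usr$ would move resources $\usr$ cares about and could lower $v_\usr$, exactly the externality flagged before the theorem. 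I would isolate this reassembly as a separate lemma, proved by tracking each transfer of $\exc\!\downarrow_{\{\usr\}}$ through both tilings, taking care to rule out (or reattribute to singletons) any transfer internal to a coalition, since such transfers are the only obstruction to the clean per-agent summation.
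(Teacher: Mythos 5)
Your first three steps are correct and in fact sharper than the corresponding lines of the paper's own proof: the reduction of the goal to $W(\usr,\exc)\ge 0$, the isolation identity $W(\usr,X)=W(\usr,X\!\downarrow_{\{\usr\}})$, the per-coalition inequality $W(\usr,F_\coal)\ge 0$ from soundness, and the tiling $\biguplus_{\coal\ni\usr}H_\coal=\exc\!\downarrow_{\{\usr\}}$ (which is the correct version of the paper's one-line multiset identity) are all fine. The genuine gap is exactly the step you flagged as delicate: the claim $H_\coal=F_\coal\!\downarrow_{\{\usr\}}$, and the justification you give for it (``$\usr$ has no transfer with both endpoints in $\coal$''), are false under \autoref{def:policy}. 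An exchange rule pins only \emph{one} endpoint of each transfer --- the sender in the left part, the receiver in the right part --- and the other endpoint may also lie in $\coal$. Such intra-coalition transfers cannot be ruled out: an exchange happening entirely inside $\coal$ can be an agreement only if the same transfer occurs both in $\exc_\coal'$ (sanctioning its sender) and in some $\exc_{\coal''}$ with $\coal''\ni$ its receiver, possibly $\coal''=\coal$. Concretely, for $\coal=\{\Alice,\Charlie\}$ and $t=\Charlie\xmapsto{\lm}\Alice$: if $t\in\exc_\coal'$ then $t\in F_\coal\!\downarrow_{\{\Alice\}}$ but $t\notin H_\coal$ for $\usr=\Alice$; and if moreover $t\in\exc_\coal$ (legal, receiver $\Alice\in\coal$), then $t$ occurs twice in $F_\coal$ and once in $H_\coal$. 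In general $F_\coal\!\downarrow_{\{\usr\}}=H_\coal\uplus E_\coal$ with $E_\coal$ the $\usr$-incident transfers whose other tiling attribution also falls inside $\coal$, so soundness yields only $W(\usr,H_\coal)\ge -W(\usr,E_\coal)$, and nothing forces $W(\usr,E_\coal)\le 0$.

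Nor is this repaired by your fallback of ``reattributing to singletons'': acceptance is a property of the aggregate $\pol_\coal\vDash\exc_\coal'\triangleleft\exc_\coal$, and the deal inequality holds only for the multiset $F_\coal=\exc_\coal\uplus\exc_\coal'$ \emph{with its multiplicities}; there is no per-member or per-transfer decomposition to appeal to. The obstruction is substantive: take $\coal=\{\usr,\usr''\}$ with $\exc_\coal'=\{t_1,t_1',t_2\}$ and $\exc_\coal=\{t_2\}$, where $t_1,t_1'$ are sent by $\usr$ at cost $5$ each and $t_2=\usr''\xmapsto{\res}\usr$ is worth $6$ to $\usr$; then $t_2$ is double-counted in $F_\coal$, so $W(\usr,F_\coal)=-5-5+6+6\ge 0$ while $\usr$'s actual change along $H_\coal$ is $-5-5+6<0$, and the remaining transfers can be parked in coalitions not containing $\usr$. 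For what it is worth, the paper's own proof glosses over the same point: it asserts $\exc\!\downarrow_\usr=\biguplus_\coal(\exc_\coal\uplus\exc_\coal')\!\downarrow_\usr$ (which literally sums to $2\,\exc\!\downarrow_\usr$) and then sums the inequalities $W(\usr,(\exc_\coal\uplus\exc_\coal')\!\downarrow_\usr)\ge 0$ over \emph{all} $\coal$, although rationality supplies them only for $\coal\ni\usr$. So your instinct to isolate the reassembly as a separate lemma was right, but that lemma cannot be proved from the stated definitions; closing the argument needs an extra hypothesis (e.g., forbidding transfers between two members of one coalition, or evaluating the deal condition on the support rather than the multiset $\exc_\coal\uplus\exc_\coal'$) that neither your proof nor the paper supplies.
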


\section{A Logic for Characterizing Agreements}\label{sec:col-formal}

\begin{figure*}[t]
	\begin{center}
		\small
		\begin{tabular}{c}		
			\prftree[r]
			{(Cont)}
			{!\phi, !\phi, \Phi \vdash \phi'}
			{!\phi, \Phi \vdash \phi'}
			
			\qquad
			
			\prftree[r]
			{(Weak)}
			{\Phi \vdash \phi'}
			{!\phi, \Phi \vdash \phi'}

			\qquad
			
			\prftree[r]
			{($!$-left)}
			{\phi, \Phi \vdash \phi'}
			{!\phi, \Phi \vdash \phi'}
			
			\qquad
		
			\prftree[r]
			{($I$-left)}
			{\Phi \vdash \phi}
			{I, \Phi \vdash \phi}
			
			\qquad

			\prftree[r]
			{($I$-right)}
			{}
			{\vdash I}
			
			\\[.3cm]
			
			\prftree[r]
			{(Ax)}
			{\phi \vdash \phi}
			
			\qquad
			
			\prftree[r]
			{($\otimes$-left)}
			{\phi, \phi', \Phi \vdash \phi''}
			{\phi \otimes \phi', \Phi \vdash \phi''}
			
			\qquad
			
			\prftree[r]
			{($\otimes$-right)}
			{\Phi \vdash \phi\quad}
			{\Phi' \vdash \phi'}
			{\Phi, \Phi' \vdash \phi \otimes \phi'}
			
			\\[.3cm]

			\prftree[r]
			{($\multimap$-left)}
			{\Sigma \vdash \sigma}	
			{\Sigma, \sigma \multimap \sigma' \vdash \sigma'}
			
			\qquad
			
			\prftree[r]
			{($\linearcontract$-left)}
			{\delta \vdash \delta'\quad}
			{\delta' \vdash \delta\quad}
			{\Phi, \delta' \vdash \sigma}
			{\Phi, \delta \linearcontract \delta' \vdash \sigma}
			
			\qquad
			
			\prftree[r]
			{($\linearcontract$-split)}
			{\Phi, \delta \otimes \delta'' \linearcontract \delta' \otimes \delta''' \vdash \sigma}
			{\Phi, \delta \linearcontract \delta', \delta'' \linearcontract \delta''' \vdash \sigma}
		\end{tabular}
	\end{center}
	\caption{\MuACL\ rules.}
	\vspace{\baselineskip}
	\label{fig:MuACLfull}
\end{figure*}

So far, we have characterised agreements at the basic level of exchange environments and coalition policies.
We develop now a linear logic for modeling exchange environments, policies and agreements, and reduce the check of an exchange being an agreement to the validity of the sequent modelling it.

We choose linear logic for its unique features to declaratively represent resources and their usage.
In this view, resources are represented as logical assumptions in a proof, and each of them can only be used once and only once during the proof: resources can neither be duplicated nor thrown away at will.
A deduction in the logic models the way resources are manipulated, and this is convenient in our approach to formalize exchange environments and their behaviour.

\subsection{Contractual Exchange Logic}\label{sec:col-logic}

To define Contractual Exchange Logic (\MuACL) we start from a computational fragment of linear logic, following~\cite{Kanovich94}, and we then extend it with a new operator, inspired by PCL~\cite{BZ}, to express the 
typical offer/return actions of contracts. 
\begin{definition}[\MuACL\ Propositions]
	A \MuACL\ proposition $\phi$ is defined as
	\begin{align*}
		\phi &::= \sigma \mid \delta \mid \theta \mid \omega \mid \xi &
		\sigma &::= I \mid \res@\usr \mid \sigma \otimes \sigma\\
		\delta &::= I \mid \res@\usr \multimap \res@\usr \mid \delta \otimes \delta &
		\theta &::= I \mid \delta \linearcontract \delta \mid \theta \otimes \theta\\
		\xi &::=\ !\delta \mid \xi \otimes \xi &
		\omega &::=\ !\theta \mid\ \omega \otimes \omega 
	\end{align*}
	We denote multisets of propositions using (the corresponding) Greek capital letters: $\Phi, \Sigma, \Delta, \Theta, \Xi, \Omega$.
\end{definition}
We refer to the common resource-based interpretation of linear logic for describing the intuitive meaning of the propositions above.
In this interpretation $\res@\usr$ stands for a resource association, meaning that $\res$ currently belongs to the agent $\usr$.
A predicate $\res@\usr \multimap \res'@\usr'$ is a consumable processes (they can run only once) transforming the atomic $\res@\usr$ into $\res'@\usr'$.
Predicates of the form $\delta \linearcontract \delta'$, composed with our new operator called \emph{linear contractual implication}, are promises stating that $\delta'$ will be performed if also $\delta$ is.
Finally, $\omega$ and $\xi$ represents (non-linear) information about promises and processes that can be used ad libitum, and tensor product allows composing multisets of previous entities, where $I$ (representing \emph{true} in linear logic) is the empty multiset.
The \MuACL\ sequents are as follows.
\begin{definition}[\MuACL\ Sequent]
A \MuACL\ \emph{sequent} is of form
\begin{align*}
\Omega, \Xi, \Theta, \Delta, \Sigma \vdash \phi.
\end{align*}
A sequent is \emph{initial} if $\Theta, \Delta = \emptyset$ and $\phi = \sigma$ for some $\sigma$, i.e. if it has the form $\Omega, \Xi, \Sigma \vdash \sigma$
(we omit hereafter the empty components).
\end{definition}
The \MuACL\ sequent $\Xi, \Omega, \Theta, \Delta, \Sigma \vdash \sigma$
intuitively means that the state $\sigma$ is a possible transformation of $\Sigma$ using the processes and promises in the assumptions $\Xi, \Omega, \Theta, \Delta$.

The deduction system for $\vdash$ are in~\autoref{fig:MuACLfull}.
They mostly result from instantiating the standard ones of the multiplicative fragment of linear logic on the \MuACL\ sequents.
Note that we omit the cut rule in this fragment.
In addition there are two rules for the linear contractual implication: the ($\linearcontract$-left) rule introduces the operator on the left if what is required by the contract is satisfied by the consequences; the
($\linearcontract$-split) rule deals with composition of contracts.

\begin{example}
		A linear implication $\res@\usr \multimap \res@\usr'$ represents a transfer where a predicate $\res@\usr$ is consumed and a new $\res@\usr'$ is created.
		Note that $\res@\usr \multimap \res@\usr',$ $\res@\usr \vdash \res@\usr'$ is indeed valid.
		
		A linear contractual implication $\delta \linearcontract \delta'$ encodes a promise of $\delta'$ in return of $\delta$.
		Direct exchanges, that will be used to encode agreements like the one of~\autoref{fig:ex:1}, are expressed by a sequent of the form
			\[
			\delta \linearcontract \delta', \delta' \linearcontract \delta, \Sigma \vdash \sigma\
			\]
			where the exchange $\delta'$ is promised in return for $\delta$ and vice versa.
			The following derivation proves that the exchange $\delta, \delta'$ that transforms the state $\Sigma$ in $\sigma$ can be performed (we omit the proofs $\Pi$ and $\Pi'$, using (Ax), ($\otimes$-left) and ($\otimes$-right) rules, as they are straightforward):
			\begin{align*}
				\prftree[r]
				{($\linearcontract$-split)}
				{
					\prftree[r]
					{($\linearcontract$-left)}
					{
						\prftree
						{\Pi}
						{\delta \otimes \delta' \vdash \delta' \otimes \delta}
					}
					{
						\prftree
						{\Pi'}
						{\delta' \otimes \delta \vdash \delta \otimes \delta'}
					}
					{
						\prftree[r]
						{($\otimes$-left)}
						{
							{\delta, \delta', \Sigma \vdash \sigma}
						}
						{\delta' \otimes \delta, \Sigma \vdash \sigma}
					}
					{\delta \otimes \delta' \linearcontract \delta' \otimes \delta, \Sigma \vdash \sigma}	
				}
				{\delta \linearcontract \delta', \delta' \linearcontract \delta, \Sigma \vdash \sigma}	
			\end{align*}
			For the circular exchange of~\autoref{fig:ex:2}, we use \\
			$\delta \linearcontract \delta', \delta' \linearcontract \delta'', \delta'' \linearcontract \delta, \Sigma \vdash \sigma$, and apply ($\linearcontract$-split) twice.
\end{example}

Noticeably, \MuACL\ proofs can be normalized\footnote{
In the following, we will call \emph{proof} the derivation of a theorem from the axioms, and only use the term \emph{derivation} for a derivation with open assumptions, i.e. a proof tree where the leaves are not only axioms.
We also say that two proofs are \emph{equivalent} if they prove the same sequent.
}.

\begin{definition}[Normal Proofs]\label{def:normalforms}
	A \MuACL\ proof for an initial sequent is \emph{normal} if 
	it can be decomposed in either form, where 
	$\Pi_1$ only uses (Weak), (Cont), ($\otimes$-left) and ($!$-left) rules;
	$\Pi_2$ only ($\linearcontract$-split); 
	$\Pi_3$ and $\Pi_3'$ only ($\otimes$-right), ($\otimes$-left), (Ax), ($I$-right) and ($I$-left); 
	$\Pi_4$ only ($\multimap$-left), ($\otimes$-right), ($\otimes$-left), (Ax), ($I$-right) and ($I$-left).
	\begin{center}
		\begin{tabular}{c}
			\begin{tabular}{c}
				\prfsummary[$\Pi_1$]
				{
						\prftree
						{\Pi_4}
						{\Delta, \Sigma \vdash \sigma}
				}
				{\Omega, \Xi, \Sigma \vdash \sigma}\\[0.2cm]
				\textit{normal form 1}
			\end{tabular}
			
			\begin{tabular}{c}
				\hspace{0.5cm}	
				\prfsummary[$\Pi_1$]
				{
						\prfsummary[$\Pi_2$]
						{
							\prftree[r]
							{($\linearcontract$-left)}
							{\Pi_3}
							{\Pi_3'}
							{
								\prftree
								{\Pi_4}
								{\Delta, \delta, \Sigma \vdash \sigma}
							}
							{\theta, \Delta, \Sigma \vdash \sigma}
						}
						{
							{\Theta, \Delta, \Sigma \vdash \sigma}
						}
				}
				{\Omega, \Xi, \Sigma \vdash \sigma}\\[0.2cm]
				\textit{normal form 2}
			\end{tabular}
		\end{tabular}
	\end{center}
\end{definition}
These two normal forms are general: a proof exists for an initial sequent only if a proof in normal form exists.
\begin{restatable}[\MuACL\ Normal Proofs]{theorem}{normalform}\label{thm:col-normal-ncr}
	For any $\Omega, \Sigma, \sigma$, the initial sequent $\Omega; \Sigma \vdash \sigma$ is valid in 
	\MuACL\ if and only if a normal proof\  $\,\Pi$ exists for $\Omega; \Sigma \vdash \sigma$.
\end{restatable}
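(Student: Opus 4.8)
The $(\Leftarrow)$ direction is immediate, since a normal proof is in particular a \MuACL\ proof and thus witnesses the validity of $\Omega; \Sigma \vdash \sigma$. All the work lies in $(\Rightarrow)$, and the plan is to rewrite an arbitrary proof of the initial sequent into normal form through a terminating sequence of rule permutations. Throughout, I would argue on the shape of the derivation and keep a well-founded (lexicographic) measure that strictly decreases at each permutation step, so as to guarantee that the rewriting halts in a normal proof.

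The first phase isolates the exponential rules. Since \MuACL\ has no rule introducing a $!$ on the right, every banged assumption in $\Omega, \Xi$ can only be extracted from a tensor by ($\otimes$-left) and then consumed by (Weak), (Cont) or ($!$-left). I would establish a permutation lemma stating that each of these four rules commutes toward the root past every other rule of the calculus: their principal formulas are the $!$-formulas (and the tensors built from them), which are untouched by ($\multimap$-left), ($\linearcontract$-left), ($\linearcontract$-split) and by the $\otimes/\mathrm{Ax}/I$ rules acting on the $\sigma$-, $\delta$- and $\theta$-components. Iterating these commutations gathers all exponential handling into the bottom block $\Pi_1$, which turns $\Omega, \Xi$ into a purely linear multiset $\Theta, \Delta$ of contracts and processes with multiplicities fixed by the contractions. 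Above $\Pi_1$ the derivation is banged-free and proves $\Theta, \Delta, \Sigma \vdash \sigma$ using only linear and contract rules.

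The second phase organizes the contract rules in the banged-free derivation. If $\Theta = \emptyset$, no contract formula is ever available, so neither ($\linearcontract$-left) nor ($\linearcontract$-split) can occur, and the residual derivation $\Pi_4$ uses only ($\multimap$-left), ($\otimes$-left), ($\otimes$-right), (Ax), ($I$-left) and ($I$-right): this is exactly normal form~1. If $\Theta \neq \emptyset$, I would prove two further permutation lemmas: (a) every ($\linearcontract$-split) commutes toward the root past the linear rules and past ($\linearcontract$-left), so that all splits can be collected into a single block $\Pi_2$ sitting just above $\Pi_1$; and (b) all applications of ($\linearcontract$-left) can be fused into one. For (b), I would first use ($\linearcontract$-split) (read bottom-up) to merge all contracts of $\Theta$ into a single $\delta \linearcontract \delta'$, and then fuse the individual ($\linearcontract$-left) applications: the two equivalence premises $\delta_i \vdash \delta_i'$ and $\delta_i' \vdash \delta_i$ combine under $\otimes$ by one application each of ($\otimes$-right) and ($\otimes$-left), yielding the premises $\Pi_3, \Pi_3'$ for the merged contract, while the third premises combine because releasing the tensor $\delta_1' \otimes \cdots \otimes \delta_n'$ and then decomposing it with ($\otimes$-left) reproduces the separate consequents handed to the final linear block $\Pi_4$. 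This produces normal form~2.

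The main obstacle is step (b). Between two distinct ($\linearcontract$-left) applications the derivation may interleave arbitrary ($\multimap$-left) and $\otimes$-rules, and the two contracts may be introduced and consumed at different depths; so before merging I must permute each ($\linearcontract$-left) downward past the linear rules (replacing a linear step above the released consequent by the same step above the corresponding premise) to bring the contract introductions adjacent. The delicate bookkeeping is to check that every such permutation preserves the conclusion sequent and, crucially, that the equivalence premises remain provable after tensoring. Termination then follows from a measure such as the pair consisting of the number of ($\linearcontract$-left) applications and the total distance of the split and exponential rules from the root, which strictly decreases under each permutation used in the two phases.
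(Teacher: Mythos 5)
Your proposal is correct and takes essentially the same route as the paper: the paper's proof consists of exactly your permutation lemmas (moving the exponential/structural rules to the root, then collecting the ($\linearcontract$-split) applications, then pushing every ($\linearcontract$-left) to the bottom of the linear block) followed by a fusion lemma that merges two adjacent ($\linearcontract$-left) applications into a single one on the tensored contract — with the equivalence premises recombined via ($\otimes$-right) and ($\otimes$-left), and the generated split absorbed into $\Pi_2$ — which is precisely your step (b). The only difference is cosmetic: the paper leaves termination implicit in its case-by-case rewritings, whereas you supply an explicit lexicographic measure.
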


For reasoning about exchange environments a logic must be decidable.
We now prove that \MuACL\  is such and for that it is enough to consider normal proofs only.
Note that the normal form 1 corresponds to the case where no contractual rule is applied, hence we can assume $\Omega = \emptyset$.
We are thus in the context of (a fragment of) standard linear logic, and decidability follows from a suitable application of Kanovich's technique~\cite{Kanovich94} that reduces validity to reachability in Petri Nets, which can be decided using~\cite{PetriReach}.
\begin{restatable}{lemma}{fstnfdecidencr}\label{thm:fstnfdecide-ncr}
	An always-terminating algorithm exists that decides the existence of a proof in the normal form 1 for a given initial sequent.
\end{restatable}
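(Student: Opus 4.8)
The plan is to reduce the decision problem for normal form 1 to a reachability question in a Petri net, following Kanovich's technique as the excerpt suggests. Recall that a normal form 1 proof has the shape where $\Pi_1$ applies only the structural rules (Weak), (Cont), ($!$-left) and ($\otimes$-left), sitting on top of a derivation $\Pi_4$ of $\Delta, \Sigma \vdash \sigma$ using only ($\multimap$-left), the tensor rules, (Ax) and the unit rules. Since no contractual operator appears in normal form 1, we may take $\Omega = \emptyset$, so the only exponentially-marked assumptions are the reusable linear implications in $\Xi = {!\delta_1, \dots, !\delta_n}$. The first step is therefore to make precise that deciding validity of $\Omega; \Sigma \vdash \sigma$ in normal form 1 is equivalent to deciding whether, from the multiset of atoms $\Sigma$, one can reach the multiset of atoms $\sigma$ by firing the transitions encoded by the $\delta_i$ (and the finitely many linear implications drawn from $\Delta$).

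Second, I would set up the translation explicitly. Each atomic proposition $\res@\usr$ becomes a \emph{place} of the net, and the contents of a state $\sigma$ (a tensor of such atoms, by the grammar for $\sigma$) becomes a \emph{marking} assigning to each place the multiplicity of the corresponding atom. A reusable implication $!\,(\res@\usr \multimap \res'@\usr')$ becomes a Petri net transition consuming one token from the place $\res@\usr$ and producing one in $\res'@\usr'$; because it is under a bang, (Cont) and (Weak) let it fire any number of times (including zero), which matches exactly the unbounded firing semantics of a net transition. More general $\delta_i$ that are tensors of implications unfold, via ($\otimes$-left) in $\Pi_1$, into several such single-atom transitions; I would handle the unit $I$ by the ($I$-left)/($I$-right) rules, which simply erase or introduce nothing, i.e. leave the marking unchanged. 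The linear (non-banged) implications in $\Delta$ are the only delicate point of bookkeeping: each may be used at most once, so I would encode them either as one-shot transitions guarded by a dedicated control token, or — more cleanly — observe that in normal form 1 there are finitely many of them and enumerate the finitely many subsets and orders in which $\Pi_4$ could consume them, reducing to the purely banged case for each choice.

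Third, having established the two-way correspondence ``normal form 1 proof exists iff marking $\sigma$ is reachable from marking $\Sigma$ in the associated net,'' I would invoke decidability of the Petri net reachability problem, citing~\cite{PetriReach}, to conclude that an always-terminating decision procedure exists. The soundness direction (a proof yields a firing sequence) is read off by induction on $\Pi_4$, tracking how each ($\multimap$-left) step corresponds to one transition firing and how the structural rules in $\Pi_1$ correspond to choosing which banged transitions to use and how often. The completeness direction (a firing sequence yields a proof) inductively rebuilds $\Pi_4$ from the transition sequence and then prefixes the appropriate (Weak)/(Cont)/($!$-left)/($\otimes$-left) applications to match the available assumptions $\Xi$.

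The main obstacle I expect is the faithful encoding of the finite, single-use linear assumptions in $\Delta$ alongside the unbounded banged ones, and verifying that the normal-form restriction genuinely prevents any interaction that the net semantics cannot mirror — in particular, ensuring that the ($\otimes$-left) decompositions in $\Pi_1$ and in $\Pi_4$ do not allow a token to be ``split'' in a way a net transition cannot. A careful statement of the invariant relating a partial derivation to an intermediate marking, proved by induction on the structure of the normal proof, is where the real work lies; once that invariant is in place, the appeal to~\cite{PetriReach} is immediate and termination of the algorithm follows directly from decidability of reachability.
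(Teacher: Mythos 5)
There is a genuine gap here: your central claim --- that a normal form 1 proof exists iff the marking $\sigma$ is reachable from $\Sigma$ in the net whose transitions are the banged implications --- is the correct characterization for the system \emph{with} the (Cut) rule, but it is false for the cut-free normal form 1 that this lemma is about. In $\Pi_4$ the premise of ($\multimap$-left) has the shape $\Sigma \vdash \sigma$ where $\Sigma$ is a multiset of state formulas only, so the input of every implication must be assembled from the \emph{original} atoms of the sequent: the output of one implication can never feed the input of another. Chaining implications is exactly what (Cut) buys, and its exclusion is deliberate --- it is how the paper separates debt-free exchanges from exchanges with debts. Plain Petri net reachability, by contrast, happily chains firings: with $\Xi = \{\, !(a \multimap b),\ !(b \multimap c) \,\}$, $\Sigma = a$ and goal $c$, your net reaches $c$ by firing $a \to b$ and then $b \to c$, yet $!(a \multimap b), !(b \multimap c), a \vdash c$ admits no cut-free normal form 1 proof. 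So your reduction is unsound --- the algorithm would accept unprovable sequents --- and this is not a matter of a more careful invariant; the encoding itself must change. Your stated worries point in the wrong direction: you fear the proof system might do something the net cannot mirror (token ``splitting''), whereas the real problem is that the net does something the cut-free proof system cannot.

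The paper's proof fixes precisely this point by adapting Kanovich's encoding: each atom $p$ yields \emph{two} places, a source $p_s$ and a target $p_t$; each implication becomes a transition consuming tokens only from source places and producing tokens only into target places; and additional pass-through transitions from $p_s$ to $p_t$ account for atoms carried over unchanged by (Ax). Implications can still fire ad libitum, matching (Cont)/(Weak) on the banged assumptions, but their outputs land in target places and can never re-enter another transition --- exactly the cut-free discipline. (The paper's separate Lemma for the cut-augmented system uses the plain one-place-per-atom encoding you describe, which is where your argument would be correct.) Incidentally, your other concern --- the bookkeeping for single-use assumptions in $\Delta$ --- is a non-issue for initial sequents, since there $\Delta = \emptyset$ by definition and the $\Delta$ at the top of $\Pi_1$ consists solely of copies of banged implications from $\Xi$; note also that enumerating ``orders'' of use cannot work once unboundedly many copies are in play. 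With the two-layer net in place, the soundness/completeness induction you sketch does go through.
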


Finally, for the normal form 2 we reduce to the previous case.
We prove that a proof in the normal form 2 can be effectively rewritten in the normal form 1.
The reduction is performed in an algebraic framework
by considering the derivations in a bottom-up fashion, starting with the sequent we are proving and constructing the premises.

Consider a semiring module\footnote{A semiring module is a generalization of the notion of vector space in which the field of scalars is replaced by a semiring.} $M$ over the set of natural numbers $\mathbb{N}$ with subformulas of any \MuACL\ predicate $\phi$ as its basis (we can safely reduce to the finite set of the ones appearing in the considered sequent).
For simplicity, we call the elements $\bar x$ of $M$ vectors, and write $\bar x(\phi)$ for the number associated by $\bar x$ to the basis element $\phi$.

Roughly, we show that valid premises $\Delta, \Sigma \vdash \sigma$ for $\Pi_1$ in normal form 1 correspond to the linear combinations of a finite set of vectors 
that depends on the $\delta$ predicates appearing in the $\Xi$ of the sequent that we are proving.
This intuitively encodes the fact that we can take $\delta$ formulas ad libitum, because of the $!$ operator.
We then replicate a similar construction for normal form 2.
In particular, the existence of $\Pi_1$ depends on the encoding of the premises being a linear combinations of the encoding of $\theta$ and $\delta$ appearing in $\Omega$ and $\Xi$ of the conclusions;
the premises of $\Pi_2$ are uniquely determined by a linear function, and $\Pi_3$ and $\Pi_3$ correspond to checking an homogeneous system of linear equations.

Then, by the Hilbert basis theorem~\cite{HB}, we can combine these conditions and represent the set of solutions as the linear combinations of a finite set of vectors representing $\delta$ predicates, i.e. a $\Pi_1$ derivation for a proof in normal form 1. 

\begin{restatable}{lemma}{sftoff}\label{thm:col-sftoff}
	For any $\Omega, \Xi, \Delta, \Sigma, \sigma$, there is a computable multiset $\Xi'$ such that there exists a derivation in the normal form 2 from the sequent $\Delta, \Sigma \vdash \sigma$ to $\Omega, \Xi, \Sigma \vdash \sigma$ if and only if there exists a derivation in the normal form 1 from $\Delta, \Sigma \vdash \sigma$ to $\Xi', \Sigma \vdash \sigma$.
\end{restatable}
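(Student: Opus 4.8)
The plan is to turn both kinds of derivation into linear-algebraic conditions on the multiplicities with which the assumptions of the conclusion $\Omega, \Xi, \Sigma \vdash \sigma$ are used, and then to match the two descriptions. First I would fix the finite basis of $M$ to be the process formulas $\res@\usr \multimap \res'@\usr'$ occurring in $\Omega, \Xi$, so that every $\delta$-formula (a tensor of processes) and every antecedent/consequent of a promise $\theta = \delta \linearcontract \delta'$ is identified with a vector $\bar x \in M$ counting its processes. The crucial simplification, to be established first, is that the two matching subproofs $\Pi_3$ and $\Pi_3'$ of the single $(\linearcontract\text{-left})$ step degenerate: since the grammar lets $\multimap$ occur only with atomic consequent and the system of \autoref{fig:MuACLfull} has no $\multimap$-right rule, a sequent $\delta \vdash \delta'$ between tensors of processes is derivable exactly when the two sides carry the \emph{same} multiset of processes (every process on the right must be matched by an identical one on the left via (Ax), and no $(\multimap\text{-left})$ step can fire). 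Hence $\Pi_3, \Pi_3'$ both exist iff the combined antecedent and consequent coincide as vectors of $M$.

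Next I would describe normal-form-1 reachability. In a normal form 1 derivation the part $\Pi_1$ uses only (Weak), (Cont), $(!\text{-left})$ and $(\otimes\text{-left})$, so reading it bottom-up from $\Xi', \Sigma \vdash \sigma$ it may duplicate, discard, derelict and split the exponential assumptions at will; because linear formulas can neither be weakened nor contracted, each derelicted copy of a $\delta$ must contribute all of its processes. Thus the multisets $\Delta$ for which a normal-form-1 derivation from $\Delta, \Sigma \vdash \sigma$ to $\Xi', \Sigma \vdash \sigma$ exists are precisely the $\mathbb{N}$-linear combinations $\sum_{\delta \in \Xi'} c_\delta\,\delta$ of the vectors of the $\delta$'s listed in $\Xi'$.

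For normal form 2 I would compute the analogous set. Reading $\Pi_1$ bottom-up turns $\Omega, \Xi$ into $\Theta, \Delta_0$, where $\Theta$ consists of $n_k \ge 0$ copies of each promise $\theta_k = \alpha_k \linearcontract \beta_k$ of $\Omega$ and $\Delta_0$ is any $\mathbb{N}$-combination of the processes of $\Xi$; the single $\Pi_2$ block ($(\linearcontract\text{-split})$ iterated) is the deterministic linear map sending these copies to the one combined contract $(\biguplus_k n_k\alpha_k) \linearcontract (\biguplus_k n_k\beta_k)$; and by the first paragraph the $(\linearcontract\text{-left})$ step is applicable iff $\biguplus_k n_k\alpha_k = \biguplus_k n_k\beta_k$, a homogeneous linear system in the $n_k$. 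When it fires it adds the consequent $\biguplus_k n_k\beta_k$ to the linear context, so the reachable enrichments are $\Delta_0 \uplus \biguplus_k n_k\beta_k$ with $\Delta_0$ free and $(n_k)$ ranging over the solution set $S \subseteq \mathbb{N}^K$ of that homogeneous system. By the Hilbert basis theorem (Gordan's lemma for $\mathbb{N}$-solutions of homogeneous linear systems)~\cite{HB}, $S$ is a finitely generated submonoid; I would compute a finite generating set $\{g^{(1)}, \dots, g^{(t)}\}$ and set $\Xi' = \Xi \uplus \{\, !\,(\biguplus_k g^{(j)}_k \beta_k) \mid 1 \le j \le t \,\}$, which is computable from $\Omega, \Xi$.

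It then remains to check that the two reachable sets coincide, which gives both directions at once. Because $\biguplus_k n_k\beta_k$ is linear in $(n_k)$ and $S$ is the $\mathbb{N}$-span of the $g^{(j)}$, the normal-form-2 enrichments are exactly the $\mathbb{N}$-combinations of the processes of $\Xi$ together with the consequents $\biguplus_k g^{(j)}_k\beta_k$, i.e. exactly the $\mathbb{N}$-span computed for $\Xi'$ in the second paragraph; closure of $S$ under addition is what lets the "if" direction collapse several generators back into a \emph{single} combined contract, as normal form 2 requires. I expect the main obstacle to be precisely this last point together with the degeneracy claim: one must argue rigorously that the only derivable matching sequents are the trivial (equality) ones — so that the contractual constraint really is the stated homogeneous linear system — and that re-expanding an arbitrary $S$-solution as one valid $(\linearcontract\text{-left})$ application (rather than many) is always possible. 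The finiteness and computability of $\Xi'$ then follow routinely from the effective form of the Hilbert/Gordan basis.
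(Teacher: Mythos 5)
Your proposal is correct and takes essentially the same route as the paper's own proof: the same $\mathbb{N}$-module encoding of usage multiplicities, the same three characterizations (the $\Pi_1$ spans, the $(\linearcontract\text{-split})$ block as the deterministic merge, and the degeneracy of the matching premises of $(\linearcontract\text{-left})$ into vector equality, i.e. the paper's Lemmas~\ref{thm:pi1fornf2}, \ref{thm:colapp-merge} and \ref{thm:deltaproof}), followed by the same Hilbert-basis step yielding finitely many generators whose banged consequents, adjoined to $\Xi$, form $\Xi'$. Your explicit inclusion of $\Xi$ in $\Xi'$ and your inline argument for why $\delta \vdash \delta'$ forces equal process multisets merely spell out details the paper leaves terse.
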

The following theorem proves that \MuACL\ is decidable.
Its statement mentions initial sequents only, which are however sufficient to reason about agreement transitions, as \autoref{th:fair-exchange} below will make clear.

\begin{restatable}[\MuACL\ Decidability]{theorem}{MuACLsdec}\label{thm:MuACLdecide1}
	An always-terminating algorithm exists that decides if an initial sequent is valid in \MuACL.
\end{restatable}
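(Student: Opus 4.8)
The plan is to assemble the decision procedure from the two decidability facts already established for the two normal forms, using \autoref{thm:col-normal-ncr} to guarantee that these two cases are exhaustive. By that theorem, an initial sequent $\Omega, \Xi, \Sigma \vdash \sigma$ is valid in \MuACL\ if and only if it admits a normal proof, and by \autoref{def:normalforms} every normal proof is either of normal form 1 or of normal form 2. Hence it suffices to give an always-terminating test for the existence of each kind of proof and to run the two tests in sequence, answering ``valid'' as soon as one succeeds and ``invalid'' only if both fail; termination of the combined procedure follows from the termination of its two components.

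First I would dispatch normal form 1 directly. \autoref{thm:fstnfdecide-ncr} already supplies an always-terminating algorithm that decides whether a proof in normal form 1 exists for a given initial sequent, so running it on $\Omega, \Xi, \Sigma \vdash \sigma$ settles this case with no further work.

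For normal form 2 the idea is to collapse the contractual part of the proof onto a normal-form-1 problem by means of \autoref{thm:col-sftoff}. A normal-form-2 proof of $\Omega, \Xi, \Sigma \vdash \sigma$ factors into the contractual derivation (the $\Pi_1$, $\Pi_2$ and $\linearcontract$-left steps) that leads from a top sequent $\Delta, \Sigma \vdash \sigma$ down to the conclusion, together with a subproof $\Pi_4$ establishing that top sequent. I would compute the multiset $\Xi'$ promised by \autoref{thm:col-sftoff} from $\Omega, \Xi, \Sigma, \sigma$; the lemma then says that for the top multiset $\Delta$ the contractual derivation from $\Delta, \Sigma \vdash \sigma$ to $\Omega, \Xi, \Sigma \vdash \sigma$ exists exactly when a normal-form-1 derivation from $\Delta, \Sigma \vdash \sigma$ to $\Xi', \Sigma \vdash \sigma$ exists. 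Regluing the common subproof $\Pi_4$ on top of both sides, a normal-form-2 proof of $\Omega, \Xi, \Sigma \vdash \sigma$ exists if and only if a full normal-form-1 proof of the initial sequent $\Xi', \Sigma \vdash \sigma$ exists, and the latter is again decided by \autoref{thm:fstnfdecide-ncr}. Thus the whole algorithm computes $\Xi'$ and invokes the normal-form-1 decider on $\Omega, \Xi, \Sigma \vdash \sigma$ and on $\Xi', \Sigma \vdash \sigma$, declaring the sequent valid iff at least one invocation succeeds; correctness is immediate from \autoref{thm:col-normal-ncr} and the two reductions.

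The one point that needs care, and the main (though modest) obstacle, is the bookkeeping around the existential over the intermediate multiset $\Delta$ and the shared subproof $\Pi_4$. One must check that the computed $\Xi'$ does not depend on $\Delta$, so that the existence of \emph{some} admissible $\Delta$ carrying a $\Pi_4$ is correctly absorbed into the single normal-form-1 decision problem for $\Xi', \Sigma \vdash \sigma$ handled by \autoref{thm:fstnfdecide-ncr}, rather than forcing an unbounded search over candidate tops $\Delta$. Once this independence is confirmed, the reduction is purely compositional and the termination and correctness claims follow directly from the cited results.
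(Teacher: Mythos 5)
Your proposal is correct and takes essentially the same route as the paper, whose proof likewise combines \autoref{thm:col-normal-ncr} with the reduction of normal form 2 to normal form 1 via \autoref{thm:col-sftoff} and then invokes the normal-form-1 decider of \autoref{thm:fstnfdecide-ncr}. Your one point of care is also resolved as you hope: in the proof of \autoref{thm:col-sftoff} the multiset $\Xi'$ is built from the Hilbert basis associated with $\Omega$ (together with $\Xi$) and is independent of $\Delta$, so the existential over the top sequent is indeed absorbed into a single normal-form-1 instance.
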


\subsection{Deriving Transitions of Exchange Environments}

In the following we show how to encode exchange environments and policies as \MuACL\ propositions.

\begin{figure*}[t]
\small
\begin{align*}
	\prfsummary[$\Pi$]
	{
		\prfsummary[$\Pi'$]
		{
			\prftree[r]{($\linearcontract$-left)}
			{
				\prftree[r]{(Ax)}
				{}
				{\semantics{\exc} \vdash \semantics{\exc}}
			}
			{
				\prftree[r]{(Ax)}
				{}
				{\semantics{\exc} \vdash \semantics{\exc}}
			}
			{
				\prftree
				{\Pi''}
				{\semantics{\exc}, \semantics{\Sigma} \vdash \ki@\Bob \otimes \ma@\Charlie \otimes \lm@\Alice}
			}
			{
			\semantics{\exc} \linearcontract \semantics{\exc}, \semantics{\MuACstate} \vdash \ki@\Bob \otimes \ma@\Charlie \otimes \lm@\Alice
			}
		}
		{
		(\lm@\Charlie \multimap \lm@\Alice) \linearcontract (\ki@\Alice\ \multimap \ki@\Bob),
		(\ki@\Alice \multimap \ki@\Bob) \linearcontract (\ma@\Bob\ \multimap \ma@\Charlie),
		(\ma@\Bob \multimap \ma@\Charlie) \linearcontract (\lm@\Charlie\ \multimap \lm@\Alice),
		\semantics{\MuACstate} \vdash \ki@\Bob \otimes \ma@\Charlie \otimes \lm@\Alice
		}
	}
	{\semantics{\pol_{\{\Alice\}}}, \semantics{\pol_{\{\Bob\}}}, \semantics{\pol_{\{\Charlie\}}}, \semantics{\MuACstate} \vdash \ki@\Bob \otimes \ma@\Charlie \otimes \lm@\Alice}	
\end{align*}
\caption{\MuACL\ proof in the normal form 2, the proof $\Pi''$ is omitted for simplicity, and uses (Ax), ($\otimes$-left), ($\otimes$-right) and ($\multimap$-left).}
\label{fig:ex:enc}
\end{figure*}

\begin{definition}\label{def:compile}
We write $\phi^n$ with $n \in \mathbb{N}$ for the tensor product of $n >0$ instances of the proposition $\phi$, meaning $I$ if $n = 0$.
	\begin{align*}
	\semantics{\MuACstate} &::= \bigotimes\nolimits_{\usr \in \Usr, \res \in \Res} (\res@\usr)^{\MuACstate(\usr)(\res)}\\
	\semantics{\exc} &::= \bigotimes\nolimits_{\usr,\usr' \in \Usr, \res \in \Res} (\res@\usr \multimap \res@\usr')^{\exc(\usr \xmapsto{\res} \usr')}\\
	\semantics{\pol_\coal} &::= \bigotimes\nolimits_{\exc \triangleleft \exc' \in \pol_\coal} !(\semantics{\exc'} \linearcontract \semantics{\exc})
	\end{align*}
\end{definition}

\begin{example}\label{ex:enc}
Consider a state $\MuACstate$ with three agents \Alice, \Bob, and \Charlie\ owning a \ki, a \ma\ and an \lm, respectively. 
Then its encoding is as follows:
\[
\semantics{\MuACstate} = \ki@\Alice \otimes \ma@\Bob \otimes \lm@\Charlie
\]

The encoding of $\exc = \{ \Alice \xmapsto{\ki} \Bob, \Bob \xmapsto{\ma} \Charlie, \Charlie \xmapsto{\lm} \Alice \}$ is 
\[
\semantics{\exc} = (\ki@\Alice \multimap \ki@\Bob) \otimes (\ma@\Bob \multimap \ma@\Charlie) \otimes (\lm@\Charlie \multimap \lm@\Alice)
\]

Finally, suppose that the policies $\pol_{\{\Alice\}}, \pol_{\{\Bob\}}, \pol_{\{\Charlie\}}$ contain only the rules \textbf{A2}, \textbf{B2} and \textbf{C1} in~\autoref{ex:exc-pol-simp}, respectively.
Their encoding follows.
\begin{align*}
\semantics{\pol_{\{\Alice\}}} =  
&\bigotimes\nolimits_{a, a' \in \{\Bob, \Charlie \}} !((\lm@a \multimap \lm@\Alice) \linearcontract (\ki@\Alice\ \multimap \ki@a')) \\
\semantics{\pol_{\{\Bob\}}} = &\bigotimes\nolimits_{a, a' \in \{\Alice, \Charlie \}} !((\ki@a \multimap \ki@\Bob) \linearcontract (\ma@\Bob\ \multimap \ma@a')) \\
\semantics{\pol_{\{\Charlie\}}} = &\bigotimes\nolimits_{a, a' \in \{\Alice, \Bob \}} !((\ma@a \multimap \ma@\Charlie) \linearcontract (\lm@\Charlie\ \multimap \lm@a')) 
\end{align*}

The following sequent is valid, as shown in the proof in~\autoref{fig:ex:enc}
\[
\semantics{\pol_{\{\Alice\}}}, \semantics{\pol_{\{\Bob\}}}, \semantics{\pol_{\{\Charlie\}}}, \semantics{\MuACstate} \vdash \ki@\Bob \otimes \ma@\Charlie \otimes \lm@\Alice
\]
Note that $\exc$ is an agreement labelling the transition $\MuACstate \xrightarrow{\exc} \MuACstate'$ where $\semantics{\MuACstate'} = \ki@\Bob \otimes \ma@\Charlie \otimes \lm@\Alice$.
\end{example}

In the theorem below the current allocation $\MuACstate$ and the policies $\pol_{\coal}$ determine the left part of an initial sequent, while the right part is for the candidate next allocation $\MuACstate'$.  
Then, a transition $\MuACstate \xrightarrow{\exc} \MuACstate'$ where $\exc$ is an agreement exists if and only if the obtained initial sequent is valid.
Note that this result implies that \MuACL\ proofs are witnesses for fairness of exchanges.

\begin{restatable}{theorem}{correction}\label{th:fair-exchange}
Let  $(\MuACStates, \rightarrow)$ be an exchange environment; let $\MuACstate, \MuACstate' \in \MuACStates$; and  
let $\pol_{\coal}$ be the policies of the coalition $\coal$. 

Then $\biguplus_{\coal \in 2^{\Usr}} \semden{\pol_\coal}, \semden{\MuACstate}\! \vdash\! \semden{\MuACstate'}$ is valid if and only if there exists an agreement $\exc$ such that $\MuACstate\! \xrightarrow{\exc}\! \MuACstate'$.

\end{restatable}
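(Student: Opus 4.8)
The plan is to prove both implications through the normal-form characterisation of \autoref{thm:col-normal-ncr}, so that a valid sequent is always witnessed by a proof of the rigid shape of \emph{normal form 2} (or the degenerate \emph{normal form 1}), and then to read such a proof as the per-coalition decomposition demanded by the definition of agreement (\autoref{def:fairexc}). The bridge between the two worlds is a single auxiliary lemma about the purely multiplicative fragment, which I state first: for any exchange $\exc$ and states $\MuACstate, \MuACstate'$, the sequent $\semden{\exc}, \semden{\MuACstate} \vdash \semden{\MuACstate'}$ is derivable using only the rules allowed in $\Pi_4$ (that is ($\multimap$-left), ($\otimes$-left), ($\otimes$-right), (Ax), ($I$-left), ($I$-right)) if and only if $\MuACstate \xrightarrow{\exc} \MuACstate'$. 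I would prove this lemma by the Kanovich-style token-game correspondence: decomposing $\semden{\exc}$ and $\semden{\MuACstate}$ into atomic processes and tokens by ($\otimes$-left), each ($\multimap$-left) application ``fires'' exactly one transfer $\usr \xmapsto{\res} \usr'$, consuming a token $\res@\usr$ and producing $\res@\usr'$; condition $(1)$ of \autoref{def:ee} guarantees that every antecedent is available from the initial stock, so there is no firing-order obstruction, and the fact that linear assumptions are neither duplicated nor discarded makes the multiset of surviving tokens match $\semden{\MuACstate'}$ exactly when condition $(2)$ holds.

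For the direction ``agreement $\Rightarrow$ validity'' I would turn the witnessing decomposition into a normal-form-2 proof built bottom-up. Given an agreement $\exc$ with $\MuACstate \xrightarrow{\exc} \MuACstate'$, \autoref{def:fairexc} provides, for each coalition $\coal$, exchanges $\exc_\coal, \exc_\coal'$ with $\pol_\coal \vDash \exc_\coal' \triangleleft \exc_\coal$ and $\biguplus_\coal \exc_\coal' = \biguplus_\coal \exc_\coal = \exc$. Unfolding each derivation of $\pol_\coal \vDash \exc_\coal' \triangleleft \exc_\coal$ (\autoref{def:accepted-exc}) exhibits it as a disjoint union of policy rules $\exc_j \triangleleft \exc_j' \in \pol_\coal$, each encoded in $\semden{\pol_\coal}$ as a banged contract $!(\semden{\exc_j'} \linearcontract \semden{\exc_j})$. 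The component $\Pi_1$ then uses (Cont) and ($!$-left) to produce exactly the required number of copies of each contract and (Weak) to drop the unused ones; $\Pi_2$ repeatedly applies ($\linearcontract$-split), whose effect on encodings is to combine premises and consequences with $\otimes$, merging all selected contracts into the single contract $\semden{\biguplus \exc_j'} \linearcontract \semden{\biguplus \exc_j} = \semden{\exc} \linearcontract \semden{\exc}$, where I use that $\otimes$ of encodings is the encoding of $\uplus$ together with the two global equalities. A single ($\linearcontract$-left) discharges this contract, its first two premises $\semden{\exc} \vdash \semden{\exc}$ being immediate by (Ax) and ($\otimes$), and its third premise $\semden{\MuACstate}, \semden{\exc} \vdash \semden{\MuACstate'}$ being supplied by $\Pi_4$, which exists by the auxiliary lemma since $\MuACstate \xrightarrow{\exc} \MuACstate'$.

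For the converse ``validity $\Rightarrow$ agreement'' I would invoke \autoref{thm:col-normal-ncr} to obtain a normal proof and read it backwards. Since the left context is $\biguplus_\coal \semden{\pol_\coal}$, whose only non-state formulas are banged contracts (so $\Xi = \emptyset$), normal form 1 forces every contract to be weakened and $\Delta = \emptyset$, whence $\Pi_4$ proves $\semden{\MuACstate} \vdash \semden{\MuACstate'}$ with no processes; by the auxiliary lemma this gives $\MuACstate = \MuACstate'$, and the empty exchange, an agreement by rule $(1)$ of \autoref{def:accepted-exc}, does the job. In normal form 2, $\Pi_1$ selects a finite multiset of contract instances, each an encoding of a policy rule of a determined coalition (the disjoint union $\biguplus_\coal$ keeps occurrences tagged by origin), and $\Pi_2$ merges them into one contract $\semden{E'} \linearcontract \semden{E}$ with $E' = \biguplus_j \exc_j'$ and $E = \biguplus_j \exc_j$. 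The premises $\Pi_3, \Pi_3'$ prove $\semden{E'} \vdash \semden{E}$ and $\semden{E} \vdash \semden{E'}$ using only structural rules, which, with ($\multimap$-left) forbidden there, is possible only when the two tensors of atomic processes coincide, i.e. $E = E' =: \exc$; then $\Pi_4$ proves $\semden{\MuACstate}, \semden{\exc} \vdash \semden{\MuACstate'}$, giving $\MuACstate \xrightarrow{\exc} \MuACstate'$ by the lemma. Finally I regroup the selected rule instances by the coalition they come from, setting $\exc_\coal, \exc_\coal'$ to the disjoint unions of their contributions and payoffs (and $\emptyset$ for unused coalitions), so that $\pol_\coal \vDash \exc_\coal' \triangleleft \exc_\coal$ by rules $(2)$ and $(3)$ of \autoref{def:accepted-exc} and $\biguplus_\coal \exc_\coal = \biguplus_\coal \exc_\coal' = \exc$; hence $\exc$ is an agreement.

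The main obstacle I expect is the auxiliary lemma and, more subtly, the exact correspondence in the converse between the rigidly structural derivations $\Pi_3, \Pi_3'$ and the identity $E = E'$: one must argue that, with ($\multimap$-left) forbidden in those sub-derivations, provability of $\semden{E'} \vdash \semden{E}$ between two tensors of linear implications collapses to syntactic equality of the underlying transfer multisets, which is precisely what licenses reading a single merged contract as a genuine agreement in which contribution equals payoff. Keeping the multiset bookkeeping honest across ($!$-left), (Cont), (Weak), ($\linearcontract$-split) and the final regrouping by coalition is the remaining source of friction, but it becomes routine once the lemma is in place.
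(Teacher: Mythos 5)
Your proposal is correct and follows essentially the same route as the paper's own proof: normalisation to the two normal forms with $\Xi = \Delta = \emptyset$ (the paper's Lemma~\ref{thm:specnormal}), an auxiliary lemma identifying $\Pi_4$-derivability of $\semden{\exc}, \semden{\MuACstate} \vdash \semden{\MuACstate'}$ with the transition $\MuACstate \xrightarrow{\exc} \MuACstate'$ (Lemmas~\ref{thm:deltaexc} and~\ref{thm:deltaexctofair}), the bottom-up construction via ($\linearcontract$-split) and a single ($\linearcontract$-left) for the agreement-to-validity direction (Lemmas~\ref{thm:acceptderive} and~\ref{thm:fairderive}), and reading the per-coalition decomposition off $\Pi_1, \Pi_2, \Pi_3$ with mutual derivability of the two sides of the merged contract forcing multiset equality for the converse (Lemmas~\ref{thm:acceptderivetofair} and~\ref{thm:deltaproof}). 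The subtleties you flag (ambiguous rule origin, $E = E'$ from purely structural $\Pi_3, \Pi_3'$) are handled in the paper exactly as you propose.
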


The proof is carried on in three steps. 
First, we note that a proof for $\biguplus_{\coal \in 2^{\Usr}} \semden{\pol_\coal}, \semden{\MuACstate} \vdash \semden{\MuACstate'}$ can always be transformed in one in the normal form 2, where $\Xi = \Delta = \emptyset$.
Then we show that the derivations $\Pi_1$, $\Pi_2$ and $\Pi_3$ exist whenever $\delta = \semden{\exc}$ for some $\exc$ that is an agreement.
Finally, a proof $\Pi_4$ exists for $\semden{\exc}, \semden{\MuACstate} \vdash \semden{\MuACstate'}$ if and only if 
$\MuACstate \xrightarrow{\exc} \MuACstate'$ is a transition of the exchange environment.

\section{Extending \MuACL\ with debts}\label{subsec:fair-computations}

So far, we have only considered exchanges where no debts are permitted: an agent must possess a resource she promises as required by condition $(1)$ of~\autoref{def:ee}.
In case agents trust each other or there is a regulating trusted third party, it is possible to extend our model and logic to consider a wider class of transitions, as that in~\autoref{ex:4}.
This requires updating the exchange environments by weakening the condition $(1)$, allowing an agent to incur a temporary debt.

\begin{definition}
An \emph{exchange environment with debts} is $(\MuACStates, \dashrightarrow)$ with 
$\MuACStates$ defined as in~\autoref{def:ee}, and $\dashrightarrow\ \subseteq \MuACStates \times \Exc \times \MuACStates$ contains the triples $\MuACstate \xdashrightarrow{\exc} \MuACstate'$ if and only if for all $\usr \in \Usr$ and $\res \in \Res$ both (2) from~\autoref{def:ee} and the following hold
\[
\text{(1$'$)}\  \sum\nolimits_{\usr'} \exc (\usr \xmapsto{\res} \usr') \leq \MuACstate(\usr)(\res) + \sum\nolimits_{\usr'} \exc (\usr' \xmapsto{\res} \usr)
\]
\end{definition}
For brevity, we write below $\rightarrow_{ok} \subseteq \MuACStates \times \MuACStates$ for the transition induced by agreements only: $\MuACstate \rightarrow_{ok} \MuACstate$ if and only if $\MuACstate \xrightarrow{\exc} \MuACstate$ for some agreement $\exc$.
As a matter of fact, in an exchange environment, every pair of allocations are connected by a transition, but not by one labelled by an agreement.
Now, we similarly filter $\dashrightarrow$ and define $\dashrightarrow_{ok}$ to be the transition relation in an exchange environment with debits for which a fair exchange exists.
Note that some allocations reachable with these transitions cannot be reached without permitting debits, i.e. $\rightarrow_{ok} \subsetneq \dashrightarrow_{\!ok}$.

\begin{example}
Consider again~\autoref{ex:4}, where \Bob\ and \Alice\ want to exchange a \lm\ for two \ma\ of \Bob. 
Assume that the current assignment $\MuACstate$ is such that \Alice\ has nothing, \Bob\ has two \ma\ and \Charlie\ has one \lm.

The relevant policy rules are \textbf{A3}, \textbf{B3}, \textbf{C1}:
\begin{align*}
	\pol_{\{\Alice\}} &\supseteq  \{ \{\Alice\  \xmapsto{\lm} \Bob\} \triangleleft \{ \Bob\ \xmapsto{\ma} \Alice\  \} \}\\
	\pol_{\{\Bob\}} &\supseteq \{ \{\Bob\  \xmapsto{\ma} \Alice, \Bob\  \xmapsto{\ma} \Alice\} \triangleleft \{ \Alice\ \xmapsto{\lm} \Bob\  \} \}\\
	\pol_{\{\Charlie\}} &\supseteq \bigcup\nolimits_{\usr, \usr' \in \Usr} \{ \{\Charlie\  \xmapsto{\lm} \usr\} \triangleleft \{ \usr' \xmapsto{\ma} \Charlie\  \} \}
\end{align*}
The exchange $\exc = \{ \Charlie \xmapsto{\lm} \Alice,  \Alice  \xmapsto{\lm} \Bob, \Bob  \xmapsto{\ma} \Alice, \Bob  \xmapsto{\ma} \Alice\}$ depicted in  \autoref{fig:ex:4} is forbidden in $\MuACstate$ using exchange environments without debts, since \Alice\ has no \lm s.
Instead, the transition $\MuACstate \xdashrightarrow{\exc} \MuACstate'$ results in the state $\MuACstate'$ where \Alice\ and \Charlie\ have each a \ma, and \Bob\ a \lm. 
Note that the transfer $\Alice  \xmapsto{\lm} \Bob$ causes a temporary debt of \Alice, which is repaid with the transfer $\Charlie \xmapsto{\lm} \Alice$.
\end{example}

Again logic comes to our rescue for deciding if a transition in an exchange environment with debts is an agreement.
This is done by adding the rule (Cut) in \autoref{fig:starcut}.
Consequently, we extend the correspondence between exchange environments and \MuACL\ in presence of debts through the following corollary of~\autoref{th:fair-exchange}.
\begin{figure}[t]
	\begin{gather*}
	\prftree[r]
	{(Cut)}
	{\Phi \vdash \sigma}
	{\Phi', \sigma \vdash \phi}
	{\Phi, \Phi' \vdash \phi}	
	\end{gather*}
	\caption{Cut rule for \MuACL.}
	\label{fig:starcut}
	\vspace{\baselineskip}	
	\medskip
\end{figure}

\begin{restatable}{corollary}{Logiccorrectnessstar}\label{thm:correctcompletestar}
	Under the same conditions~of \autoref{th:fair-exchange}, a transition $\MuACstate \dashrightarrow_{\!ok} \MuACstate'$ exists, if and only if $\,\biguplus_{\coal \in 2^{\Usr}} \semden{\pol_\coal}, \semden{\MuACstate} \vdash \semden{\MuACstate'}$ is valid in \MuACL\ augmented with the (Cut) rule.
\end{restatable}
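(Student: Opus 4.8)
The plan is to mirror the proof of \autoref{th:fair-exchange}, isolating exactly where the rule (Cut) is needed and matching its use to the relaxation of condition~(1) into condition~(1$'$). The starting observation is that the agreement condition and the conservation condition~(2) are literally the same for $\dashrightarrow$ and for $\rightarrow$: only the local feasibility constraint changes. Hence the entire ``contractual'' part of the argument --- the derivations $\Pi_1,\Pi_2,\Pi_3$ of normal form~2 that extract an agreement $\exc$ from the policy encodings $\semden{\pol_\coal}$ --- can be reused verbatim, and the whole burden of the corollary rests on analysing the linear block $\Pi_4$, i.e. the derivation of $\semden{\exc},\semden{\MuACstate} \vdash \semden{\MuACstate'}$, in the presence of (Cut).

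For the direction from transitions to provability, suppose $\MuACstate \dashrightarrow_{\!ok} \MuACstate'$, witnessed by an agreement $\exc$ satisfying conditions~(1$'$) and~(2). I would run the proof of \autoref{th:fair-exchange} to obtain $\Pi_1,\Pi_2,\Pi_3$ unchanged, reducing the goal to a derivation of $\semden{\exc},\semden{\MuACstate} \vdash \semden{\MuACstate'}$. When $\exc$ also satisfies condition~(1) this is the cut-free $\Pi_4$; when only~(1$'$) holds, I would discharge each debt with a cut. Concretely, for every agent $\usr$ and resource $\res$ whose deficit $d = \sum_{\usr'} \exc(\usr \xmapsto{\res} \usr') - \MuACstate(\usr)(\res)$ is positive, condition~(1$'$) guarantees at least $d$ incoming transfers of $\res$ to $\usr$; I introduce $(\res@\usr)^{d}$ as the cut formula, deriving it from those incoming transfers on one premise and using it to fire the outgoing transfers on the other, so that (Cut) splices the ``borrow'' and the ``repay'' halves of the schedule. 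Iterating over all positive deficits turns an otherwise unschedulable sequence of $(\multimap\text{-left})$ applications into a valid derivation.

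For the converse, I would first establish a normalisation lemma for the cut extension analogous to \autoref{thm:col-normal-ncr}: every valid proof can be reshaped so that occurrences of (Cut) are confined to the linear block, cutting only on state formulas (tensors of atoms $\res@\usr$); the contractual rules and the $!$-layer are untouched, so $\Pi_1,\Pi_2,\Pi_3$ still yield an agreement $\exc$ exactly as before. It then remains to show that a cut-using derivation of $\semden{\exc},\semden{\MuACstate} \vdash \semden{\MuACstate'}$ exists only if conditions~(1$'$) and~(2) hold. Conservation~(2) is immediate from linearity: every atom and every cut formula is produced and consumed exactly once, so the net change at each agent is precisely the in/out balance of $\exc$. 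For~(1$'$) I would read each cut formula as a borrowed-and-repaid token and track, per agent and per resource, the maximal outstanding debt, showing it is bounded by the incoming flow of that resource.

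The main obstacle is this last bound in the converse: ruling out \emph{fictitious} borrowing, i.e. a cut formula that injects a resource never genuinely produced by an incoming transfer and would thereby let an agent give away resources from thin air, violating~(1$'$). Because full cut-elimination fails in this restricted fragment --- indeed its failure is precisely what makes (Cut) add the expressive power of debts --- I cannot simply appeal to a cut-elimination theorem; instead I must carry a dedicated invariant on the admissible shape of cut formulas and on their interaction with $(\multimap\text{-left})$, certifying that every borrowed token is matched by a real incoming transfer of $\exc$. Establishing that invariant, together with the confinement-of-cuts normalisation it presupposes, is the technical heart of the proof.
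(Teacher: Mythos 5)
Your overall architecture is the paper's: keep $\Pi_1,\Pi_2,\Pi_3$ untouched and localize all the new work in the linear block. Your forward direction in particular matches the paper almost exactly: \autoref{thm:deltaexccmp} proceeds by induction on the size of $\exc$ and, for a debt transfer $\tr = \usr \xmapsto{\res} \usr'$ backed by an incoming $\tr' = \usr'' \xmapsto{\res} \usr$ (which condition (1$'$) guarantees), applies (Cut) on the single atom $\res@\usr$, splicing the borrow derivation $\semantics{\{\tr'\}}, \res@\usr'' \vdash \res@\usr$ with the repay derivation $\semantics{\{\tr\}}, \res@\usr \vdash \res@\usr'$; your batched cut on $(\res@\usr)^{d}$ is an inessential variant of this. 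Note also that the confinement-of-cuts normalization you list as a prerequisite is not something you would have to build from scratch: the paper proves normal forms for \MuACLs\ directly (\autoref{thm:col-normal-ncr-s}), with (Cut) placed in the rule set $R_5$ together with ($\multimap$-left), so the agreement-extraction lemma (\autoref{thm:acceptderivetofair}) applies verbatim in the presence of (Cut).

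The genuine gap is in your converse: you declare a ``dedicated invariant'' bounding outstanding debt to be the technical heart, and you never establish it --- but in fact no such invariant is needed, and the obstacle you fear (fictitious borrowing) cannot arise, for two structural reasons your plan does not exploit. First, the cut formula is syntactically constrained to be a $\sigma$, a tensor of atoms, and by linearity of the rules available in the block it encodes an intermediate allocation. Second, and decisively, the left premise $\Phi \vdash \sigma$ of a cut is itself a valid sequent of the same shape, so the induction hypothesis applies to \emph{it}: it already certifies a debt transition $\st \xdashrightarrow{\exc_1} \st'$ with $\semantics{\st'} = \sigma$, i.e. every ``borrowed token'' is produced by a genuine (inductively certified) transition, not by thin air. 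The paper's \autoref{thm:deltaexctofaircomp} is therefore a plain induction on the rules of \MuACLs: in the (Cut) case the two premises give $\st \xdashrightarrow{\exc_1} \st'$ and $\st' \xdashrightarrow{\exc_2} \st''$, and one concludes by the closure property that debt transitions compose, $\st \xdashrightarrow{\exc_1 \uplus \exc_2} \st''$ --- condition (1$'$) for the union follows by summing the two instances of (1$'$), since the intermediate allocation cancels out, and condition (2) is additive. So instead of tracking maximal outstanding debt per agent and resource through an arbitrary cut-using derivation, you should prove this single composition lemma for $\dashrightarrow$; with it, your converse closes and the corollary follows, as in the paper, by pairing \autoref{thm:validitytofairnesscomp} with \autoref{thm:fairnestovaliditycomp}.
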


Decidability of \MuACL\ is not affected by the (Cut) rule.
\begin{restatable}{corollary}{MuACLsdec}\label{thm:MuACLdecide}
	An always-terminating algorithm exists that decides if an initial sequent is valid in \MuACL\ augmented with the (Cut) rule.
\end{restatable}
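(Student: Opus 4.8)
The plan is to reduce validity in \MuACL\ augmented with (Cut) to finitely many queries of the cut-free decision procedure of \autoref{thm:MuACLdecide1}. The starting observation is that the (Cut) rule only ever cuts on a \emph{state} formula $\sigma$, and that states are conserved by the proof system: every rule that manipulates atoms $\res@\usr$ does so through linear implications $\res@\usr \multimap \res@\usr'$, which move a resource between agents but neither change its type nor alter the total number of $\res$-atoms, while contracts $\delta \linearcontract \delta'$ only make implications available and likewise preserve atom counts. Hence, for each resource $\res$, the number $N_\res$ of $\res$-atoms is an invariant of any derivation of an initial sequent $\Omega, \Xi, \Sigma \vdash \sigma$.

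First I would prove the key \textbf{bounding lemma}: in any (Cut)-proof of an initial sequent $\Omega, \Xi, \Sigma \vdash \tau$, every cut formula is a state whose number of $\res$-atoms is at most the count $N_\res$ in $\Sigma$, over the fixed finite agent set $\Usr$. Indeed, in a cut $\frac{\Phi \vdash \sigma_{\mathrm{mid}} \quad \Phi', \sigma_{\mathrm{mid}} \vdash \tau}{\Phi,\Phi' \vdash \tau}$ the non-linear context $\Omega, \Xi$ is shared between the premises (by contraction and weakening on the $!$-formulas), while the linear state $\Sigma$ splits as $\Sigma_1 \uplus \Sigma_2$; since the $!$-bound context carries only implications and contracts, the sole source of bare $\res$-atoms in the left premise is $\Sigma_1$, so the $\res$-count of $\sigma_{\mathrm{mid}}$ equals that of $\Sigma_1$ and is thus $\leq N_\res$. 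Consequently all states occurring in the proof — linear contexts and cut formulas alike — range over the \emph{finite} set $S$ of atom-multisets bounded by $N_\res$ per resource over $\Usr$. Note also that both premises of a cut are again initial sequents, so the cut-free procedure of \autoref{thm:MuACLdecide1} applies to each.

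Next I would recast cut-provability as reachability over $S$. Let $R_0 \subseteq S \times S$ relate $\Sigma'$ to $\tau'$ exactly when the initial sequent $\Omega, \Xi, \Sigma' \vdash \tau'$ is valid \emph{without} (Cut); by \autoref{thm:MuACLdecide1} this is decidable, and since $S$ is finite it can be tabulated by finitely many terminating calls. I would then define $R$ as the least relation containing $R_0$ and closed under the cut step, namely $R(\Sigma', \tau')$ whenever $\Sigma' = \Sigma_1 \uplus \Sigma_2$ with $R_0(\Sigma_1, \sigma_{\mathrm{mid}})$ and $R(\Sigma_2 \uplus \sigma_{\mathrm{mid}}, \tau')$ for some $\sigma_{\mathrm{mid}} \in S$. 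Because $R$ is the least fixed point of a monotone operator on the finite lattice of subsets of $S \times S$, it is computable by saturation, which necessarily terminates. The given sequent is then valid in \MuACL{} + (Cut) iff $R(\Sigma, \tau)$ holds, after a preliminary count-matching check (justified by the invariant) that rejects $\Sigma, \tau$ with differing $\res$-counts.

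The main obstacle is the exact correspondence between $R$ and cut-provability. Soundness, i.e.\ $R(\Sigma,\tau)$ implies provability, is routine since each closure step is witnessed by one application of (Cut). The delicate direction is completeness: a (Cut)-proof is a tree in which cuts may be nested and interleaved with the multiplicative and structural rules, and the linear state is repeatedly split, so one cannot read a reachability path off it directly. Here I would argue by induction on the number of cuts, using cut-permutation lemmas to push the lowest cut past the surrounding rules so that its two premises are themselves initial sequents proved with fewer cuts; the induction hypothesis then places them in $R$ (resp.\ $R_0$), and the closure rule assembles the conclusion. Establishing these permutations, and checking that they preserve the bounding invariant, is where the real work lies; the conservation bound and the finiteness of $S$ are what guarantee termination, while \autoref{thm:MuACLdecide1} supplies the decidable per-edge test.
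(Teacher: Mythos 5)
There is a genuine gap, and it is fatal to the proposed reduction: completeness fails, because your claim that ``both premises of a cut are again initial sequents'' (and the hoped-for cut-permutation lemmas that would enforce it) is false in this logic. In normalized \MuACLs\ proofs, cuts live \emph{above} the single application of ($\linearcontract$-left), in sequents whose linear context contains the implications $\Delta$ released by discharging the contracts, and these implications are split between the two premises of the cut; pushing the cut below ($\linearcontract$-left) would require each premise to re-establish the contractual matching $\delta \vdash \delta'$, $\delta' \vdash \delta$ on its own fragment, which in general is impossible. Concretely, take $\Omega = \{\, !\,((\lm@\Charlie \multimap \lm@\Alice) \linearcontract (\lm@\Alice \multimap \lm@\Bob)),\ !\,((\lm@\Alice \multimap \lm@\Bob) \linearcontract (\lm@\Charlie \multimap \lm@\Alice)) \,\}$, $\Sigma = \lm@\Charlie$ and goal $\lm@\Bob$. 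With (Cut) the sequent is valid: one ($\linearcontract$-split) and one ($\linearcontract$-left) release $\Delta = \lm@\Charlie \multimap \lm@\Alice,\ \lm@\Alice \multimap \lm@\Bob$, and a cut on the state $\lm@\Alice$ chains the two implications. Cut-free it is invalid: ($\multimap$-left) takes its premise of the form $\Sigma \vdash \sigma$ with $\Sigma$ a multiset of \emph{states}, so the output of one implication can never feed another (exactly the restriction encoded in \autoref{thm:fstnfdecide-ncr}), and since $\Pi_4$ must consume all of the linear $\Delta$, no choice of discharged contracts works. Now run your saturation: every $R_0$-edge is a self-contained cut-free agreement, and from any sub-state of $\{\lm@\Charlie\}$ the only cut-free-valid initial sequents are identities ($\Omega, \lm@\Charlie \vdash \lm@\Alice$ fails because discharging the contracts also forces consuming $\lm@\Alice \multimap \lm@\Bob$; $\Omega, \lm@\Alice \vdash \lm@\Bob$ fails symmetrically for lack of $\lm@\Charlie$). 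So $R$ is the identity and your algorithm answers ``invalid'' on a valid sequent. This is precisely the debt scenario of \autoref{ex:4}: a single agreement with a temporary debt does \emph{not} factor into a sequence of debt-free agreements (the paper itself notes $\rightarrow_{ok} \subsetneq \dashrightarrow_{\!ok}$), whereas your $R$-closure computes exactly iterated debt-free agreements. (A smaller slip: your per-resource invariant $N_\res$ also over-assumes, since the functions $L$ and $R$ in the appendix allow implications $\res@\usr \multimap \res'@\usr'$ that change the resource; only the total atom count is conserved --- but this is incidental.)

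The paper proceeds in the opposite direction: instead of eliminating (Cut) in favour of cut-free queries, it \emph{confines} (Cut) to the top segment $\Pi_5$ of the normal forms --- the permutation lemmas (\autoref{thm:move1}--\autoref{thm:move4}) are proved for \MuACLs, so cuts commute upward past the structural rules, ($\linearcontract$-split) and ($\linearcontract$-left), never downward. The Hilbert-basis reduction of \autoref{thm:col-sftoff} from normal form 2 to normal form 1 leaves that segment untouched, and the residual fragment $\Delta, \Sigma \vdash \sigma$ \emph{with} (Cut) is exactly Kanovich's !-Horn fragment, decided by full Petri-net reachability, where (Cut) corresponds to letting the tokens produced by one transition be consumed by another (the cut-free case of \autoref{thm:MuACLdecide1} uses the modified net that forbids this). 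Your bounded-state-space observation is sound as far as it goes, but the chaining that (Cut) enables must be handled \emph{inside} one contract discharge, as the Petri net does, not by composing independent agreement steps.
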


\section{Related Work}\label{sec:col-related}

The problem of fairly exchanging electronic assets among a set of agents has been addressed by different communities, e.g. artificial intelligence, fair exchange protocols in distributed systems~\cite{Bao99,Even,Franklin98,Freiling}.
Below, we focus on those approaches that use linear logic to capture these issues, and we conclude with a comparison with logics having contractual aspects.

\paragraph*{Logical Modelling of Resource Exchange}
 
Linear logic has been used to model resource-aware games and problems in the artificial intelligence community.
They all describe the desire of agents in terms of their goals or valuation functions, and derive or recognise reasonable offers and strategies.
A contribution of ours is instead a way of directly modelling what agents offer via exchange policies, combining a descriptive approach and a prescriptive one.

Harland et al.~\cite{Harland02} show how linear logic enables reasoning about negotiations, encoding agents' goals and what they offer.
Linear logic proofs recognise the negotiation outcomes that satisfy all parties.

Küngas et al.~\cite{Kungas2003,Kungas2004} propose a model of cooperative problem
solving, and use linear logic for encoding agents’ resources, goals and capabilities. Then, each agent determines whether she can solve the problem in isolation. If she cannot, then she starts negotiating with other agents in order to find a cooperative solution.
Partial deduction~\cite{partialdeduction} is used to derive possible deals. 
The authors of~\cite{Kungas2006,Kungas2008} extend their work by considering coalition formation.

Porello et al.~\cite{Porello2010} target distributed resource allocation.
They encode resource ownership and transfers, as well as valuation functions representing user preferences in (various fragments of) linear and affine logic.
They show how logic proofs discriminate mutually satisfactory exchanges that increase the value of the assignment for every user, thus recovering a notion of social welfare in terms of Pareto optimality.
Since the valuation functions of users used to decide exchanges are assumed  known, offers and negotiation are not modeled.
They prove that any sequence of
individually rational deals will always converge to an allocation with
maximal social welfare, as known from~\cite{Sandholm2002}.
In contrast, we directly encode the user exchange policies. 
Afterward we further constrain the exchanges compliant with policies with valuation functions obtaining deals.
Moreover, we extend the computational fragment of linear logic with a contractual implication and we prove decidability results.

Troquard~\cite{Troquard2018} models the interaction of resource-conscious agents who share resources to achieve their goals in cooperative games.
Algorithms are proposed for deciding whether a group of agents can form a coalition and act together in a way that
satisfies them all. 
Various problems concerning cooperative games are modelled in suitable fragments of linear and affine logic and their computational complexity is discussed.
Our focus is instead on resource exchanges, and our context is a mixture of cooperative and competitive behaviour.
In a subsequent work, Troquard~\cite{Troquard2020} studies how a central authority can modify the set of Nash equilibria in a cooperative game by redistributing the initial assignment of resources to agents. 
The complexity of this optimization problem is discussed in terms of the chosen (fragment of) resource-sensitive logic.

\paragraph*{Contractual logics}
We formalised the contractual aspects following the pioneering PCL proposed by Bartoletti and Zunino~\cite{BZ}, which is a logic for modelling contractual reasoning. 
Our operator $\linearcontract$ is actually a linear version of their $\contract$.
The main difference with respect to PCL is that from the premise $p \contract p', p' \contract p$ one can derive $p$, $p'$, and $p \land p'$. 
Instead, in \MuACL\ only the conjunction $p \otimes p'$ can be derived from the premise $p \linearcontract p', p' \linearcontract p$.
The syntactic form of \MuACL\ sequents is inspired by Kanovich~\cite{Kanovich94}, who proposed  a computational fragment of linear logic  for reasoning on computations with consumable resources.

\section{Conclusions and future work}\label{sec:col-conclude}
We introduced exchange environments as a formal model for scenarios where agents join coalitions and exchange resources to achieve individual and collective goals.
Our model is a transition system where states are resource allocations to agents and transitions are labelled by the exchanged resources.
Moreover, we proposed exchange policies to regulate competitiveness and cooperation: agents prescribe in isolation what they offer and what they require in return.

We characterised the notion of agreement as a resource exchange where the policies of all the involved agents are met. 
Since agreements are often circular, checking an exchange to be such is crucial and hard.
For that, we extended the computational fragment of linear logic with a new operator that handles both contracts and circularity.
The resulting logic, called \MuACL, is decidable (\autoref{thm:MuACLdecide1}), so checking that an exchange is an agreement is reduced to finding a proof for its encoding in \MuACL.
We also modeled the case in which an agent incurs a temporary debt that is paid with resources she can obtain during the same exchange.
Extending our logic with the cut rule sufficed to handle these cases, still maintaining decidability (\autoref{thm:MuACLdecide}).

In addition, we formalised when an agreement is beneficial to all the agents of a coalition, dubbed a deal, allowing agents to assign a utility value to resource allocations.
Checking that a resource exchange is a deal consists in showing that the utility value of the offered resources is less than that of those required back (\autoref{th:pol-deal}).
We also showed that there exists a rational policy, i.e. accepting all and only the deals for a coalition (\autoref{th:exists-rational}), and characterized those that guarantee to reach a Pareto increment (\autoref{th:Pareto}).

Future work includes extending \MuACL\ with universal quantifiers, disjunction and negation to express richer policies in a handy and concise manner.
Another line of development concerns creating, disposing and transforming resources in exchanges, actions that are already available in linear logic.
Finally, we plan to investigate to apply our model to describe real scenarios, such as exchanges of crypto-assets in blockchain systems.

\begin{ack}
Work partially supported by projects SERICS (PE00000014) and 
PRIN AM$\forall$DEUS (P2022EPPHM) under the MUR-PNRR funded by the European Union --- NextGenerationEU.
\end{ack}

\bibliography{references}

\appendix

\clearpage

\section{Value Functions}\label{app:vf}

\polanddeal*
\begin{proof}
First, notice that, by definition, $v_\usr(\st') = v_\usr(\st) + W(\usr, \exc)$
for all $\usr$s when $\st \xrightarrow{\exc} \st'$.

Then, we proceed by induction on the definition of $\vDash$.
The base case, $\pol_{\coal} \vDash \emptyset \triangleleft \emptyset$ is trivial.
Assume 
$\pol_{\coal} \vDash \exc \triangleleft \exc'$, 
$\pol_{\coal} \vDash \exc'' \triangleleft \exc'''$, and therefore 
$\pol_{\coal} \vDash (\exc \uplus \exc'') \triangleleft (\exc' \uplus \exc''')$.
By induction hypothesis, $\exc \uplus \exc'$ and $\exc''\uplus \exc'''$ are deals.
Hence, $W(\usr, \exc \uplus \exc'')$ and $W(\usr, \exc' \uplus \exc''')$ are all grated than zero for any $\usr \in \coal$.

We prove now that $\exc \uplus \exc' \uplus \exc'' \uplus \exc'''$ is a deal.
Assume $\st \xrightarrow{\exc \uplus \exc' \uplus \exc'' \uplus \exc''' } \st'$.
Then 
\[
v_\usr(\st') = v_\usr(\st) + W(\usr, \exc \uplus \exc' \uplus \exc'' \uplus \exc''')
\]
We conclude by noticing that $W(\usr, \exc \uplus \exc' \uplus \exc'' \uplus \exc''')$ is the same as $W(\usr, \exc \uplus \exc'') + W(\usr, \exc' \uplus \exc''')$
\end{proof}

\existsrational*
\begin{proof}
The proof amount at showing that for each $\coal$, a finite set of exchanges $\mathbb{E}$ exists such that all the 
deals for $\coal$ can be obtained by disjoint union of (possibly replicated) elements of $\mathbb{E}$.
The result follows from the fact that the set of transitions in an exchange environment is finite, and therefore also all the exchanges that may appear as labels are finite (i.e. the ones where each resource $\res$ does not appear more than $q(\res)$ times).
\end{proof}

\pareto*
\begin{proof}
Note that, since valuation functions are isolated, $W(\usr,\exc) = W(\usr,\exc\downarrow_\usr)$ for any $\usr$.

Assume $\exc$ is an agreement, then it is $\exc = \biguplus_{\coal} \exc_\coal = \biguplus_{\coal} \exc_\coal'$.
Since policies are rational, $W(\usr, (\exc_\coal \uplus \exc_{\coal}')\downarrow_\usr) \geq 0$ for any $\usr \in \coal$.
Finally, it holds by construction that for any $\usr$, $\exc\downarrow_\usr = \biguplus_{\coal} (\exc_\coal \uplus \exc_\coal')\downarrow_\usr$.
Then, for any $\usr$
\begin{align*}
W(\usr, \exc) =
W(\usr, \exc\downarrow_\usr) =
\sum_{C} W(\usr, (\exc_\coal \uplus \exc_{\coal}')\downarrow_\usr) \geq 0
\end{align*}
\end{proof}

\section{Normalization}\label{app:norm}
\begin{figure*}[t]
\begin{gather*}
\prftree[r]
	{($\Omega$-$\otimes$-left)}
	{\omega, \omega', \Phi \vdash \phi}
	{\omega \otimes \omega', \Phi \vdash \phi}
\ \ \
\prftree[r]
	{($\Xi$-$\otimes$-left)}
	{\xi, \xi', \Phi \vdash \phi}
	{\xi \otimes \xi', \Phi \vdash \phi}
\ \ \
\prftree[r]
	{($\Theta$-$\otimes$-left)}
	{\theta, \theta', \Phi \vdash \phi}
	{\theta \otimes \theta', \Phi \vdash \phi}
\ \ \
\prftree[r]
	{($\Delta$-$\otimes$-left)}
	{\delta, \delta', \Phi \vdash \phi}
	{\delta \otimes \delta', \Phi \vdash \phi}
\ \ \
\prftree[r]
	{($\Sigma$-$\otimes$-left)}
	{\sigma, \sigma', \Phi \vdash \phi}
	{\sigma \otimes \sigma', \Phi \vdash \phi}
\end{gather*}
\caption{Specific forms for ($\otimes$-left)}
\vspace{\baselineskip}
\label{fig:otimescases}
\end{figure*}

Note that the syntactic constraints over \MuACL\ sequents causes every application of ($\otimes$-left) to be of one of the forms in~\autoref{fig:otimescases}.

In the following we write \MuACLs\ for \MuACL\ augmented with the (Cut) rule.
We define the following sets of rules
\begin{align*}
R_1 &= \{ \text{(Weak), (Cont), ($!$-left), ($\Omega$-$\otimes$-left), ($\Xi$-$\otimes$-left), ($\Theta$-$\otimes$-left)} \}\\
R_2 &= \{ \text{($\linearcontract$-split)} \}\\
R_3 &= \{ \text{($\otimes$-right), ($\Delta$-$\otimes$-left), (Ax), ($I$-right), ($I$-left)} \}\\
R_4 &= R_3 \cup \{ \text{($\multimap$-left), ($\Sigma$-$\otimes$-left), ($\linearcontract$-left)} \}\\
R_5 &= R_4 \cup \{ \text{(Cut)} \}
\end{align*}
Moreover, we write $R_{n_1, \dots, n_m}$ with $n_i \in \{1,2,3,4,5\}$ for $\bigcup_{i = 1}^n R_{n_i}$.
Recall that we use a double line to represent multiple applications of the same rule.

First of all, note that every proof for a sequent $\delta \vdash \delta'$ already uses only the rules in $R_3$, which are the legal ones for $\Pi_3$ and $\Pi_3'$, as all the others cannot be applied due to syntactic constraints.

We then start with some auxiliary lemmata about reordering rules in \MuACLs, where we ignore the first two premises of ($\linearcontract$-left).
\begin{lemma}\label{thm:move1}
	Any \MuACLs\ derivation can be rewritten as an equivalent derivation where no rule
	in $R_1$ is applied before one in $R_{2,5}$, and the new derivation uses (Cut) only if the original one does.
\end{lemma}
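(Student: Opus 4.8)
The plan is to establish Lemma~\ref{thm:move1} by a standard rule-permutation argument: I repeatedly locate a \emph{redex}, that is an $R_1$-rule applied immediately above an $R_{2,5}$-rule, and swap the two so that the $R_1$-rule moves one step towards the conclusion, iterating until every $R_1$-rule sits below all $R_{2,5}$-rules.

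First I would isolate the structural fact that makes all swaps uniform. Every $R_1$-rule acts principally on the $\Omega,\Xi,\Theta$ zones, and in its \emph{conclusion} the principal formula reappears either as a guarded formula $!\phi$ (for (Weak), (Cont) and ($!$-left)) or as a tensor $\omega\otimes\omega'$, $\xi\otimes\xi'$, $\theta\otimes\theta'$ (for the three $\otimes$-left variants in $R_1$). By the syntactic constraints on \MuACL\ sequents, none of these shapes can be the principal formula of an $R_{2,5}$-rule: ($\linearcontract$-split) and ($\linearcontract$-left) require a bare contract $\delta\linearcontract\delta'$, while every other $R_{2,5}$-rule acts on $\Delta,\Sigma$, the succedent, or a cut formula. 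In particular ($!$-left) exposes an unguarded $\delta$ or $\theta$ only in its premise and still carries $!\phi$ in its conclusion, so an $R_{2,5}$-rule below it can never consume a formula the $R_1$-rule has just produced. Consequently, in any redex the principal formula of the lower $R_{2,5}$-rule already occurs in the premise of the upper $R_1$-rule and is disjoint from the latter's principal formula.

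With this in hand I would carry out the permutation by cases on the $R_{2,5}$-rule $\rho_2$. For the single-premise rules (($\linearcontract$-split), ($\multimap$-left), ($\Delta$-$\otimes$-left), ($\Sigma$-$\otimes$-left), ($I$-left)) the swap is immediate: apply $\rho_2$ to the premise of the $R_1$-rule and then the $R_1$-rule to the result, which is well formed because the two principal formulas are disjoint. For ($\linearcontract$-left) the same works on its main premise $\Phi,\delta'\vdash\sigma$, leaving the closed side premises $\delta\vdash\delta'$ and $\delta'\vdash\delta$ untouched. For the genuine two-premise rules, ($\otimes$-right) and (Cut), the single formula on which the $R_1$-rule acts lands entirely in one branch of the context split, so I apply $\rho_2$ first (keeping, for (Cont), the two copies of $!\phi$ together in that branch, and for ($!$-left) its unguarded form) and then the $R_1$-rule on that branch alone. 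No subderivation is ever duplicated, and since each step merely re-orders two existing rule instances, the multiset of rule instances---in particular the number of (Cut) instances---is preserved, which yields the clause that (Cut) appears afterwards only if it appeared before.

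Finally I would prove termination with the measure that sums, over all $R_1$-instances of the derivation, the number of $R_{2,5}$-instances strictly below each of them. Because a single swap moves one $R_1$-instance past exactly one $R_{2,5}$-instance and copies nothing, it decreases that instance's contribution by one and leaves all others unchanged, so the measure strictly decreases; being a natural number it reaches $0$, at which point no $R_1$-rule lies above any $R_{2,5}$-rule, i.e.\ no $R_1$-rule is applied before an $R_{2,5}$-rule. The main obstacle I anticipate is the bookkeeping in the two-premise cases: one must check that the context split can always route the $R_1$-principal (and the two contracted copies, for (Cont)) into a single branch so that no duplication is forced. Once duplication is excluded, both the local permutations and the linear termination measure go through without complications.
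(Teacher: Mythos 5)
Your proof is correct and takes essentially the same route as the paper's: a case analysis permuting each $R_1$-rule downwards past an adjacent $R_{2,5}$-rule, exactly as in the paper's (Weak)/(Cut) example, with the side premises of ($\linearcontract$-left) left untouched. You merely make explicit two points the paper leaves implicit --- the disjointness of principal formulas forced by the syntactic zones of \MuACL\ sequents, and the termination measure justifying the iteration of local swaps.
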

\begin{proof}
	We proceed by cases, showing that any application of $r \in R_1$ followed by some $r' \in R_{2,5}$ can be rewritten as an equivalent derivation that satisfies our condition.
	All the cases are straightforward, consider for example  $r =$ (Weak) and $r' =$ (Cut) and assume that $r$ is applied in the derivation of the left premise:
	\[
	\prftree[r]{(Cut)}
	{
		\prftree[r]
		{(Weak)}
		{\Phi \vdash \sigma}
		{\Phi, !\phi \vdash \sigma}
	}
	{
		\prftree[noline]
		{\Phi', \sigma \vdash \sigma'}
	}
	{\Phi, \Phi', !\phi \vdash \sigma}
	\]
	Then, swap the rules as follows:
	\[
	\prftree[r]{(Weak)}
	{
		\prftree[r]{(Cut)}
		{\Phi \vdash \sigma}
		{\Phi', \sigma \vdash \sigma'}
		{\Phi, \Phi' \vdash \sigma}
	}
	{\Phi, \Phi', !\phi \vdash \sigma}
	\]
	Similarly if $r$ applied to the derivation of the right premise.
\end{proof}

\begin{lemma}\label{thm:move2}
	Any \MuACLs\ derivation can be rewritten as an equivalent derivation where no rule
	in $R_2$ is applied before one in $R_{5}$, and the new derivation uses (Cut) only if the original one does.
\end{lemma}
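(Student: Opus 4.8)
The plan is to mirror the local-permutation argument of Lemma~\ref{thm:move1}: I would show that whenever an application of ($\linearcontract$-split) (the only rule in $R_2$) sits immediately above an application of some rule $r' \in R_5$, the two can be commuted so that $r'$ is applied before the split, and then iterate these swaps until no such inversion remains. Since the elementary rewrites only reorder existing rule applications and never introduce a new one, the clause ``the new derivation uses (Cut) only if the original one does'' follows immediately.

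The case analysis proceeds on $r'$. The key observation is that ($\linearcontract$-split) only rewrites $\Theta$-formulas, replacing two contracts $\delta_1 \linearcontract \delta_1'$ and $\delta_2 \linearcontract \delta_2'$ by the single $\delta_1 \otimes \delta_2 \linearcontract \delta_1' \otimes \delta_2'$, whereas (Ax), ($I$-right), ($I$-left), ($\Sigma$-$\otimes$-left), ($\Delta$-$\otimes$-left), ($\multimap$-left) and ($\otimes$-right) act only on the $\Sigma$- and $\Delta$-parts of the context or on the right-hand side. Hence in all of these cases the split and $r'$ touch disjoint occurrences and commute trivially; for the leaves (Ax) and ($I$-right) there is nothing above them, so the case is vacuous. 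For ($\otimes$-right) and (Cut) I would check, exactly as in Lemma~\ref{thm:move1}, that the two contracts handled by the split remain on the same side of the context partition (resp. of the cut), so that after moving $r'$ upward the combined contract is still available to the split; for (Cut) the cut formula is a $\sigma$ and so can never coincide with a contract.

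The delicate case is $r' =$ ($\linearcontract$-left), since both rules manipulate contracts. Here I would exploit two facts already available in the excerpt. First, following the convention that we ignore the two auxiliary premises $\delta \vdash \delta'$ and $\delta' \vdash \delta$ of ($\linearcontract$-left) (these are derivable using only $R_3$ rules, so no ($\linearcontract$-split) can ever occur above them), the split can only sit above the main premise $\Phi, \delta' \vdash \sigma$. Second, in that main premise the contract $\delta \linearcontract \delta'$ consumed by ($\linearcontract$-left) has already been replaced by $\delta'$, so the contracts combined by the split necessarily lie in the residual context $\Phi$ and are distinct occurrences from the consumed one. Thus the split and ($\linearcontract$-left) again act on disjoint formulas and commute, which is exactly what produces the layering of normal form 2, namely the single ($\linearcontract$-left) sitting above the whole block of splits.

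Finally, for termination I would take as measure the number of inversions, i.e. the pairs consisting of a ($\linearcontract$-split) lying above an $R_5$ rule. Each elementary swap removes exactly one inversion (the chosen adjacent pair) while leaving the relative order of every other pair unchanged, and, crucially, none of the swaps duplicates the split's subderivation, because the combined contract always descends into a single premise of $r'$. The measure therefore strictly decreases, the rewriting terminates, and the final derivation has no $R_2$ rule applied before any $R_5$ rule. I expect the ($\linearcontract$-left) case to be the main obstacle, and it is precisely the ``$R_3$-only side premises'' and ``consumed contract already replaced'' observations that dissolve it.
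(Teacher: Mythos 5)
Your proof is correct and takes essentially the same route as the paper's: a local permutation argument that swaps each ($\linearcontract$-split) past the adjacent $R_5$ rule below it, where the only non-trivial case is ($\linearcontract$-left), which commutes because the contract consumed there is a different occurrence from the two contracts being merged by the split --- exactly the explicit rewriting the paper displays, with all remaining cases dismissed because no other rule in $R_5$ modifies the $\Theta$ part of the sequent. Your extra care about the side premises of ($\linearcontract$-left) (which use only $R_3$ rules, so no split can occur above them), about the branching rules ($\otimes$-right) and (Cut), and about termination via an inversion count only makes explicit what the paper compresses into its one-line ``all the other cases are straightforward'' remark.
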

\begin{proof}
	We proceed by cases, showing that any application of $(\linearcontract\text{-split}) \in R_2$ followed by some $r \in R_{5}$ can be rewritten as an equivalent derivation that satisfies our condition.
	Let $r = $($\linearcontract\text{-split}$), then we can rewrite
		\[
		\prftree[r]{($\linearcontract$-left)}
		{\delta_0 \vdash \delta_0'}
		{\delta_0' \vdash \delta_0}
		{
			\prftree[r]{($\linearcontract$-split)}
			{\Phi, \delta_0', (\delta \otimes \delta'') \linearcontract (\delta' \otimes \delta''') \vdash \sigma}
			{\Phi, \delta_0', \delta \linearcontract \delta', \delta'' \linearcontract \delta''' \vdash \sigma}
		}
		{\Phi, \delta_0 \linearcontract \delta_0', \delta \linearcontract \delta', \delta'' \linearcontract \delta''' \vdash \sigma}
		\]
		as
		\[
		\prftree[r]{($\linearcontract$-split)}
		{
			\prftree[r]{($\linearcontract$-left)}
			{\delta_0 \vdash \delta_0'}
			{\delta_0' \vdash \delta_0}
			{\Phi, \delta_0', (\delta \otimes \delta'') \linearcontract (\delta' \otimes \delta''') \vdash \sigma}
			{\Phi, \delta_0 \linearcontract \delta_0', (\delta \otimes \delta'') \linearcontract (\delta' \otimes \delta''') \vdash \sigma}
		}
		{\Phi, \delta_0 \linearcontract \delta_0', \delta \linearcontract \delta', \delta'' \linearcontract \delta''' \vdash \sigma}
		\]
	All the other cases are straightforward, since no other rule in $R_{5}$ effectively modifies $\Theta$.
\end{proof}

\begin{lemma}\label{thm:move3}
	Any \MuACLs\ derivation that only uses rules in $R_5$ can be rewritten as an equivalent derivation where no rule is applied after ($\linearcontract$-left).
	In addition, the equivalent derivation uses (Cut) only if the original one does.
\end{lemma}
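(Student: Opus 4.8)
The plan is to prove the lemma by a rule-permutation argument, pushing every application of ($\linearcontract$-left) downward (towards the conclusion) until no other rule lies below it. The crucial observation is that, among the rules of $R_5$, the only one whose principal formula is a contract $\theta = \delta \linearcontract \delta'$ is ($\linearcontract$-left) itself: within $R_5 = R_4 \cup \{ \text{(Cut)} \}$ there is neither ($\Theta$-$\otimes$-left) nor ($\linearcontract$-split), since those belong to $R_1$ and $R_2$ respectively. Consequently, for every rule $r \in R_5 \setminus \{(\linearcontract\text{-left})\}$, any contract formula occurring in the context is passive and may be carried along unchanged.

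First I would set up the local commutation. Suppose the conclusion $\Phi, \delta \linearcontract \delta' \vdash \sigma$ of an application of ($\linearcontract$-left) (with side premises $\delta \vdash \delta'$ and $\delta' \vdash \delta$, derived in $R_3 \subseteq R_5$, and main premise $\Phi, \delta' \vdash \sigma$) is itself a premise of some $r \in R_5 \setminus \{(\linearcontract\text{-left})\}$. I would then show, by cases on $r$, that the two can be swapped so that $r$ acts on the main premise $\Phi, \delta' \vdash \sigma$ first and ($\linearcontract$-left) is applied last, reintroducing $\delta \linearcontract \delta'$ with the same two side premises. For the unary context rules ($\Sigma$-$\otimes$-left), ($\Delta$-$\otimes$-left) and ($I$-left) the principal formula of $r$ is distinct from the passive $\delta \linearcontract \delta'$, so the swap is immediate; (Ax) and ($I$-right) are leaves and never occur below another rule. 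The only delicate cases are the binary ($\otimes$-right) and (Cut), where the context is split (resp.\ merged): here $\delta \linearcontract \delta'$ belongs to exactly one of the two context parts, so I would perform $r$ on the two main premises and then apply a single ($\linearcontract$-left) at the bottom, attaching its two side premises only once. In particular no side premise is duplicated and no fresh (Cut) is created.

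A pleasant simplification is that the case $r = (\multimap\text{-left})$ is vacuous: the premise of ($\multimap$-left) has the shape $\Sigma \vdash \sigma$, with a context consisting solely of state formulas, whereas the conclusion of ($\linearcontract$-left) always carries the contract $\delta \linearcontract \delta'$, which is not a state. Hence ($\linearcontract$-left) can never sit immediately above ($\multimap$-left), and this case need not be treated.

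To finish, I would iterate these local swaps under a termination measure such as the sum, over all applications of ($\linearcontract$-left) in the derivation, of the number of non-($\linearcontract$-left) rules occurring strictly below each of them. A single swap moves $r$ from below the treated ($\linearcontract$-left) to above it while leaving the positions of all other ($\linearcontract$-left) applications unchanged, so the measure strictly decreases; when it reaches zero, the only rules below any ($\linearcontract$-left) are further ($\linearcontract$-left) applications, i.e.\ all contract eliminations form a block at the bottom and no rule is applied after them. Since every step merely reorders existing rules and never introduces a cut, the resulting derivation uses (Cut) exactly when the original one does. I expect the main obstacle to be the careful bookkeeping in the ($\otimes$-right) and (Cut) cases---checking that the split or merge of the contexts keeps the contract formula and its side premises in a well-formed position---but the inertness of $\theta$-formulas under every rule of $R_5 \setminus \{(\linearcontract\text{-left})\}$ makes each such verification routine.
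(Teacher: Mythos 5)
Your proposal is correct and follows essentially the same route as the paper: a local rule-permutation argument pushing ($\linearcontract$-left) below every other rule of $R_5$, justified by the key observation that no rule in $R_5$ other than ($\linearcontract$-left) acts on a $\theta$-formula, so the contract is passive and each swap is immediate. The paper simply states that all cases are straightforward for this reason, whereas you additionally spell out the binary cases (($\otimes$-right), (Cut)), note the vacuity of the ($\multimap$-left) case, and supply an explicit termination measure---details the paper leaves implicit, so your argument is, if anything, more complete.
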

\begin{proof}
	We proceed by cases, showing that any application of ($\linearcontract$-left) followed by some other $r \in R_{5}$ can be rewritten as an equivalent derivation that satisfies our condition.
	All the cases are straightforward, since no other rule acting on $\theta$ is in $R_5$.
\end{proof}

\begin{lemma}\label{thm:move4}
	Any derivation composed by two subsequent applications of ($\linearcontract$-left) can be rewritten as an application of ($\linearcontract$-left) followed by ($\linearcontract$-split).
\end{lemma}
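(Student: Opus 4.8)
The plan is to produce an explicit, local rewriting of the derivation. First I would pin down the shape of ``two subsequent applications of ($\linearcontract$-left)''. Since ($\linearcontract$-left) replaces a contract $\delta \linearcontract \delta'$ by the single proposition $\delta'$, and $\delta'$ is a $\delta$ (never a $\theta$), the upper application cannot act on anything produced by the lower one; hence the two applications must consume two distinct contracts $\delta_1 \linearcontract \delta_1'$ and $\delta_2 \linearcontract \delta_2'$ that are both already present in the left-hand context. The derivation therefore has the form
\[
\prftree[r]{($\linearcontract$-left)}
  {\delta_1 \vdash \delta_1'}
  {\delta_1' \vdash \delta_1}
  {
    \prftree[r]{($\linearcontract$-left)}
      {\delta_2 \vdash \delta_2'}
      {\delta_2' \vdash \delta_2}
      {\Phi, \delta_1', \delta_2' \vdash \sigma}
      {\Phi, \delta_1', \delta_2 \linearcontract \delta_2' \vdash \sigma}
  }
  {\Phi, \delta_1 \linearcontract \delta_1', \delta_2 \linearcontract \delta_2' \vdash \sigma}
\]
with the five premises $\delta_1 \vdash \delta_1'$, $\delta_1' \vdash \delta_1$, $\delta_2 \vdash \delta_2'$, $\delta_2' \vdash \delta_2$ and $\Phi, \delta_1', \delta_2' \vdash \sigma$ left open. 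This observation also certifies that the displayed shape is the most general one, so no further cases arise.

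Next I would write the target derivation, which merges the two contracts, applies ($\linearcontract$-left) once, and splits them again:
\[
\prftree[r]{($\linearcontract$-split)}
  {
    \prftree[r]{($\linearcontract$-left)}
      {\delta_1 \otimes \delta_2 \vdash \delta_1' \otimes \delta_2'}
      {\delta_1' \otimes \delta_2' \vdash \delta_1 \otimes \delta_2}
      {\Phi, \delta_1' \otimes \delta_2' \vdash \sigma}
      {\Phi, (\delta_1 \otimes \delta_2) \linearcontract (\delta_1' \otimes \delta_2') \vdash \sigma}
  }
  {\Phi, \delta_1 \linearcontract \delta_1', \delta_2 \linearcontract \delta_2' \vdash \sigma}
\]
Both steps respect the syntactic categories of \MuACL: $\delta_1' \otimes \delta_2'$ is again a $\delta$, and $(\delta_1 \otimes \delta_2) \linearcontract (\delta_1' \otimes \delta_2')$ is a $\theta$, so ($\linearcontract$-left) and then ($\linearcontract$-split) are both legal, and the resulting conclusion coincides with that of the original derivation.

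It then remains to supply the three premises of the merged ($\linearcontract$-left) from the five premises of the original. I would obtain $\delta_1 \otimes \delta_2 \vdash \delta_1' \otimes \delta_2'$ from $\delta_1 \vdash \delta_1'$ and $\delta_2 \vdash \delta_2'$ by one ($\otimes$-right) followed by one ($\otimes$-left); symmetrically $\delta_1' \otimes \delta_2' \vdash \delta_1 \otimes \delta_2$ from the two backward entailments; and $\Phi, \delta_1' \otimes \delta_2' \vdash \sigma$ from $\Phi, \delta_1', \delta_2' \vdash \sigma$ by a single ($\otimes$-left). All of these use only rules of $R_3$, and the rewriting introduces no (Cut), matching the side conditions used by the neighbouring lemmata.

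I do not expect a genuine obstacle here: the argument is a routine permutation of the contractual and tensor rules. The only point requiring care is the opening remark that ($\linearcontract$-left) outputs a $\delta$ rather than a $\theta$, which rules out a dependency between the two applications and pins down the general shape; once that is fixed, the merge along $\otimes$ is well typed and the reconstruction of the three premises is immediate.
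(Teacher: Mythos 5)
Your proof is correct and takes essentially the same route as the paper's: the identical local rewrite with ($\linearcontract$-split) at the bottom and a single merged ($\linearcontract$-left) above it, whose first two premises are assembled from the original four by ($\otimes$-right) followed by ($\otimes$-left), exactly as in the paper's auxiliary derivations $\Pi$ and $\Pi'$. Your preliminary observation that the two applications must consume distinct contracts (since ($\linearcontract$-left) leaves behind a $\delta$, not a $\theta$), and the explicit ($\otimes$-left) step recovering the third premise $\Phi, \delta_1' \otimes \delta_2' \vdash \sigma$ (which the paper elides), are welcome extra care but do not change the argument.
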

\begin{proof}
	The derivation
	\[
	\prftree[r]{($\linearcontract$-left)}
	{\delta \vdash \delta'}
	{\delta' \vdash \delta}
	{
		\prftree[r]{($\linearcontract$-left)}
		{\delta'' \vdash \delta'''}
		{\delta''' \vdash \delta''}
		{
			\prftree[noline]
			{\Phi, \delta', \delta''' \vdash \sigma}
		}
		{\Phi, \delta', \delta'' \linearcontract \delta''' \vdash \sigma}
	}
	{\Phi, \delta \linearcontract \delta', \delta'' \linearcontract \delta''' \vdash \sigma}
	\]
	can be rewritten as
	\[
	\prftree[r]{($\linearcontract$-split)}
	{
		\prftree[r]{($\linearcontract$-left)}
		{
			\prftree[noline]
			{\Pi}
			{\delta \otimes \delta'' \vdash \delta' \otimes \delta'''}
		}
		{
			\prftree[noline]
			{\Pi'}
			{\delta' \otimes \delta''' \vdash \delta \otimes \delta''}
		}
		{
			\prftree[noline]
			{\Phi, \delta', \delta''' \vdash \sigma}
		}
		{\Phi, \delta \otimes \delta'' \linearcontract \delta' \otimes \delta''' \vdash \sigma}
	}
	{\Phi, \delta \linearcontract \delta', \delta'' \linearcontract \delta''' \vdash \sigma}
	\]
	where $\Pi$ is 
	\[
	\prftree[r]{($\Delta$-$\otimes$-left)}
	{
		\prftree[r]{($\otimes$-right)}
		{\delta \vdash \delta' }
		{\delta'' \vdash \delta'''}
		{\delta, \delta'' \vdash \delta' \otimes \delta'''}
	}
	{\delta \otimes \delta'' \vdash \delta' \otimes \delta'''}
	\]
	and, similarly, $\Pi'$ is
	\[
	\prftree[r]{($\Delta$-$\otimes$-left)}
	{
		\prftree[r]{($\otimes$-right)}
		{\delta' \vdash \delta }
		{\delta''' \vdash \delta''}
		{\delta', \delta''' \vdash \delta \otimes \delta'' }
	}
	{\delta' \otimes \delta'''\vdash \delta \otimes \delta'' }
	\]
\end{proof}

We also define normal forms for \MuACLs.
\begin{definition}[\MuACLs\ Normal Proofs]\label{def:normalforms}
	A \MuACLs\ proof for an initial sequent is \emph{normal} if 
	it can be decomposed in either form, where 
	$\Pi_i$ and $\Pi_i'$ only use rules in $R_i$.
	\begin{center}
		\begin{tabular}{c}
			\begin{tabular}{c}
				\prfsummary[$\Pi_1$]
				{
						\prftree
						{\Pi_5}
						{\Delta, \Sigma \vdash \sigma}
				}
				{\Omega, \Xi, \Sigma \vdash \sigma}\\[0.2cm]
				\textit{normal form 1}
			\end{tabular}
			
			\begin{tabular}{c}
				\hspace{0.5cm}	
				\prfsummary[$\Pi_1$]
				{
						\prfsummary[$\Pi_2$]
						{
							\prftree[r]
							{($\linearcontract$-left)}
							{\Pi_3}
							{\Pi_3'}							
							{
								\prftree
								{\Pi_5}
								{\Delta, \delta, \Sigma \vdash \sigma}
							}
							{\theta, \Delta, \Sigma \vdash \sigma}
						}
						{
							{\Theta, \Delta, \Sigma \vdash \sigma}
						}
				}
				{\Omega, \Xi, \Sigma \vdash \sigma}\\[0.2cm]
				\textit{normal form 2}
			\end{tabular}
		\end{tabular}
	\end{center}
\end{definition}

The following result prove that \MuACLs\ proofs in normal form correctly characterize validity of initial sequents.
\begin{restatable}[\MuACLs\ Normal Proofs]{theorem}{normalforms}\label{thm:col-normal-ncr-s}
	For any $\Omega, \Sigma, \sigma$, the initial sequent $\Omega; \Sigma \vdash \sigma$ is valid in 
	\MuACLs\ if and only if a normal proof\  $\,\Pi$ exists for $\Omega; \Sigma \vdash \sigma$.
\end{restatable}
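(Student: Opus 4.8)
The plan is to prove the two implications separately. The ($\Leftarrow$) direction is immediate: a normal proof is in particular a \MuACLs\ proof, so its existence witnesses validity of $\Omega; \Sigma \vdash \sigma$. All the content lies in ($\Rightarrow$), for which I would take an arbitrary \MuACLs\ proof $\Pi$ of the initial sequent $\Omega, \Xi, \Sigma \vdash \sigma$ and reshape it, using only the reordering Lemmata~\ref{thm:move1}--\ref{thm:move4}, into one of the two normal forms. Since every application of those lemmata preserves the invariant ``uses (Cut) only if the original does'', and since $(\text{Cut}) \in R_5$ but $(\text{Cut}) \notin R_1 \cup R_2 \cup R_3$, all occurrences of (Cut) will end up confined to the $\Pi_5$ block, exactly as the \MuACLs\ normal forms permit, while $\Pi_1, \Pi_2, \Pi_3, \Pi_3'$ stay cut-free by construction. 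This confinement of cuts is essentially the only difference with respect to the cut-free statement (Theorem~\ref{thm:col-normal-ncr}).

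First I would apply Lemma~\ref{thm:move1} to push every $R_1$ rule below all the others, so that $\Pi = \Pi_1 \circ \Pi'$ where $\Pi_1$ is the bottom $R_1$-block and $\Pi'$ uses only $R_{2,5}$ rules. Because the $R_1$ rules are exactly those manipulating the non-linear and contract components, while no $R_{2,5}$ rule can usefully act on a $!$-formula (no branch of the trunk has a $!\phi$ on the right, so such a formula can be neither axiom-matched nor eliminated there), the boundary sequent is forced to have the shape $\Theta, \Delta, \Sigma \vdash \sigma$, the right-hand side being untouched since it is already $\sigma$. Next I would apply Lemma~\ref{thm:move2} inside $\Pi'$ to push all $(\linearcontract\text{-split})$ applications to the bottom, obtaining $\Pi' = \Pi_2 \circ \Pi''$ with $\Pi''$ a pure $R_5$ derivation, and then Lemma~\ref{thm:move3} to move every $(\linearcontract\text{-left})$ to the bottom of $\Pi''$, so that they form a contiguous stack.

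The remaining work is to collapse that stack to a single $(\linearcontract\text{-left})$, which I would do by induction on the number $n$ of $(\linearcontract\text{-left})$ instances occurring in the proof. Here one uses the observation that, because $R_{2,5}$ contains no weakening, every contract must be consumed, and the only consuming rule is $(\linearcontract\text{-left})$; hence $n = 0$ forces $\Theta = \emptyset$ and we land in normal form 1, while $n = 1$ already yields normal form 2 (the side premises $\delta \vdash \delta'$ and $\delta' \vdash \delta$ automatically use only $R_3$, by the syntactic-constraint remark preceding the lemmata). For $n \geq 2$ I would apply Lemma~\ref{thm:move4} to the bottom two $(\linearcontract\text{-left})$ instances, replacing them by a single $(\linearcontract\text{-left})$ sitting above a new $(\linearcontract\text{-split})$; re-applying Lemma~\ref{thm:move2} then sends the freshly created split down into $\Pi_2$, leaving a \MuACLs\ proof of the same initial sequent with $n-1$ applications of $(\linearcontract\text{-left})$, to which the induction hypothesis applies.

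The main obstacle I expect is the termination and bookkeeping of this collapsing phase. One must exhibit a measure that provably decreases: the count of $(\linearcontract\text{-left})$ applications works, since Lemma~\ref{thm:move4} trades two of them for one while the auxiliary re-applications of Lemmata~\ref{thm:move2} and~\ref{thm:move3} leave that count unchanged. In addition, each merge must be checked to keep the side premises within $R_3$ and, crucially, not to migrate a (Cut) out of the $\Pi_5$ block. Verifying these invariants, together with the forcing argument that pins the boundary sequent to $\Theta, \Delta, \Sigma \vdash \sigma$, is where the care is needed; the reshaping itself is routine once the four reordering lemmata are in hand.
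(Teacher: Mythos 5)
Your proposal is correct and takes essentially the same route as the paper's own proof: both apply Lemmata~\ref{thm:move1}, \ref{thm:move2}, \ref{thm:move3} and~\ref{thm:move4} in that order to reshape an arbitrary \MuACLs\ proof into one of the two normal forms, with (Cut) confined to the $\Pi_5$ block since each reordering lemma preserves the cut-usage invariant. Your explicit induction on the number of ($\linearcontract$-left) instances, with the re-application of Lemma~\ref{thm:move2} to push freshly created splits down, simply spells out the collapsing step that the paper compresses into a single appeal to Lemma~\ref{thm:move4}.
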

\begin{proof}
First, we apply~\autoref{thm:move1} and \ref{thm:move2}, obtaining $\Pi_1$, $\Pi_2$ and $\Pi'$, the remaining part of the proof.
Then, \autoref{thm:move3} allows to move all the uses of ($\linearcontract$-left) in the bottom of $\Pi'$, and, by~\autoref{thm:move4} we finally simplify them into a single application of the rule.
\end{proof}

We recover the following a corollary for the \MuACL\ fragment of \MuACLs.
\normalform*
\begin{proof}
Follows from~\autoref{thm:col-normal-ncr-s} and by noticing that the used lemmas guarantee that no (Cut) rule is introduced.
\end{proof}

\section{Decidability}\label{app:proofs}

We now prove our main result, i.e., that \MuACL\ and \MuACLs\ are decidable.
We first focus on \MuACLs, as the case for \MuACL\ can be derived easily.

By~\autoref{thm:col-normal-ncr}, we only consider normal proofs.
In the following, we verify if a proof in the normal form 1 exists for an initial sequent, then we show how to reduce the normal form 2 case 
to the normal form 1 case.

\subsection{Solving the Normal Form 1}

\begin{lemma}[Normal form 1 decidability]\label{thm:fstnfdecide}
	An always-terminating algorithm exists that decides if an initial sequent is provable in \MuACLs\ using a proof in the normal form 1.
\end{lemma}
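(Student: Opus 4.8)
The plan is to reduce provability by a normal form 1 proof to reachability in a Petri net, along the lines of Kanovich~\cite{Kanovich94}, and then to appeal to the decidability of Petri net reachability~\cite{PetriReach}.

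First I would read off the shape of a normal form 1 proof of an initial sequent $\Omega, \Xi, \Sigma \vdash \sigma$. Its lower part $\Pi_1$ uses only the rules in $R_1$, i.e.\ (Weak), (Cont), ($!$-left) and the $\otimes$-left rules acting on $\Omega, \Xi, \Theta$, and it turns the premise $\Delta, \Sigma \vdash \sigma$ into the conclusion. Because this premise has an empty $\Theta$ component, every banged contractual formula $!\theta$ of $\Omega$ must be discarded by (Weak): it cannot be kept, since ($!$-left) would leave a bare $\theta$ that no $R_1$ rule can subsequently remove. Hence $\Omega$ is inert and we may assume $\Omega=\emptyset$. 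The component $\Xi$, a tensor of formulas $!\delta$, is processed by ($\Xi$-$\otimes$-left), (Cont), (Weak) and ($!$-left); a routine induction then shows that the multisets $\Delta$ producible this way are exactly the finite multisets over the set of $\delta$-bundles occurring in $\Xi$, each bundle taken with an arbitrary multiplicity (this is where the $!$ is used).

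Next I would pin down the upper part $\Pi_5$, which derives $\Delta, \Sigma \vdash \sigma$ with the rules of $R_5$. Reading this derivation bottom-up from its root, no rule ever introduces a $\theta$ on the left-hand side, so ($\linearcontract$-left) can never fire; moreover the cut formula of every (Cut) is a state $\sigma_0$, so no $\theta$ or $\delta$ is created by cutting either. Thus $\Pi_5$ lives entirely in the multiplicative-exponential Horn fragment: the only effective left rules are ($\Delta$-$\otimes$-left), which splits a $\delta$-bundle into the individual linear implications $\res@\usr \multimap \res'@\usr'$ it contains, and ($\multimap$-left), which consumes one such implication, while the state atoms $\res@\usr$ act as tokens. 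I would then build the net $N$ whose places are the atoms $\res@\usr$ occurring in the sequent and whose transitions are the $\delta$-bundles of $\Xi$: a bundle $(\res_1@\usr_1 \multimap \res_1'@\usr_1') \otimes \cdots \otimes (\res_m@\usr_m \multimap \res_m'@\usr_m')$ becomes one transition consuming a token from each input place $\res_i@\usr_i$ and producing one at each output place $\res_i'@\usr_i'$. Since every bundle is banged, each transition fires unboundedly, matching the arbitrary multiplicities found above; reading $\Sigma$ as the initial marking and $\sigma$ as the target marking, Kanovich's correspondence gives that the initial sequent has a normal form 1 proof if and only if the target marking is reachable in $N$. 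As the net is computed effectively from the sequent and reachability is decidable~\cite{PetriReach}, the procedure always terminates.

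The step I expect to be delicate is the precise treatment of bundling: because a whole $\delta$-bundle carries a single $!$, all the linear implications inside one bundle are necessarily used the same number of times, so they cannot be modelled as independent Petri net transitions but must be fused into a single macro-transition that fires them simultaneously. Verifying that this fusion is faithful in both directions --- a firing sequence is turned into a sequence of ($\Delta$-$\otimes$-left)/($\multimap$-left) blocks and vice versa, with ($\otimes$-right) and the $\otimes$-left rules merely encoding and decoding markings as tensors, and with each (Cut) absorbed by the transitivity of reachability (equivalently, cut being admissible in this Horn fragment) --- is the technical heart of the argument; the remainder is routine bookkeeping within Kanovich's encoding.
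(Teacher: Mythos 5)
Your overall strategy coincides with the paper's: characterise $\Pi_1$ (all of $\Omega$ must be weakened away, and $\Delta$ consists of the $\delta$-bundles of $\Xi$ taken with arbitrary multiplicities), then reduce the upper derivation to Petri net reachability in the style of Kanovich~\cite{Kanovich94} and decide it with~\cite{PetriReach}. Those structural observations, including the remark that every cut formula is a state, are correct. The flaw sits exactly at the point you single out as the technical heart: fusing each bundle into one macro-transition that fires all its component implications \emph{simultaneously} is unsound in the presence of (Cut). Equal multiplicity of use does not imply simultaneity of use: within a \emph{single} copy of a bundle, the output of one component can be fed, via (Cut), into the input of another component of the same copy. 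Concretely, take $\Xi=\{\,!((a\multimap b)\otimes(b\multimap c))\,\}$, $\Sigma=a$ and $\sigma=c$, where $a,b,c$ stand for atoms $\res@\usr$. A normal form 1 proof exists: ($!$-left) gives $\Delta=\{(a\multimap b)\otimes(b\multimap c)\}$, ($\Delta$-$\otimes$-left) splits the bundle, and a (Cut) on $b$ combines $a\multimap b,\, a\vdash b$ with $b\multimap c,\, b\vdash c$ --- all rules of $R_5$. Your macro-transition, however, must consume a token from $a$ \emph{and} one from $b$ at once, so from the initial marking $\{a\}$ it is dead and your algorithm wrongly answers ``no''. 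This is not a corner case: such intra-bundle chaining is precisely the debt mechanism that (Cut) was added to support (cf.\ Lemma~\ref{thm:deltaexccmp} and Corollary~\ref{thm:correctcompletestar}), and the $\Xi'$ produced by Lemma~\ref{thm:col-sftoff} genuinely contains multi-implication bundles, so your net would misjudge exactly the sequents this lemma is later applied to.

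Note also that the opposite naive choice --- dropping the fusion and making each component an independent transition --- fails in the other direction, since linearity forbids discarding an unfired component: $!((a\multimap b)\otimes(c\multimap d)),\, a\vdash b$ has no normal form 1 proof (every copy of $c\multimap d$ in $\Delta$ must be consumed by ($\multimap$-left), as no right-hand side in $\Pi_5$ is ever a $\delta$), yet the independent net reaches $\{b\}$ by firing $a\multimap b$ alone. So your equal-count observation must be enforced \emph{without} imposing simultaneity, e.g.\ by a control gadget: one transition per bundle that deposits a permission token in a dedicated place for each of its components, component transitions that consume their permission token together with their input atom, and a target marking requiring all permission places empty; (Cut) is then absorbed by transitivity of reachability, as you anticipated. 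It is instructive that the paper's companion Lemma~\ref{thm:fstnfdecide-ncr} for the cut-free logic does the opposite surgery (splitting each atom into places $p_s,p_t$ to \emph{forbid} chaining): the admissibility of chaining is exactly what separates the two lemmas, and your fusion erases it for the one case --- \MuACLs\ --- where it must be kept.
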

\begin{proof}
	This result derives from a similar one by Kanovich~\cite{Kanovich94}, which is stated for a computational fragment of linear logic that coincides with our  sequents when $\Xi = \Theta = \emptyset$.
	Kanovich considers \emph{simple products}, i.e., linear conjunctions of atomic predicates (our $\sigma$); \emph{Horn-implications}, i.e., linear implications of simple products (our $\delta$); and \emph{!-Horn-implications}, i.e., Horn implications preceded by $!$ (our $\Xi$).
	Moreover, he defines \emph{!-Horn-sequents}, i.e., sequents with !-Horn-implications, Horn-implications and simple products as left parts and simple products as right part (our initial sequents).

	Finally, the problem of checking the validity of a !-Horn sequent is reduced in~\cite{Kanovich94} to reachability in Petri Nets, which can be decided using the algorithm proposed in~\cite{PetriReach}.
	Roughly, atomic proposition corresponds to places of the Petri Net, and linear implication to transitions.
	The number of tokens in a given place represents the occurrences of the corresponding atomic proposition, and changes according to linear implications that we can use ad libitum.
\end{proof}

We recover~\autoref{thm:fstnfdecide-ncr} as a special case of the Lemma above.
\fstnfdecidencr*
\begin{proof}
It suffices to adapt Kanovich's encoding by forbidding the outcome of a linear implication to be used in subsequent transitions.
For each atomic proposition $p$ we define two places of the Petri Net $p_s$ and $p_t$.
For each linear implication $\delta = \sigma \multimap \sigma'$ such that $!\delta$ is a subterm in $\Omega$, we define a transition in the Petri Net
which consumes the tokens from $p_s$, with $p \in \sigma$ and produces the ones for $p'_t$ for $p' \in \sigma'$.
Moreover, we add transitions from each $p_s$ to $p_t$ allowing atomic propositions to be taken as they are (through the (Ax) rule). 
Note that we can still use linear implications ad libitum, but we cannot reuse their outcome as an input for others linear implications.
\end{proof}

\subsection{Reducing the Normal Form from 2 to 1}

Consider a semiring module $M$ over the set of natural numbers $\mathbb{N}$ with subformulas of any \MuACL\ predicate $\phi$ as its basis (we can safely reduce to the finite set of the ones appearing in the given sequent we are considering).
We call the elements $\bar x$ of $M$ vectors for simplicity, and write $\bar x(\phi)$ for the natural number associated in $\bar x$ with the bases element $\phi$.
Given a set of vectors $A$, we let $\mathit{span}(A)$ be the set of all linear combinations of elements in $A$.

Given a $\delta$, we define the vector $\bar u_{\delta}$ associating each linear implication $\res@\usr \multimap \res@\usr$ with the number of its occurrences in $\delta$.
This correspondence is an isomorphism up-to commutativity and associativity of $\otimes$ and unity of $I$, and is extended to multisets $\Delta$.

Note that the derivation $\Pi_1$ of the normal form 1 essentially decides how many occurrences of each $\delta$ with $!\delta \in \Xi$ are taken for constructing $\Delta$.
Let $A_{\Xi}$ be the set of vectors $u_{\delta}$ with $!\delta \in \Xi$, then the following holds.
\begin{lemma}\label{thm:pi1fornf1}
A derivation $\Pi_1$ exists from $\Delta, \Sigma \vdash \sigma$ to $\Xi, \Sigma \vdash \sigma$ if and only if $\bar u_{\Delta} \in \mathit{span}(A_{\Xi})$.
\end{lemma}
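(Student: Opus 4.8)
The plan is to pin down exactly which leaf sequents $\Delta, \Sigma \vdash \sigma$ a derivation using only the rules of $R_1$ can reach from the root $\Xi, \Sigma \vdash \sigma$, and to match this set against $\mathit{span}(A_{\Xi})$. Since the lemma concerns normal form~1 we have $\Omega = \Theta = \emptyset$, so the only rules of $R_1$ that can fire are ($\Xi$-$\otimes$-left) together with the structural rules (Weak), (Cont) and ($!$-left) acting on the $!\delta$ formulas of $\Xi$; no rule of $R_1$ ever touches $\Sigma$, $\sigma$, or the $\Delta$ being assembled by stripping $!$'s, and a once-stripped $\delta$ is inert under $R_1$ (splitting it would need ($\Delta$-$\otimes$-left)$\in R_5$). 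The first step is therefore to put $\Pi_1$ into a canonical shape, in the spirit of the commutation lemmas \autoref{thm:move1}--\autoref{thm:move4}: reading root-to-leaf, push all ($\Xi$-$\otimes$-left) applications to the top (fully decomposing every $\xi \in \Xi$ into its atomic constituents $!\delta$), then all (Cont)/(Weak) applications (which freely set the multiplicity of each distinct $!\delta$ to an arbitrary $n_\delta \in \mathbb{N}$, with $n_\delta = 0$ realised by weakening), and finally all ($!$-left) applications (stripping every remaining $!\delta$ to $\delta$). As these rules act on disjoint formula occurrences the commutations are routine, and the conclusion is that the reachable leaves are exactly those whose left context is $\Delta = \biguplus_{!\delta \in \Xi} n_\delta \cdot \{\delta\}$ with $n_\delta \in \mathbb{N}$.

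Given this canonical shape the forward direction is immediate. If a $\Pi_1$ exists then $\Delta = \biguplus_{!\delta\in\Xi} n_\delta\cdot\{\delta\}$, and since $\bar u_{(\cdot)}$ is extended additively to multisets we get $\bar u_\Delta = \sum_{!\delta\in\Xi} n_\delta\, \bar u_\delta$. Each $\bar u_\delta$ with $!\delta \in \Xi$ is a generator of $A_{\Xi}$ and the coefficients $n_\delta$ lie in $\mathbb{N}$, so this is exactly a non-negative linear combination of $A_{\Xi}$; recalling that $M$ is a module over $\mathbb{N}$, this means $\bar u_\Delta \in \mathit{span}(A_{\Xi})$.

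For the converse I would run the construction backwards. Suppose $\bar u_\Delta \in \mathit{span}(A_{\Xi})$ and fix a witnessing decomposition $\bar u_\Delta = \sum_{!\delta\in\Xi} n_\delta\, \bar u_\delta$ with $n_\delta \in \mathbb{N}$. I then build $\Pi_1$ explicitly: decompose each $\xi \in \Xi$ into its atoms $!\delta$ via ($\Xi$-$\otimes$-left); for each distinct $\delta$ use (Cont) to produce $n_\delta$ copies of $!\delta$ and (Weak) to discard any surplus; and apply ($!$-left) to every copy. The resulting leaf is $\Delta^{*} = \biguplus_{!\delta\in\Xi} n_\delta\cdot\{\delta\}$, with $\bar u_{\Delta^{*}} = \sum n_\delta\, \bar u_\delta = \bar u_\Delta$ by construction.

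The one genuine subtlety, and the step I expect to require the most care, is that $\bar u_{(\cdot)}$ is a bijection on $\delta$-formulas only up to commutativity, associativity and the unit $I$ of $\otimes$, and once extended additively to multisets it forgets how atomic implications are grouped into formulas. Hence the converse yields a leaf $\Delta^{*}$ with $\bar u_{\Delta^{*}} = \bar u_\Delta$, which agrees with $\Delta$ only modulo this isomorphism rather than as a literal multiset. I would resolve this exactly as the surrounding development does, by reading the statement with sequents identified through $\bar u$: the identification is harmless because the $\Delta$ at the leaf of $\Pi_1$ is immediately consumed by the adjacent $\Pi_5$ (resp.\ $\Pi_4$) derivation, whose ($\Delta$-$\otimes$-left) and ($\otimes$-right) rules are insensitive to how the atomic implications of $\Delta$ are parenthesised. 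Everything else, namely the rule commutations and the observation that the number of copies admitted by (Cont)/(Weak) matches precisely the $\mathbb{N}$-coefficients available in $\mathit{span}$, is routine.
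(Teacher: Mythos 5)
Your proof is correct and takes essentially the same approach as the paper: its one-line ``induction on the applicable rules'' amounts precisely to your characterization of the $R_1$-reachable leaves as $\Delta = \biguplus_{!\delta \in \Xi} n_\delta \cdot \{\delta\}$ with $n_\delta \in \mathbb{N}$, matched against the $\mathbb{N}$-linear combinations constituting $\mathit{span}(A_{\Xi})$. The subtlety you flag in your last paragraph is genuine but is also how the paper handles it: the identification of $\Delta$ up to regrouping is licensed by the remark, stated just before the lemma, that $\bar u$ is ``an isomorphism up-to commutativity and associativity of $\otimes$ and unity of $I$'' extended to multisets, and it is indeed harmless downstream since $(\Delta$-$\otimes$-left$)$ is available in $\Pi_4$ and $\Pi_5$.
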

\begin{proof}
	By induction on the applicable rules.
\end{proof}

Consider now the normal form 2.
Given $\theta$ and $\Theta$, we let $\bar u_{\theta}$ and $\bar u_{\Theta}$ be the vectors associating each $\delta$ with the number of its occurrences in $\theta$ and $\Theta$ respectively.
Then, for a given $\Omega$, we define $B_{\Omega}$ 
as the set of vectors $u_{\theta}$ with $!\theta \in \Omega$.
\begin{lemma}\label{thm:pi1fornf2}
A derivation $\Pi_1$ exists from $\Theta, \Delta, \Sigma \vdash \sigma$ to $\Omega, \Xi, \Sigma \vdash \sigma$ if and only if 
$\bar u_{\Delta} \in \mathit{span}(A_{\Xi})$ and $\bar u_{\Theta} \in \mathit{span}(B_{\Omega})$.
\end{lemma}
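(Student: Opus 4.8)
The plan is to show that the derivation $\Pi_1$ factors into two independent sub-derivations, one handling the exponential processes supplied by $\Xi$ and one handling the exponential contracts supplied by $\Omega$, and then to invoke \autoref{thm:pi1fornf1} on each of them. The whole argument rests on the observation that the grammar cleanly separates the process world (the $\delta$ predicates, packaged as $!\delta$ inside $\xi$) from the contract world (the $\theta$ predicates, packaged as $!\theta$ inside $\omega$).

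First I would note that every rule of $R_1$ is local to exactly one of these two sorts and never touches $\Sigma$ or the succedent $\sigma$: (Weak), (Cont) and ($!$-left) fire on a single exponential formula, which is either a $!\delta$ or a $!\theta$; ($\Xi$-$\otimes$-left) only decomposes $\xi$ formulas; and ($\Omega$-$\otimes$-left), ($\Theta$-$\otimes$-left) only decompose $\omega$, resp.\ $\theta$, formulas. Since none of these rules creates a dependency between the two sorts, the steps of $\Pi_1$ acting on the $\Xi$ part commute with those acting on the $\Omega$ part (a small rule-permutation argument in the style of \autoref{thm:move1}), so $\Pi_1$ can be serialized. Projecting it onto each sort then yields a derivation $\Pi_1^{\Xi}$ from $\Delta, \Sigma \vdash \sigma$ to $\Xi, \Sigma \vdash \sigma$ and a derivation $\Pi_1^{\Omega}$ from $\Theta, \Sigma \vdash \sigma$ to $\Omega, \Sigma \vdash \sigma$, each with the other sort kept passive.

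For the ``only if'' direction I would apply \autoref{thm:pi1fornf1} to $\Pi_1^{\Xi}$ to get $\bar u_{\Delta} \in \mathit{span}(A_{\Xi})$, and apply the verbatim analogue for the contract sort to $\Pi_1^{\Omega}$ to get $\bar u_{\Theta} \in \mathit{span}(B_{\Omega})$. This analogue is proved by the same induction on the applicable rules: each dereliction of a $!\theta$ with $!\theta \in \Omega$ adds one copy of $\theta$ and hence contributes $\bar u_{\theta}$ to the accumulated vector, while (Weak) and (Cont) realize the nonnegative-integer coefficients; the only extra rule is ($\Theta$-$\otimes$-left), which merely regroups atomic contracts and therefore leaves $\bar u_{\Theta}$ unchanged. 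For the ``if'' direction, given both span conditions, \autoref{thm:pi1fornf1} and its contract analogue supply $\Pi_1^{\Xi}$ and $\Pi_1^{\Omega}$; composing them sequentially (first unfold $\Omega$ into $\Theta$ with $\Xi$ passive, then unfold $\Xi$ into $\Delta$ with $\Theta, \Delta$ passive) produces the required $\Pi_1$, since the two groups of rules do not interfere.

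The main obstacle I anticipate is justifying the factorization rigorously and matching the vector conditions against reachability of the \emph{specific} multisets $\Delta$ and $\Theta$ named in the statement. The latter relies on the correspondences $\delta \mapsto \bar u_{\delta}$ and $\theta \mapsto \bar u_{\theta}$ being isomorphisms only up to associativity, commutativity and unit of $\otimes$: the exact bracketing that $\Pi_1$ happens to produce is immaterial, and what is captured exactly by reachability is the atomic multiplicities recorded by $\bar u_{\Delta}$ and $\bar u_{\Theta}$. Once sequents are read modulo this identification, the two independent span conditions are both necessary and sufficient.
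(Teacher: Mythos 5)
Your proposal is correct and is essentially the paper's argument made explicit: the paper dispatches this lemma with a one-line ``by induction on the applicable rules,'' and your serialization of the $R_1$ steps into a process block and a contract block (legitimate because no rule in $R_1$ couples the two sorts), followed by \autoref{thm:pi1fornf1} and its verbatim contract-sort analogue, is just that induction organized modularly. You also make explicit the up-to-regrouping reading of $\Delta$ and $\Theta$ (identifying multisets with their atomic-multiplicity vectors, so that e.g.\ $\Delta=\{\delta_1\otimes\delta_2\}$ against $\Xi=\{!\delta_1,!\delta_2\}$ is not a counterexample) that the paper's statement tacitly relies on but never states --- a point in your write-up's favor.
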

\begin{proof}
	By induction on the applicable rules.
\end{proof}

It is straightforward to check that $\Pi_2$ exists if and only if $\theta$ is the result of collecting all the left and right parts of $\linearcontract$ in the formulas appearing in $\Theta$.
\begin{lemma}\label{thm:colapp-merge}
	A derivation that only uses ($\linearcontract$-split) exists from $\theta, \Delta, \Sigma \vdash \sigma$ to $\Theta, \Delta, \Sigma \vdash \sigma$
	if and only if $\theta = \delta \linearcontract \delta'$ with
	\[
	\delta = \bigotimes_{\delta_i \linearcontract \delta_i' \in \Theta} \delta_i \quad \text{ and } \quad \delta' = \bigotimes_{\delta_i \linearcontract \delta_i' \in \Theta} \delta_i'
	\]
\end{lemma}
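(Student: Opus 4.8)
The plan is to prove both implications by isolating the invariant that ($\linearcontract$-split) leaves the \emph{aggregate left part} and \emph{aggregate right part} of the contract context unchanged. First I would observe that ($\linearcontract$-split) touches only the $\Theta$-component of a sequent: the multisets $\Delta$, $\Sigma$ and the right-hand side $\sigma$ are carried untouched through every application (they sit inside the schema's $\Phi$ and $\sigma$). Hence the whole question reduces to how the multiset of contracts evolves, and throughout I work modulo the provable associativity, commutativity and unit ($I$) laws of $\otimes$, so that for a multiset $\Gamma$ of single contracts the expressions $L(\Gamma) = \bigotimes_{\delta_i \linearcontract \delta_i' \in \Gamma}\delta_i$ and $R(\Gamma) = \bigotimes_{\delta_i \linearcontract \delta_i' \in \Gamma}\delta_i'$ are well defined. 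Recall that in the situation of interest $\Theta$ consists only of single contracts $\delta_i \linearcontract \delta_i'$ (and is nonempty), since $\Pi_1$ has already eliminated all tensors from the $\Theta$-component via ($\Theta$-$\otimes$-left); this is what the notation $\delta_i \linearcontract \delta_i' \in \Theta$ records.

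For the invariant I would check that a single top-down application of ($\linearcontract$-split) preserves both $L$ and $R$: the premise contract $(\delta \otimes \delta'') \linearcontract (\delta' \otimes \delta''')$ contributes $\delta \otimes \delta''$ to the aggregate left part and $\delta' \otimes \delta'''$ to the aggregate right part, and after the split the two contracts $\delta \linearcontract \delta'$ and $\delta'' \linearcontract \delta'''$ contribute exactly $\delta \otimes \delta''$ and $\delta' \otimes \delta'''$ to the respective aggregates, so nothing changes up to $\otimes$-equivalence. For the ``only if'' direction I then induct on the number of applications. At the top the contract context is the singleton $\{\theta\}$, so $L(\{\theta\})$ and $R(\{\theta\})$ are the left and right parts of $\theta$; as the derivation reaches $\Theta$ at the bottom, the invariant forces $\theta$ to be a single contract whose left part equals $L(\Theta) = \bigotimes_\Theta \delta_i$ and whose right part equals $R(\Theta) = \bigotimes_\Theta \delta_i'$, which is precisely the claimed form. (If at least one split is applied it must act on $\theta$, the only contract present at the top, so $\theta$ is indeed a single contract; if no split is applied then $\Theta = \{\theta\}$ and the conclusion is immediate.)

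For the ``if'' direction I would construct the derivation explicitly by induction on $n = |\Theta|$, peeling off one contract per step. If $n = 1$ the derivation is empty. If $\Theta = \{\delta_1 \linearcontract \delta_1'\} \uplus \Theta'$, I apply ($\linearcontract$-split) once at the top, instantiating it with $\delta := \delta_1$, $\delta' := \delta_1'$, $\delta'' := \bigotimes_{\Theta'}\delta_i$, $\delta''' := \bigotimes_{\Theta'}\delta_i'$; its premise matches $\theta$ up to $\otimes$-equivalence, and its conclusion is the sequent $\delta_1 \linearcontract \delta_1',\, (\bigotimes_{\Theta'}\delta_i)\linearcontract(\bigotimes_{\Theta'}\delta_i'),\, \Delta, \Sigma \vdash \sigma$. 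The induction hypothesis, applied to the merged contract for $\Theta'$ (with $\delta_1 \linearcontract \delta_1'$ merely carried along inside the schema's $\Phi$), yields a ($\linearcontract$-split)-only derivation down to $\delta_1 \linearcontract \delta_1', \Theta', \Delta, \Sigma \vdash \sigma = \Theta, \Delta, \Sigma \vdash \sigma$, completing the construction.

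I expect the only real friction to be the bookkeeping around associativity, commutativity and the unit $I$ of $\otimes$: the definitions of $L$ and $R$, and the matching of each rule instance against $\theta$, are correct only up to this equivalence, so one must be sure that reordering the tensor factors (and the order in which contracts are peeled off) does not affect the outcome. This is harmless, since those laws are provable in \MuACL\ and the tensor product over a multiset is therefore well defined modulo them; it is a point to state carefully rather than a genuine combinatorial obstacle.
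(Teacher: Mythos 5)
Your proof is correct and takes essentially the same route as the paper, whose one-sentence argument is precisely your invariant: ($\linearcontract$-split) preserves the multiset of $\delta$'s occurring to the left of $\linearcontract$ and the multiset of those occurring to the right. Your explicit peeling construction for the ``if'' direction and the bookkeeping modulo associativity, commutativity and unit of $\otimes$ merely spell out details the paper leaves implicit.
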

\begin{proof}
	Derives from the fact that ($\linearcontract$-split) preserves both the multisets of instances of $\delta$ that appears to the left of $\linearcontract$ and the multiset of the instances of $\delta$ that appears to the right of $\linearcontract$.
\end{proof}

Moreover, $\Pi_3$ and $\Pi_3'$ exist if and only if $u_{\delta} = u_{\delta'}$.
\begin{lemma}\label{thm:deltaproof}
For any $\delta, \delta'$, $\delta \vdash \delta'$ and $\delta' \vdash \delta$ both hold if and only if $u_{\delta} = u_{\delta'}$.
\end{lemma}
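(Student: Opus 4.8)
The plan is to prove the biconditional by treating its two directions separately, relying on the observation already made in the text that any proof of a sequent of the form $\delta \vdash \delta'$ can use only the rules of $R_3$ --- that is, (Ax), ($\otimes$-right), ($\Delta$-$\otimes$-left), ($I$-left) and ($I$-right) --- because the syntactic shape of $\delta$ and $\delta'$ blocks every other rule. The ``if'' direction will be a direct construction of proofs from the hypothesis, while the ``only if'' direction will rest on a conservation invariant reflecting the absence of weakening and contraction on linear formulas.

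For the ``if'' direction I would assume $\bar u_\delta = \bar u_{\delta'}$, which by the definition of $\bar u$ means that $\delta$ and $\delta'$ are tensor products of exactly the same multiset of atomic linear implications $\res@\usr \multimap \res@\usr'$, differing only in the bracketing of $\otimes$ and in occurrences of the unit $I$. First I would decompose the antecedent $\delta$ with repeated ($\Delta$-$\otimes$-left) and erase its units with ($I$-left), reducing it to the flat multiset of its atoms; then discharge each atom by (Ax); and finally recombine the leaves into $\delta'$ using ($\otimes$-right), inserting units where needed with ($I$-right). This yields $\delta \vdash \delta'$, and the symmetric construction (swapping the roles of $\delta$ and $\delta'$, which is legitimate since the hypothesis is symmetric) yields $\delta' \vdash \delta$.

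For the ``only if'' direction I would note that it suffices to assume just one of the two sequents provable, say $\delta \vdash \delta'$. The key step is to set up the invariant that every sequent $\Phi \vdash \phi$ derivable with $R_3$ rules alone satisfies $\sum_{\psi \in \Phi} \bar u_\psi = \bar u_\phi$, and then prove it by induction on the derivation. The base cases are (Ax), where antecedent and succedent are identical and so have equal $\bar u$, and ($I$-right), where both sides are the zero vector; in the inductive step, ($I$-left) leaves both sides unchanged, ($\Delta$-$\otimes$-left) preserves the antecedent sum because $\bar u_{\delta \otimes \delta'} = \bar u_\delta + \bar u_{\delta'}$, and ($\otimes$-right) is handled by adding the two premise equations, using additivity of $\bar u$ on both antecedents and succedents. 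Instantiating the invariant at $\delta \vdash \delta'$ then forces $\bar u_\delta = \bar u_{\delta'}$.

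I expect this lemma to be routine rather than genuinely hard; the only points requiring care are phrasing the invariant so that it is truly closed under the induction, and treating the unit $I$ consistently as contributing the zero vector on either side. A minor conceptual remark I would add is that, by this conservation argument, a single provable direction already forces the two vectors to coincide, so the two-sided hypothesis in the statement is stronger than strictly necessary for the forward implication --- but it is precisely the condition that the subderivations $\Pi_3$ and $\Pi_3'$ of the normal form are meant to witness.
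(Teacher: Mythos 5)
Your proposal is correct and matches the paper's approach: the paper proves this lemma simply ``by induction on the applicable rules'' (having already noted that only the $R_3$ rules can occur in a proof of $\delta \vdash \delta'$), and your conservation invariant $\sum_{\psi \in \Phi} \bar u_\psi = \bar u_\phi$ together with the explicit flatten-and-rebuild construction for the converse is exactly that induction spelled out. Your side remark that a single provable sequent already forces $\bar u_\delta = \bar u_{\delta'}$ is also accurate, and consistent with why the statement nonetheless carries the two-sided hypothesis witnessed by $\Pi_3$ and $\Pi_3'$.
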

\begin{proof}
	By induction on the applicable rules.
\end{proof}
We let $L(\bar x)$ and $R(\bar x)$ be linear functions returning the linear implications appearing to the left and right of $\linearcontract$ respectively of $\bar x$.
\begin{align*}
L(\bar{x}) &= \bar y \ \ \text{ s.t. } \\
	 &\ \bar y (\res@\usr \multimap \res'@\usr') = 
\sum_{\theta = \delta \linearcontract \delta'} \bar{x}(\theta) \cdot \bar u_{\delta}(\res@\usr \multimap \res'@\usr')\\
R(\bar{x}) &= \bar y \ \ \text{ s.t. } \\
	&\ \bar y (\res@\usr \multimap \res'@\usr') = 
\sum_{\theta = \delta \linearcontract \delta'} \bar{x}(\theta) \cdot \bar u_{\delta'}(\res@\usr \multimap \res'@\usr')
\end{align*}

\begin{lemma}
The derivations $\Pi_3$ and $\Pi_2$ from $\Delta, \delta, \Sigma \vdash \sigma$ to $\Theta, \Delta, \Sigma \vdash \sigma$ if and only if 
$L(\bar u_\Theta) - R(\bar u_\Theta) = 0$ and $R (\bar u_\Theta) = u_\delta$.
\end{lemma}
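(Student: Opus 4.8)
The plan is to read both equations directly off the shape of the normal-form-2 fragment sitting between $\Delta, \delta, \Sigma \vdash \sigma$ and $\Theta, \Delta, \Sigma \vdash \sigma$: this fragment is exactly one application of ($\linearcontract$-left), whose side premises are $\Pi_3$ and $\Pi_3'$, followed by the ($\linearcontract$-split) derivation $\Pi_2$. First I would invoke \autoref{thm:colapp-merge}: the split derivation $\Pi_2$ from a single-contract sequent $\theta, \Delta, \Sigma \vdash \sigma$ up to $\Theta, \Delta, \Sigma \vdash \sigma$ exists if and only if $\theta$ is the merge $\theta = \delta_m \linearcontract \delta_m'$ with $\delta_m = \bigotimes_{\delta_i \linearcontract \delta_i' \in \Theta} \delta_i$ and $\delta_m' = \bigotimes_{\delta_i \linearcontract \delta_i' \in \Theta} \delta_i'$. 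Rewriting this in the semiring module, the very definitions of $L$ and $R$ as the left- and right-hand collectors of $\linearcontract$ give $\bar u_{\delta_m} = L(\bar u_\Theta)$ and $\bar u_{\delta_m'} = R(\bar u_\Theta)$.

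Next I would match this merged contract against the ($\linearcontract$-left) rule, which introduces some $\theta = \delta_1 \linearcontract \delta_2$ from side premises $\delta_1 \vdash \delta_2$ and $\delta_2 \vdash \delta_1$ and a main premise carrying $\delta_2$. Comparing the main premise with $\Delta, \delta, \Sigma \vdash \sigma$ forces $\delta_2 = \delta$ and $\delta_1 = \delta_m$, so $\delta$ must coincide (up to the $\otimes$-isomorphism modulo associativity, commutativity and unit) with the right-merge $\delta_m'$; this is precisely $R(\bar u_\Theta) = \bar u_\delta$. The two side premises $\Pi_3, \Pi_3'$ are then $\delta_m \vdash \delta_m'$ and $\delta_m' \vdash \delta_m$, and by \autoref{thm:deltaproof} both hold if and only if $\bar u_{\delta_m} = \bar u_{\delta_m'}$, i.e. $L(\bar u_\Theta) = R(\bar u_\Theta)$, which is the condition $L(\bar u_\Theta) - R(\bar u_\Theta) = 0$.

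Assembling the two implications yields both directions. For necessity, any such derivation fixes $\theta$ as the merge by \autoref{thm:colapp-merge}, forces $\delta = \delta_m'$ through the placement of $\delta_2$ in ($\linearcontract$-left), and requires the two side derivations, so the two vector equations hold. For sufficiency, I would set $\theta = \delta_m \linearcontract \delta_m'$: the equation $R(\bar u_\Theta) = \bar u_\delta$ identifies $\delta_m'$ with $\delta$, so ($\linearcontract$-left) produces exactly $\theta, \Delta, \Sigma \vdash \sigma$ from $\Delta, \delta, \Sigma \vdash \sigma$; the equation $L(\bar u_\Theta) = R(\bar u_\Theta)$ supplies $\Pi_3, \Pi_3'$ via \autoref{thm:deltaproof}; and \autoref{thm:colapp-merge} then furnishes $\Pi_2$ up to $\Theta, \Delta, \Sigma \vdash \sigma$.

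The argument is essentially bookkeeping over the two preceding lemmas, so I do not expect a deep obstacle. The one point that demands care is the correct pairing of the \emph{right}-hand merge $\delta_m'$, rather than the left one $\delta_m$, with the propagated assumption $\delta$; this is dictated solely by which component of $\theta$ occurs in the main premise of ($\linearcontract$-left). I would also be careful to pass explicitly through the $\otimes$-isomorphism when moving between the equality of encodings $\bar u_{\delta_m'} = \bar u_\delta$ and the syntactic identification of $\delta_m'$ with $\delta$ that the rule requires.
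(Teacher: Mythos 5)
Your proposal is correct and takes essentially the same route as the paper: its own (one-sentence) proof likewise combines \autoref{thm:colapp-merge} and \autoref{thm:deltaproof} with the observation that the first two premises of ($\linearcontract$-left) are the left and right parts of $\theta$ while the $\delta$ in the main premise is the right part, which is exactly your identification of $\delta$ with the right-hand merge $\delta_m'$. You have merely spelled out the bookkeeping (and the $\otimes$-isomorphism caveat) that the paper leaves implicit.
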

\begin{proof}
By \autoref{thm:colapp-merge} and \ref{thm:deltaproof}, and by observing that the first two premises of ($\linearcontract$-left) coincide with the subformulas at the left and right of $\linearcontract$ in $\theta$, with $\delta$ in the third premise being the right part.
\end{proof}

By applying the previous results, we can conclude the following. 
\sftoff*
\begin{proof}
By combining the previous results, the derivations $\Pi_1$, $\Pi_2$ and $\Pi_3$ exist if and only if
$\bar u_{\Delta} \in \mathit{span}(A_{\Xi})$, $\bar u_{\delta} \in \mathit{span}(R(B_{\Omega}))$ with coefficients $\bar{x}$ such that $L(B_{\Omega}) - R(B_{\Omega}) (\bar{x}) = 0$.
Thanks to the Hilbert basis theorem~\cite{HB}, we can represent the set of solutions $\bar{x}$ of the equation above as $\mathit{span}(H_{\Omega})$ for some $H_\Omega$ that can be computed using~\cite{computeHB}.
Then, we can reformulate our conditions in terms of just linear implications, requiring $\bar u_{\delta} \in \mathit{span}(R(B_{\Omega}) \cdot H_\Omega)$.
Since $R(B_{\Omega}) \cdot H_\Omega$ is a linear combinations of vectors that associates with weights different from $0$ only propositions that are linear implications, it must hold that $R(B_{\Omega}) \cdot H_\Omega = A_{\Xi'}$ for some $\Xi'$.

The result then follows by~\autoref{thm:pi1fornf1}
\end{proof}

\begin{theorem}[\MuACLs\ decidability]\label{thm:MuACLsdecide}
	An always-terminating algorithm exists that decides if an initial sequent is valid in \MuACLs.
\end{theorem}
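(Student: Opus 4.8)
The plan is to decide validity of an arbitrary initial sequent $\Omega, \Xi, \Sigma \vdash \sigma$ by combining the normalization result with the two decidability ingredients already in place. First, by \autoref{thm:col-normal-ncr-s}, the sequent is valid in \MuACLs\ if and only if it admits a normal proof, and every normal proof is either in normal form 1 or in normal form 2. Hence the algorithm runs two independent, terminating checks --- one per normal form --- and answers ``valid'' exactly when at least one of them succeeds. The separate normal form 1 check is genuinely needed, since the reduction of \autoref{thm:col-sftoff} concerns normal form 2 derivations and would not, by itself, witness a purely non-contractual proof (one where $\Omega$ is merely weakened away).

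The normal form 1 check is immediate: I would invoke \autoref{thm:fstnfdecide} directly on $\Omega, \Xi, \Sigma \vdash \sigma$, which is guaranteed to halt. For normal form 2, I would first apply \autoref{thm:col-sftoff} to compute, from $\Omega$ and $\Xi$, the finite multiset $\Xi'$ of $!$-Horn implications; by that lemma a normal form 2 derivation from a top sequent $\Delta, \Sigma \vdash \sigma$ down to $\Omega, \Xi, \Sigma \vdash \sigma$ exists if and only if a normal form 1 derivation from the same $\Delta, \Sigma \vdash \sigma$ down to $\Xi', \Sigma \vdash \sigma$ exists. Appending the shared top subproof $\Pi_5$ of $\Delta, \Sigma \vdash \sigma$ to both sides, this shows that a full normal form 2 proof of $\Omega, \Xi, \Sigma \vdash \sigma$ exists if and only if the initial sequent $\Xi', \Sigma \vdash \sigma$ has a normal form 1 proof. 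Since $\Xi', \Sigma \vdash \sigma$ is again an initial sequent, its normal form 1 provability is decided by a second application of \autoref{thm:fstnfdecide}.

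Termination of the whole procedure follows because each branch terminates: the Petri-net reachability argument underlying \autoref{thm:fstnfdecide} always halts, and the Hilbert-basis computation producing $\Xi'$ in \autoref{thm:col-sftoff} is effective. Correctness is then the conjunction of soundness --- both reductions preserve provability in \MuACLs\ --- and completeness, which rests entirely on \autoref{thm:col-normal-ncr-s} guaranteeing that normal forms 1 and 2 exhaust the normal proofs of an initial sequent. I would also note that \autoref{thm:MuACLdecide1}, the decidability of the cut-free \MuACL, drops out as the special case in which none of the invoked lemmas introduces the (Cut) rule, exactly as their statements record.

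The genuinely hard work lives in the two lemmas I am assuming rather than in this assembly, so the main obstacle to guard against here is the bookkeeping of the reduction. I must make sure that the top sequent $\Delta, \Sigma \vdash \sigma$ (equivalently, the $\Pi_5$ subproof) is shared unchanged between the normal form 2 and normal form 1 derivations, so that reattaching $\Pi_5$ on both sides is legitimate, and that the computed $\Xi'$ consists solely of $!$-Horn implications, so that $\Xi', \Sigma \vdash \sigma$ is a bona fide initial sequent within the scope of \autoref{thm:fstnfdecide}. Both points are routine given the explicit shape of $\Xi'$ emerging from the Hilbert-basis construction.
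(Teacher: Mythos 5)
Your proof is correct and follows essentially the same route as the paper: normalize via \autoref{thm:col-normal-ncr-s}, decide normal form 1 directly by \autoref{thm:fstnfdecide}, and reduce normal form 2 to normal form 1 over the computed $\Xi'$ via \autoref{thm:col-sftoff}. You are merely more explicit than the paper's two-line proof about running both checks disjunctively and about reattaching the shared top subproof $\Pi_5$, which is a faithful unpacking rather than a different argument.
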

\begin{proof}
	\autoref{thm:col-sftoff} reduces the problem of finding a proof in the normal form 2 to finding a proof in the normal form 1, which is proved decidable in~\autoref{thm:fstnfdecide}.
\end{proof}

\MuACLsdec*
\begin{proof}
	The result follows directly by \autoref{thm:col-sftoff}, which reduces the problem of finding a proof in the normal form 2 to finding a proof in the normal form 1, which is proved decidable in~\autoref{thm:fstnfdecide-ncr}.
	Notice that the (Cut) rule is not introduced in the reduction, as $\Pi_4$ remains the same.
\end{proof}

\section{Encoding}\label{app:compile}

We address separately correctness and completeness.

\subsection{Correctness}

Hereafter, we only consider proofs of initial sequents that are the result of our encoding. 
We start by noticing that the normal forms for such proofs have specific constraints, as shown in the following lemma.
\begin{lemma}[MuAC normal form]\label{thm:specnormal}
	A proof $\Pi$ for a sequent $\biguplus_{\usr \in \Usr} \semden{\pol_\coal}, \semden{\MuACstate} \vdash \semden{\MuACstate'}$ in \MuACLs\ or \MuACL\ is \emph{normal} if and only if it can be decomposed in either form
	\begin{center}
		\begin{tabular}{c}
			\begin{tabular}{c}
				\prfsummary[$\Pi_1$]
				{
					\prftree
					{\Pi_4}
					{\semden{\MuACstate} \vdash \semden{\MuACstate'}}
				}
				{\biguplus_{\usr \in \Usr} \semden{\pol_\coal}, \semden{\MuACstate} \vdash \semden{\MuACstate'}}\\[0.2cm]
				\textit{normal form 1}
			\end{tabular}
\\

\\

			\begin{tabular}{c}
				\prfsummary[$\Pi_1$]
				{
						\prfsummary[$\Pi_2$]
						{
							\prftree[r]{($\linearcontract$-left)}
							{\Pi_3}
							{\Pi_3'}
							{
								\prftree
								{\Pi_4}
								{\delta, \semden{\MuACstate} \vdash \semden{\MuACstate'}}
							}
							{\theta, \semden{\MuACstate} \vdash \semden{\MuACstate'}}
						}
						{
							{\Theta, \semden{\MuACstate} \vdash \semden{\MuACstate'}}
						}
				}
				{\biguplus_{\usr \in \Usr} \semden{\pol_\coal}, \semden{\MuACstate} \vdash \semden{\MuACstate'}}\\[0.2cm]
				\textit{normal form 2}
			\end{tabular}
		\end{tabular}
	\end{center}
\end{lemma}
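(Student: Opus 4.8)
The plan is to obtain this specialized characterization directly from the general normalization result, instantiated at the particular shape of encoded sequents. First I would invoke \autoref{thm:col-normal-ncr-s} (respectively \autoref{thm:col-normal-ncr} for the cut-free fragment), which guarantees that a valid proof of the encoded sequent $\biguplus_{\coal \in 2^{\Usr}} \semden{\pol_\coal}, \semden{\MuACstate} \vdash \semden{\MuACstate'}$ can be rewritten into one of the two general shapes of \autoref{def:normalforms}. It then remains to show that, once the left context and the right-hand side are those produced by the encoding, those two generic shapes collapse exactly into the two forms stated here; the converse inclusion is immediate, since each displayed shape is a particular instance of the corresponding general normal form.

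The crux is to read off the syntactic category of each component of the encoded sequent. By \autoref{def:compile}, each $\semden{\pol_\coal}$ is a tensor of formulas $!(\semden{\exc'} \linearcontract \semden{\exc})$, i.e.\ a multiset of $!\theta$ propositions; hence the whole left policy context $\biguplus_\coal \semden{\pol_\coal}$ is of type $\Omega$, with $\Xi = \Theta = \Delta = \emptyset$ in the conclusion. Moreover $\semden{\MuACstate}$ is a tensor of atoms $\res@\usr$, so it constitutes precisely the $\Sigma$ component, and $\semden{\MuACstate'}$ is a state proposition $\sigma$. Thus the conclusion is an initial sequent in which the $\Xi$ part is empty.

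The key observation is that, because $\Xi = \emptyset$, no linear implication can ever enter the $\Delta$ context while executing $\Pi_1$: the only rule that moves a $\delta$ into the linear part is dereliction (($!$-left)) applied to some $!\delta \in \Xi$, and there is none. Consequently the $\Delta$ appearing in the inner premise is empty. In the first generic shape this forces the inner premise to be exactly $\semden{\MuACstate} \vdash \semden{\MuACstate'}$, proved by the inner derivation $\Pi_4$ over the untouched $\Sigma = \semden{\MuACstate}$, which is the stated \emph{normal form 1}. In the second shape, $\Pi_1$ can only derelict and contract the $!\theta$ formulas of $\Omega$, yielding a $\Theta$; then $\Pi_2$ merges $\Theta$ into a single $\theta = \delta \linearcontract \delta'$ by ($\linearcontract$-split) (\autoref{thm:colapp-merge}), the ($\linearcontract$-left) rule introduces that single $\delta$, and $\Pi_4$ proves $\delta, \semden{\MuACstate} \vdash \semden{\MuACstate'}$, which is precisely \emph{normal form 2}. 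Note that here the unique $\delta$ reaching $\Pi_4$ is the one produced by the contractual rule, not one derived from $\Xi$.

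I expect the main obstacle to be the bookkeeping that justifies $\Delta$ remaining empty throughout $\Pi_1$ and that the only linear implication handed to $\Pi_4$ in form 2 is the contractually introduced $\delta$: this requires tracking how each syntactic category evolves across the $R_1$ and $R_2$ rules and leaning on the reordering lemmas (\autoref{thm:move1}–\autoref{thm:move4}) that underlie the general normalization. A secondary point to state carefully is the reconciliation between the \MuACLs\ setting, where the inner derivation formally ranges over $R_5$ (including (Cut)), and the $\Pi_4$-shaped inner part written in the statement, since for encoded sequents the inner transformation acts only on the state $\semden{\MuACstate}$ together with at most the single contractual $\delta$.
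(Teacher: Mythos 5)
Your proposal is correct and takes essentially the same route as the paper: the paper's proof likewise invokes the general normal-form result and then observes that, since the encoding $\semden{\pol_\coal}$ consists only of $!\theta$ formulas and contains no $!\delta$, the $\Xi$ component is empty and therefore $\Delta$ (resp.\ $\Delta'$ in the second form) must be empty, collapsing the generic shapes of \autoref{def:normalforms} into the two stated ones. Your extra bookkeeping about $\Pi_2$, the single contractual $\delta$, and the $R_5$ versus $R_4$ inner derivation is consistent with the paper but goes beyond its two-sentence argument.
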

\begin{proof}
	Take the normal form 1 of~\autoref{def:normalforms}.
	We must show that $\Delta = \emptyset$.
	Clearly, this is the case, since no $!\delta$ is in $\biguplus_{\usr \in \Usr} \semden{\pol_\coal}$.
	
	For the same reason, in a proof in the normal form 2 of~\autoref{def:normalforms}, $\Delta'$ must be equal $\emptyset$.
\end{proof}

Proofs in the normal form 1 are trivial because they correspond to proofs where the state does not change (and thus both the correctness and completeness in this case follow trivially).
Hence in the following we will only consider proofs in the normal form 2, i.e., the ones corresponding to nonempty exchanges.

\begin{lemma}\label{thm:acceptderivetofair}
	Consider a proof as in~\autoref{thm:specnormal}.
	If $\Pi_1, \Pi_2$, $\Pi_3$ and $\Pi_3'$ exist, then $\delta = \semden{\exc}$ for some agreement $\exc$.
\end{lemma}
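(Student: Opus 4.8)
The plan is to read the three derivations $\Pi_1,\Pi_2,\Pi_3,\Pi_3'$ through the algebraic characterisations already proved in \autoref{thm:pi1fornf2}, \autoref{thm:colapp-merge} and \autoref{thm:deltaproof}, and then to reassemble an agreement out of the policy rules they implicitly select. Throughout, write $\Omega$ for the encoded policies on the left of the sequent (the image under $\semden{\cdot}$ of all the $\pol_\coal$), and recall that by \autoref{thm:specnormal} we already know $\Xi=\Delta=\emptyset$, so the only nontrivial left component is the multiset $\Theta$ of contracts produced by $\Pi_1$. I will not need $\Pi_4$: the formula $\delta$ named in \autoref{thm:specnormal} is the right-hand component of the contract merged by $\Pi_2$, and both its shape and the agreement property are fixed by $\Pi_1,\Pi_2,\Pi_3,\Pi_3'$ alone.

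First I would identify the contracts. Since $\Xi=\Delta=\emptyset$, \autoref{thm:pi1fornf2} says that $\Pi_1$ exists iff $\bar u_\Theta\in\mathit{span}(B_\Omega)$, so $\Theta$ is a finite multiset of contracts $\theta_i$, each an element of some $\semden{\pol_{\coal_i}}$, i.e.\ of the shape $\theta_i=\semden{p_i}\linearcontract\semden{o_i}$ coming from a rule $o_i\triangleleft p_i\in\pol_{\coal_i}$ (I record, for each occurrence, the coalition $\coal_i$ from which it is drawn). Applying \autoref{thm:colapp-merge} to $\Pi_2$ collapses $\Theta$ into the single contract $\theta=\delta_L\linearcontract\delta_R$ with $\delta_L=\bigotimes_i\semden{p_i}$ and $\delta_R=\bigotimes_i\semden{o_i}$; since $\semden{\cdot}$ turns $\uplus$ of transfers into $\otimes$ of linear implications up to commutativity, associativity and unit (as noted after \autoref{def:compile}), this reads $\delta_L=\semden{\biguplus_i p_i}$ and $\delta_R=\semden{\biguplus_i o_i}$. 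Finally $\Pi_3$ and $\Pi_3'$ prove $\delta_L\vdash\delta_R$ and $\delta_R\vdash\delta_L$, so by \autoref{thm:deltaproof} we obtain $\bar u_{\delta_L}=\bar u_{\delta_R}$, that is $\biguplus_i p_i=\biguplus_i o_i$; call this common exchange $\exc$. As the top premise of $(\linearcontract\text{-left})$ carries the right component $\delta_R$, we get $\delta=\delta_R=\semden{\exc}$, which is the required form.

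It then remains to verify that $\exc$ is an agreement in the sense of \autoref{def:fairexc}. For each coalition $\coal\subseteq\Usr$ let $I_\coal=\{\,i\mid\coal_i=\coal\,\}$ (empty when $\coal$ contributed no contract), and set $\exc_\coal:=\biguplus_{i\in I_\coal}p_i$ and $\exc_\coal':=\biguplus_{i\in I_\coal}o_i$. For each $i\in I_\coal$ we have $o_i\triangleleft p_i\in\pol_\coal$, hence $\pol_\coal\vDash o_i\triangleleft p_i$ by clause (2) of \autoref{def:accepted-exc}; combining these with clause (3), and using clause (1) when $I_\coal=\emptyset$, yields $\pol_\coal\vDash\exc_\coal'\triangleleft\exc_\coal$. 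Because the sets $I_\coal$ partition the index set, $\biguplus_\coal\exc_\coal=\biguplus_i p_i=\exc$ and $\biguplus_\coal\exc_\coal'=\biguplus_i o_i=\exc$, which is exactly the defining condition of an agreement. The main obstacle I anticipate is bookkeeping: keeping the offer/payoff orientation of $\triangleleft$ aligned with the left/right orientation of $\linearcontract$ under the encoding of \autoref{def:compile}, and arguing that membership in $\mathit{span}(B_\Omega)$ genuinely licenses a coalition-labelled multiset of actual policy rules (with natural-number multiplicities) rather than an abstract linear combination; once that is pinned down, the rest is a direct application of the three cited lemmas together with the rules of \autoref{def:accepted-exc}.
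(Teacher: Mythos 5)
Your proof is correct and takes essentially the same route as the paper's: the paper likewise reads off from $\Pi_1$ that every contract in $\Theta$ is the encoding of a policy rule of some coalition, groups these by coalition and composes acceptance via the clauses of \autoref{def:accepted-exc}, merges $\Theta$ into a single $\theta$ via $\Pi_2$, and uses the first two premises of ($\linearcontract$-left) to conclude that the left and right parts coincide as multisets of transfers, yielding $\delta = \semden{\exc}$ with $\exc$ an agreement. Your write-up is simply more explicit than the paper's terse argument---routing through \autoref{thm:pi1fornf2}, \autoref{thm:colapp-merge} and \autoref{thm:deltaproof}, pinning down the $\triangleleft$ versus $\linearcontract$ orientation (where the paper's own proof is in fact sloppy), noting that the $\mathbb{N}$-span genuinely yields natural-number multiplicities of rules, and covering non-participating coalitions with clause (1)---all steps the paper leaves implicit.
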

\begin{proof}
	First, note that $\semden{\exc'} \linearcontract \semden{\exc} \in \Theta$ implies $\exc \triangleleft \exc' \in \pol_\coal$ for some $\coal$.
	Let $\Theta_\coal$ be the multiset of all the $\semden{\exc'} \linearcontract \semden{\exc}$ obtained from $\coal$, then $\pol_\coal \vdash \exc_\coal \triangleleft \exc_\coal'$, with $\exc_\coal$ ($\exc_\coal'$ resp.) the tensor product of all the left (right resp.) parts of $\Theta_\coal$.
	Hence, the decomposition of $\Theta$ is such that every $\coal$ accepts $\exc \triangleleft \exc'$ such that $\semden{\exc \triangleleft \exc'} = \Theta_\coal$.
	Finally, the left and right parts of $\theta$ coincides thanks to the left premises of ($\linearcontract$-left).
\end{proof}

\begin{lemma}\label{thm:deltaexctofair}
	For any $\MuACstate, \MuACstate'$ and $\exc$, if $\semden{\exc}, \semantics{\MuACstate} \vdash \semantics{\MuACstate'}$ is valid in \MuACL\ then $\MuACstate \xrightarrow{\exc} \MuACstate'$.
\end{lemma}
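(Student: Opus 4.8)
The plan is to read conditions (1) and (2) of \autoref{def:ee} directly off the shape of a cut-free proof of the encoded sequent. First I would observe that the antecedent $\semden{\exc}, \semden{\MuACstate}$ mentions only linear implications (the $\delta$ fragment) and resource atoms (the $\sigma$ fragment): it contains no $!$-formula and no $\linearcontract$-formula. Hence the rules (Weak), (Cont), ($!$-left), ($\linearcontract$-left) and ($\linearcontract$-split) can never fire, so by \autoref{thm:col-normal-ncr} the only available normal form is normal form~1 with $\Omega = \Xi = \emptyset$; the whole proof therefore reduces to a derivation $\Pi_4$ of $\semden{\exc}, \semden{\MuACstate} \vdash \semden{\MuACstate'}$ using only ($\multimap$-left), ($\otimes$-right), ($\otimes$-left), (Ax), ($I$-left) and ($I$-right). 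Applying ($\otimes$-left) exhaustively splits $\semden{\exc}$ into one copy of $\res@\usr \multimap \res@\usr'$ per transfer (with multiplicity $\exc(\usr \xmapsto{\res} \usr')$) and $\semden{\MuACstate}$ into atoms $\res@\usr$ with multiplicity $\MuACstate(\usr)(\res)$, while ($\otimes$-right) splits the goal $\semden{\MuACstate'}$ into its atomic conjuncts.

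The decisive point is the restricted form of the rule $\frac{\Sigma \vdash \sigma}{\Sigma,\ \sigma \multimap \sigma' \vdash \sigma'}$, whose context $\Sigma$ ranges over state multisets only. This forbids feeding the consequent of one implication into the antecedent of another: to discharge an implication $\res@\usr \multimap \res@\usr'$ one must already have the atom $\res@\usr$ available among the state tokens, and its consequent $\res@\usr'$ then discharges exactly one atomic conjunct of the goal. I would make this precise by induction on $\Pi_4$, proving the invariant that a sequent $E, S \vdash G$ --- with $E$ a multiset of atomic-to-atomic implications, $S$ a multiset of resource atoms and $G$ a tensor of resource atoms --- is provable if and only if one can pair each implication $p \multimap q \in E$ with a distinct source atom $p$ drawn from $S$, so that the multiset of unpaired source atoms together with the multiset of consequents $q$ equals the multiset of atoms of $G$. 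Linearity of the $\delta$-context forces every implication to be consumed exactly once, and the states-only side condition guarantees that no produced atom is ever reused, so this pairing is a genuine bijection (a one-shot Petri-net firing whose outputs land directly in the final marking).

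Granting the invariant, conditions (1) and (2) follow by additive bookkeeping over each pair $\usr, \res$. The implications of shape $\res@\usr \multimap \res@\usr'$ number $\sum_{\usr'} \exc(\usr \xmapsto{\res} \usr')$, and each is paired with a distinct source atom $\res@\usr$; since the only such source atoms are the $\MuACstate(\usr)(\res)$ copies coming from $\semden{\MuACstate}$, we obtain $\MuACstate(\usr)(\res) \geq \sum_{\usr'} \exc(\usr \xmapsto{\res} \usr')$, i.e.\ condition (1). For condition (2), the number of copies of $\res@\usr$ in the goal $\semden{\MuACstate'}$ is $\MuACstate'(\usr)(\res)$, and by the bijection these copies are discharged either by a leftover source atom $\res@\usr$ --- there are $\MuACstate(\usr)(\res) - \sum_{\usr'} \exc(\usr \xmapsto{\res} \usr')$ of them --- or by the consequent of an implication $\res@\usr'' \multimap \res@\usr$ --- there are $\sum_{\usr''} \exc(\usr'' \xmapsto{\res} \usr)$ of them. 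Summing the two contributions yields exactly equation (2), whence $\MuACstate \xrightarrow{\exc} \MuACstate'$.

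The hard part is the structural invariant of the second step: showing that the restricted ($\multimap$-left) rule genuinely rules out chaining and that the pairing it induces is a bijection. The induction must track how ($\otimes$-right) partitions the implication/atom context among its two subproofs and verify that the states-only side condition is preserved downward, so that a consequent atom is never smuggled back in as an implication's input --- this is precisely what makes condition (1) hold as an inequality on the \emph{original} allocation rather than the weaker debt-style bound of the extended system. Once the matching is established, the remaining counting is routine.
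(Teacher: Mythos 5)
Your proof is correct and takes essentially the same route as the paper's: a structural induction over the derivation, restricted to the rules that can fire given the sequent's shape, with the states-only context of ($\multimap$-left) supplying the no-chaining property from which conditions (1) and (2) of \autoref{def:ee} are read off --- your pairing invariant merely makes explicit the bookkeeping that the paper's terser case analysis (trivial (Ax) and ($\otimes$-left) cases, disjoint-union composition for ($\otimes$-right), and the observation that $\semantics{\MuACstate} \vdash \semantics{\MuACstate'}$ forces $\MuACstate = \MuACstate'$ in the ($\multimap$-left) case) leaves implicit. One minor caveat: \autoref{thm:col-normal-ncr} is stated for \emph{initial} sequents (empty $\Delta$), so it does not literally apply to $\semden{\exc}, \semantics{\MuACstate} \vdash \semantics{\MuACstate'}$, but this is harmless since your rule-exclusion observation (no $!$ and no $\linearcontract$ formula can ever appear in the context) already yields the restriction to the $\multimap$/$\otimes$ fragment directly, which is exactly how the paper proceeds.
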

\begin{proof}
By induction on the rules of \MuACL.
We can ignore rules that are not applicable due to the form of the sequent.
The property trivially holds for (Ax) with $\exc = \emptyset$ and for ($\otimes$-left).
Consider the rule ($\otimes$-right), the property follows by the induction hypothesis since 
$\st_1 \xrightarrow{\exc_1} \st_1'$ and $\st_2 \xrightarrow{\exc_2} \st_2'$ implies $(\st_1 \uplus \st_2) \xrightarrow{\exc_1 \uplus \exc_2} (\st_1' \uplus \st_2')$.
Consider the rule ($\multimap$-left), and note that $\semantics{\st} \vdash \semantics{\st'}$ implies $\st = \st'$.
Then, the result follows by definition of $\Delta_{\exc}$.
\end{proof}

\begin{lemma}\label{thm:validitytofairness}
	For any $\st, \st'$, if $\biguplus_{\usr \in \Usr} \semden{\pol_\coal}, \semden{\st} \vdash \semden{\st'}$ is valid in \MuACL, then 
	$\st \rightarrow_{ok} \st'$.
\end{lemma}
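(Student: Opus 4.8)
The plan is to establish this lemma, which is the correctness (``only if'') direction of \autoref{th:fair-exchange}, by pushing a valid encoded sequent into normal form and then reading an agreement transition off the shape of that normal proof. First I would invoke \autoref{thm:col-normal-ncr} together with the encoding-specific characterization of \autoref{thm:specnormal}: since $\biguplus_{\coal \in 2^{\Usr}} \semden{\pol_\coal}, \semden{\st} \vdash \semden{\st'}$ is valid in \MuACL, it admits a normal proof, and because the left-hand side of an encoded sequent contains no $!\delta$ component, that proof must be either in normal form 1 or in normal form 2 with $\Delta = \emptyset$. The argument then splits along these two cases, and in each case the key is that the bottom part $\Pi_1$ is irrelevant for the constructed exchange while the residual subproof $\Pi_4$ carries the transition.

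In the normal form 1 case the whole derivation reduces, below $\Pi_1$, to a proof $\Pi_4$ of $\semden{\st} \vdash \semden{\st'}$ in which no contractual rule is applied. Here I take $\exc = \emptyset$: its encoding $\semden{\emptyset}$ is the empty tensor $I$, so $\semden{\emptyset}, \semden{\st} \vdash \semden{\st'}$ is literally $\semden{\st} \vdash \semden{\st'}$, and \autoref{thm:deltaexctofair} yields the transition $\st \xrightarrow{\emptyset} \st'$ (which forces $\st = \st'$, as expected for a proof that leaves the state unchanged). Since $\emptyset$ is trivially an agreement by \autoref{def:fairexc} --- choosing $\exc_\coal = \exc_\coal' = \emptyset$ for every coalition and using clause $(1)$ of \autoref{def:accepted-exc} --- this gives $\st \rightarrow_{ok} \st'$.

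In the normal form 2 case the derivations $\Pi_1, \Pi_2, \Pi_3$ and $\Pi_3'$ all exist, so by \autoref{thm:acceptderivetofair} the proposition $\delta$ appearing in the third premise of the final $(\linearcontract$-left$)$ is of the form $\delta = \semden{\exc}$ for some \emph{agreement} $\exc$. The remaining subproof $\Pi_4$ is then exactly a proof of $\semden{\exc}, \semden{\st} \vdash \semden{\st'}$ in \MuACL, and \autoref{thm:deltaexctofair} turns it into a genuine transition $\st \xrightarrow{\exc} \st'$ of the exchange environment. As $\exc$ is an agreement, this is precisely $\st \rightarrow_{ok} \st'$, which closes the case and the proof.

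I expect essentially no new difficulty at this level: the lemma is pure glue, since the two substantive steps --- extracting an agreement from the contractual part of the proof (\autoref{thm:acceptderivetofair}) and extracting a well-formed environment transition from the linear part (\autoref{thm:deltaexctofair}) --- are already in hand. The only point deserving care is confirming that the normal-form decomposition cleanly separates these two parts, and in particular that $\Delta = \emptyset$, so that $\Pi_4$ sees exactly $\semden{\exc}$ and nothing else to the left of $\semden{\st}$; this is exactly what \autoref{thm:specnormal} guarantees for sequents in the image of the encoding, so the combination is routine.
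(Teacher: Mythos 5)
Your proof is correct and follows essentially the same route as the paper: the paper's own argument is just the one-line composition of \autoref{thm:acceptderivetofair} and \autoref{thm:deltaexctofair} after restricting to the encoding-specific normal forms of \autoref{thm:specnormal}, with the normal form 1 case dismissed as the trivial identity transition exactly as you handle it via the empty agreement. The only cosmetic imprecision is that $\semden{\emptyset}, \semden{\st} \vdash \semden{\st'}$ is $I, \semden{\st} \vdash \semden{\st'}$ rather than literally $\semden{\st} \vdash \semden{\st'}$, but an application of ($I$-left) closes that gap immediately.
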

\begin{proof}
Follows from~\autoref{thm:acceptderivetofair} and~\ref{thm:deltaexctofair}
\end{proof}

\begin{lemma}\label{thm:deltaexctofaircomp}
	For any $\MuACstate, \MuACstate'$ and $\exc$, if $\semden{\exc}, \semantics{\MuACstate} \vdash \semantics{\MuACstate'}$ is valid in \MuACLs\ then $\MuACstate \xdashrightarrow{\exc} \MuACstate'$.
\end{lemma}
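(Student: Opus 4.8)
The plan is to mirror the inductive argument of \autoref{thm:deltaexctofair}, but now the derivation may additionally use the (Cut) rule and the conclusion is the weaker debt transition $\MuACstate \xdashrightarrow{\exc} \MuACstate'$ rather than $\MuACstate \xrightarrow{\exc} \MuACstate'$. The first observation that makes this manageable is that, for genuine allocations, $\xdashrightarrow{}$ collapses onto condition $(2)$ of \autoref{def:ee} alone: combining $(2)$ with $(1')$ gives $\MuACstate'(\usr)(\res)\geq 0$, and conversely, since $\MuACstate'$ is an allocation we always have $\MuACstate'(\usr)(\res)\geq 0$, so $(1')$ is automatically entailed by $(2)$. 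Hence it suffices to show that validity of $\semden{\exc},\semantics{\MuACstate}\vdash\semantics{\MuACstate'}$ forces the conservation equation $(2)$ relating $\MuACstate$, $\exc$ and $\MuACstate'$.

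To push the induction cleanly through (Cut), I would generalise the statement to arbitrary resource maps $\mathfrak m\colon\Usr\to\mathcal M(\Res)$ (dropping the conservation constraint of an allocation), encoded by $\semden{\mathfrak m}=\bigotimes_{\usr,\res}(\res@\usr)^{\mathfrak m(\usr)(\res)}$, and prove: whenever $\semden{\exc},\semden{\mathfrak m}\vdash\semden{\mathfrak m'}$ is valid in \MuACLs, the flow equation $\mathfrak m'(\usr)(\res)=\mathfrak m(\usr)(\res)+\sum_{\usr''}\exc(\usr''\xmapsto{\res}\usr)-\sum_{\usr'}\exc(\usr\xmapsto{\res}\usr')$ holds for all $\usr,\res$. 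This generalisation is genuinely needed, because a cut splits $\semantics{\MuACstate}$ into a proper sub-part, so the intermediate cut state is typically not conservative. Equivalently, I track for each atomic type $\res@\usr$ the \emph{balance} of the left-hand side (atoms of that type, plus implications producing it, minus implications consuming it) and show it equals the multiplicity of $\res@\usr$ on the right. Since the sequents arising here contain neither $!$ nor $\linearcontract$ and always carry an atom product as their right-hand side, reading the rules bottom-up shows that (Weak), (Cont), ($!$-left), ($\linearcontract$-left), ($\linearcontract$-split) can never apply and the cut formula is itself an atom product $\semden{\mathfrak s}$; so the only cases are (Ax), ($I$-left), ($I$-right), ($\otimes$-left), ($\otimes$-right), ($\multimap$-left) and (Cut).

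The non-Cut cases are exactly as in \autoref{thm:deltaexctofair}: (Ax) and the $I$-rules are immediate, ($\otimes$-left) leaves the balance unchanged, and ($\otimes$-right) and ($\multimap$-left) respect it by additivity and by the fact that firing $\res@\usr\multimap\res@\usr'$ removes one unit of $\res@\usr$ and adds one of $\res@\usr'$. The decisive new case is (Cut), with premises $\Phi\vdash\sigma$ and $\Phi',\sigma\vdash\phi$ and conclusion $\Phi,\Phi'\vdash\phi$, where $\Phi,\Phi'$ split $\semden{\exc},\semden{\mathfrak m}$ into $\semden{\exc_1},\semden{\mathfrak m_1}$ and $\semden{\exc_2},\semden{\mathfrak m_2}$ and the cut formula is $\semden{\mathfrak s}$. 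The induction hypothesis on the left premise gives the flow equation for $(\mathfrak m_1,\exc_1,\mathfrak s)$, and on the right premise (where $\semden{\mathfrak s}$ enters as extra atoms on the left) for $(\mathfrak m_2\uplus\mathfrak s,\exc_2,\mathfrak m')$; summing the two, the contribution of $\mathfrak s$ cancels and we obtain the flow equation for $(\mathfrak m,\exc,\mathfrak m')$. Instantiating at the root with the genuine allocations $\MuACstate,\MuACstate'$ yields condition $(2)$, and then $(1')$ follows for free from $\MuACstate'(\usr)(\res)\geq 0$, giving $\MuACstate\xdashrightarrow{\exc}\MuACstate'$.

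The main obstacle is conceptual rather than computational: one must recognise that, unlike the Cut-free argument of \autoref{thm:deltaexctofair} which additionally certifies condition $(1)$ (an agent can only give resources it currently holds), the (Cut) rule is precisely the mechanism that drops $(1)$ while preserving the conservation equation $(2)$ — the cut state $\mathfrak s$ models resources produced by one part of the exchange and consumed in sequence by another, i.e.\ a temporary debt. The balance argument is robust to this because it is purely about conservation and is closed under the sequential composition that (Cut) performs; the care needed is only in setting up the generalisation to non-conservative resource maps so that the intermediate cut states are covered, and in confirming that $(1')$ is genuinely implied by target non-negativity so that no stronger premise on the subderivations is required.
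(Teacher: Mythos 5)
Your proof is correct and takes essentially the same route as the paper's: induction on the \MuACLs\ derivation, with the non-Cut cases inherited from the cut-free lemma and (Cut) handled as sequential composition of the two sub-exchanges — your summing of flow equations with the intermediate state $\mathfrak{s}$ cancelling is exactly the composability of $\xdashrightarrow{}$ that the paper invokes. If anything, your rendering is more careful than the paper's: the paper displays the (Cut) case with premises $\semantics{\exc_1}, \semantics{\st} \vdash \semantics{\st'}$ and $\semantics{\exc_2}, \semantics{\st'} \vdash \semantics{\st''}$, tacitly assuming the context split keeps the whole state on the left and that the cut formula encodes a full allocation, whereas your explicit generalisation to non-conservative resource maps, together with the observation that condition $(1')$ comes for free from condition $(2)$ and non-negativity of the target, handles arbitrary context splits rigorously.
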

\begin{proof}
By induction on the rules of \MuACLs.
We can ignore rules that are not applicable due to the form of the sequent.
For ($\Sigma$-Ax), ($\otimes$-left), ($\otimes$-right) and ($\multimap$-left) the result follows from~\autoref{thm:deltaexc}.

Consider the rule 
\[
\prftree[r]
{(Cut)}
{\semantics{\exc_1}, \semantics{\st} \vdash \semantics{\st'}\quad}
{\semantics{\exc_2}, \semantics{\st'} \vdash \semantics{\st''}}
{\semantics{\exc_1}, \semantics{\exc_2}, \semantics{\st} \vdash \semantics{\st''}}
\]
By the induction hypothesis, we know that $\st \xdashrightarrow{\exc_{1}} \st'$ and $\st' \xdashrightarrow{\exc_{2}} \st''$.
The result then derives from noticing that $\st \xdashrightarrow{\exc_1 \uplus \exc_{2}} \st''$
$\semantics{\exc_1} \otimes \semantics{\exc_2} = \semantics{\exc_1 \uplus \exc_2}$ and by applying ($\otimes$-left).
\end{proof}

\begin{lemma}\label{thm:validitytofairnesscomp}
	For any $\st, \st'$, if $\biguplus_{\usr \in \Usr} \semden{\pol_\coal}, \semden{\st} \vdash \semden{\st'}$ is valid in \MuACLs, then 
	$\st \dashrightarrow_{ok} \st'$.
\end{lemma}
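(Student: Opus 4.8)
The plan is to follow verbatim the two-step decomposition used for the debt-free analogue \autoref{thm:validitytofairness}, replacing only its appeal to \autoref{thm:deltaexctofair} by the debt-aware \autoref{thm:deltaexctofaircomp}. Concretely, starting from a valid sequent $\biguplus_{\usr \in \Usr} \semden{\pol_\coal}, \semden{\st} \vdash \semden{\st'}$ in \MuACLs, I want to extract an agreement $\exc$ that labels a debt-transition $\st \xdashrightarrow{\exc} \st'$; since $\exc$ is an agreement, this is exactly $\st \dashrightarrow_{ok} \st'$ by definition of $\dashrightarrow_{ok}$.

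First I would normalize the proof. By \autoref{thm:col-normal-ncr-s} the valid sequent admits a normal \MuACLs\ proof, and by the encoding-specific shape of \autoref{thm:specnormal} this proof is in normal form 1 or normal form 2 with $\Delta = \emptyset$ (no $!\delta$ occurs in $\biguplus_\coal \semden{\pol_\coal}$). Normal form 1 is the degenerate case: the derivation proves $\semden{\st} \vdash \semden{\st'}$ directly, which forces $\st = \st'$ and is realised by the empty agreement. So the real content is in normal form 2.

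For normal form 2, I would split the derivation at the contractual boundary. The upper block $\Pi_1, \Pi_2, \Pi_3, \Pi_3'$ is precisely the contractual machinery analysed in \autoref{thm:acceptderivetofair}, which yields that the collapsed contract $\theta$ reduces to $\delta = \semden{\exc}$ for some agreement $\exc$. Crucially, that lemma reasons only about the $\linearcontract$-rules and their side premises, which are untouched by the addition of (Cut), so it applies unchanged in \MuACLs. The lower block is a derivation of $\semden{\exc}, \semden{\st} \vdash \semden{\st'}$ using the rules of $R_5$ (possibly including (Cut)); \autoref{thm:deltaexctofaircomp} converts exactly such a derivation into a debt-transition $\st \xdashrightarrow{\exc} \st'$. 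Composing the two halves gives $\st \dashrightarrow_{ok} \st'$, as required.

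The main obstacle --- indeed the only place this differs from the debt-free proof --- is ensuring that (Cut) is confined to the state-transformation block and never contaminates the agreement extraction. This is secured upstream by the reordering lemmas \autoref{thm:move1}--\autoref{thm:move3}, which push every (Cut) below the $\linearcontract$-rules into the $\Pi_5$ part and introduce no new (Cut). Hence the agreement $\exc$ is identified by the identical argument as in \MuACL, while (Cut) serves only to glue together the sub-exchanges whose disjoint union is $\exc$ --- exactly the debt-incurring composition captured by condition (1$'$) and witnessed in the (Cut) case of \autoref{thm:deltaexctofaircomp}.
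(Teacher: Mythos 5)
Your proposal is correct and takes essentially the same route as the paper: the paper's proof of this lemma is exactly the composition of \autoref{thm:acceptderivetofair} (agreement extraction from the contractual block, which is indeed unaffected by (Cut)) with \autoref{thm:deltaexctofaircomp} (converting the possibly Cut-bearing state-transformation block into a debt transition), with normalization via \autoref{thm:col-normal-ncr-s} and \autoref{thm:specnormal} and the trivial normal-form-1 case ($\st = \st'$, realised by the empty agreement) handled just as you describe. Your only imprecision is terminological: the reordering lemmas place (Cut) \emph{above} ($\linearcontract$-left) in the proof tree, i.e.\ inside the premise-side derivation $\Pi_5$, which is what you intend.
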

\begin{proof}
Follows from~\autoref{thm:acceptderivetofair} and~\ref{thm:deltaexctofaircomp}
\end{proof}

\subsection{Completeness and Size of \MuACL\ Proofs}\label{sec:fairnesstovalidity}
We start with an auxiliary result stating a general property of \MuACL\ and \MuACLs, stating that they are monotone with respect to $\Omega$ and $\Xi$.
\begin{proposition}\label{thm:monotonicity}
	For each $\Omega, \Xi, \Phi, \phi, \Phi', \phi'$, 
	if $\Phi \vdash \phi$ is derivable from $\Phi' \vdash \phi'$ in \MuACL\ (or \MuACLs), then also
	$\Omega, \Xi, \Phi \vdash \phi$ is derivable from $\Omega, \Xi, \Phi' \vdash \phi'$ in \MuACL\ (or \MuACLs).
\end{proposition}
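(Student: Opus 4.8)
The plan is to prove the statement by induction on the structure of the derivation $\mathcal{D}$ witnessing ``$\Phi \vdash \phi$ is derivable from $\Phi' \vdash \phi'$'', i.e.\ on the proof tree whose root is $\Phi \vdash \phi$ and whose unique open leaf is the assumption $\Phi' \vdash \phi'$ (all other leaves being genuine axioms). The goal is to transform $\mathcal{D}$ into a derivation $\mathcal{D}'$ with open leaf $\Omega, \Xi, \Phi' \vdash \phi'$ and root $\Omega, \Xi, \Phi \vdash \phi$, by threading the banged context $\Omega, \Xi$ along the spine that connects the open leaf to the root, leaving every closed side subproof untouched. Since $\Omega$ and $\Xi$ consist only of propositions of the form $!\theta$ and $!\delta$, the argument works uniformly for \MuACL\ and \MuACLs.

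In the base case $\mathcal{D}$ is the open leaf alone, so $\Phi \vdash \phi$ and $\Phi' \vdash \phi'$ coincide and $\mathcal{D}'$ is the single leaf $\Omega, \Xi, \Phi' \vdash \phi'$. For the inductive step, let $r$ be the last rule of $\mathcal{D}$; the open assumption lies in the subtree of exactly one premise of $r$, the \emph{spine premise}. The essential observation is that every rule of Figures~\ref{fig:MuACLfull} and~\ref{fig:starcut} carries a schematic side-context (the $\Phi$, $\Sigma$, or $\Phi,\Phi'$ occurring in the rule), and that the context of the conclusion is the multiset union of the contexts of the premises. Hence I would apply the induction hypothesis to the spine premise only --- enlarging both its assumption and its conclusion by $\Omega, \Xi$ --- and then re-apply $r$ with the remaining premises unchanged, so that the union of contexts in the conclusion acquires exactly the extra $\Omega, \Xi$.

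Concretely, for the single-premise rules (Weak), (Cont), ($!$-left), ($I$-left), ($\otimes$-left), ($\multimap$-left) and ($\linearcontract$-split) the enlargement is immediate, as the principal formula and active context are disjoint from $\Omega, \Xi$. For the context-splitting rules ($\otimes$-right) and (Cut) --- the latter only in \MuACLs\ --- the open assumption sits in one branch; applying the induction hypothesis there and keeping the closed branch fixed yields a conclusion whose context is enlarged by precisely $\Omega, \Xi$, no matter which branch is the spine, because the conclusion context is the union of the two premise contexts. The axioms (Ax) and ($I$-right) have no premises, so they can never host the open assumption nor occur on the spine.

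The one point needing care --- and the expected main obstacle --- is the rule ($\linearcontract$-left), whose first two premises $\delta \vdash \delta'$ and $\delta' \vdash \delta$ are \emph{context-restricted}: they range over pure tensor products of linear implications and, by the syntactic constraints on sequents, admit only the rules of $R_3$, so no banged formula may legally be inserted into them. My resolution is to argue that these two premises are always \emph{closed} subproofs: as already noted in the normalization appendix, every proof of a sequent $\delta \vdash \delta'$ uses only $R_3$ rules and is axiom-generated, and the rule treats them as side-conditions rather than as part of the main flow. Consequently the open leaf $\Phi' \vdash \phi'$ necessarily lies in the third premise $\Phi, \delta' \vdash \sigma$; the induction hypothesis applies only there, producing $\Omega, \Xi, \Phi, \delta' \vdash \sigma$, and re-applying ($\linearcontract$-left) with the two untouched side premises delivers the desired $\Omega, \Xi, \Phi, \delta \linearcontract \delta' \vdash \sigma$. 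This confirms that the enlargement only ever meets the generic contexts of the rules, completing the induction for both logics.
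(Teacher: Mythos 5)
Your overall strategy---induction on the structure of the derivation, threading $\Omega, \Xi$ along the spine from the open leaf to the root---is exactly the paper's route (its entire proof is ``by induction on the rules''), and your handling of the structural rules, ($\otimes$-right), (Cut) and ($\linearcontract$-split) is fine. The genuine gap is where you classify ($\multimap$-left) among the unproblematic rules whose ``principal formula and active context are disjoint from $\Omega, \Xi$''. In the paper's figure that rule is schematic in $\Sigma$: its premise $\Sigma \vdash \sigma$ must have a context of simple products only---precisely the kind of context-restricted premise you correctly worried about for the first two premises of ($\linearcontract$-left). So when the spine passes through the premise of ($\multimap$-left), the induction hypothesis hands you $\Omega, \Xi, \Sigma \vdash \sigma$, which no longer matches the rule's schema, and the step fails. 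This restriction is not pedantry: it is what prevents the output of one linear implication from feeding another, which is exactly why the paper needs (Cut) to model debts (cf.~\autoref{thm:fstnfdecide-ncr} and \autoref{thm:correctcompletestar}); reading ($\multimap$-left) with a generic context $\Phi$, as your case analysis implicitly does, would make debt transitions derivable cut-free and collapse the distinction between \autoref{th:fair-exchange} and \autoref{thm:correctcompletestar}. In \MuACLs\ the case can be repaired by cutting on $\sigma$ against the closed proof of $\sigma, \sigma \multimap \sigma' \vdash \sigma'$, but in cut-free \MuACL\ it cannot: from the open leaf $\res@\usr \vdash \res''@\usr''$ one ($\multimap$-left) step derives $\res@\usr,\, \res''@\usr'' \multimap \res'@\usr' \vdash \res'@\usr'$, yet taking $\Xi = \{!\delta_0\}$ for an unrelated $\delta_0$, the augmented root is not derivable from the augmented leaf: no rule deletes the last copy of a banged context formula going downward, so the banged formula persists along any spine and forever disqualifies it from serving as the pure-$\Sigma$ premise of ($\multimap$-left), while no other cut-free rule can introduce the implication into the spine's context or change the right-hand side atom $\res''@\usr''$ into $\res'@\usr'$ without tensoring.

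A second, smaller instance of the same problem: your claim that the first two premises of ($\linearcontract$-left) are \emph{always} closed is unjustified---nothing in the statement prevents the open assumption $\Phi' \vdash \phi'$ from having the shape $\delta \vdash \delta'$ and sitting in a side premise, where the augmented assumption again fails to fit the restricted slot; the appeal to the normalization appendix concerns proofs, not derivations with open leaves. Both failure modes are invisible in every instance the paper actually needs (in \autoref{thm:acceptderive} and \autoref{thm:fairderive} the open assumption's context contains the banged policy encodings $\semden{\pol_\coal}$, so the spine only ever traverses structural and contractual rules), and your argument does establish the proposition in that restricted generality. But as a proof of the universally quantified statement, the ($\multimap$-left) case is a real hole---one that, to be fair, the paper's one-line proof is equally silent about.
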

\begin{proof}
	By induction on the rules.
\end{proof}

\begin{lemma}\label{thm:acceptderive}
	For any $\exc, \exc'$, if $\pol_\coal \vDash \exc \triangleleft \exc'$ then, for any $\Sigma$ and $\sigma$, 
	a \MuACL\ derivation exists from $\semantics{\pol_\coal}, \semden{\exc'} \linearcontract \semden{\exc}, \Sigma \vdash \sigma$ to 
	$\semantics{\pol_\coal}, \Sigma \vdash \sigma$.
\end{lemma}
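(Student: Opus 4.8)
The plan is to induct on the derivation of $\pol_\coal \vDash \exc \triangleleft \exc'$ (\autoref{def:accepted-exc}), constructing the required derivation \emph{bottom-up} from its root $\semantics{\pol_\coal}, \Sigma \vdash \sigma$ and synthesising the contract $\semden{\exc'} \linearcontract \semden{\exc}$ as the open leaf $\semantics{\pol_\coal}, \semden{\exc'} \linearcontract \semden{\exc}, \Sigma \vdash \sigma$ by drawing on the unrestricted (banged) copies stored in $\semantics{\pol_\coal}$. Throughout I treat $\semantics{\pol_\coal}$ as the multiset of its $!(\semden{\exc_j'} \linearcontract \semden{\exc_j})$ factors, moving between the tensored form and this multiset with ($\otimes$-left).

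For the clause $\exc \triangleleft \exc' \in \pol_\coal$, the factor $!(\semden{\exc'} \linearcontract \semden{\exc})$ already occurs in $\semantics{\pol_\coal}$. After exposing it via ($\otimes$-left), I apply (Cont) to duplicate it and ($!$-left) to one of the copies, which deposits an un-banged $\semden{\exc'} \linearcontract \semden{\exc}$ into the antecedent, yielding exactly the target leaf. The key observation I will record is that these three rules act only on the policy factors, leaving the rest of the antecedent and the right-hand side untouched, so the construction goes through verbatim in the presence of any ambient context. This is a genuine strengthening of \autoref{thm:monotonicity}, which only lets one add \emph{banged} assumptions, whereas the composition case below must carry an un-banged $\theta$.

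For the composition clause, with $\exc = \exc_1 \uplus \exc_2$ and $\exc' = \exc_1' \uplus \exc_2'$, I first note that the encoding is a homomorphism for $\uplus$, i.e.\ $\semden{\exc_1 \uplus \exc_2} = \semden{\exc_1} \otimes \semden{\exc_2}$ (the multiplicities in \autoref{def:compile} simply add). By the induction hypothesis applied context-agnostically, I add both sub-contracts $\semden{\exc_1'} \linearcontract \semden{\exc_1}$ and $\semden{\exc_2'} \linearcontract \semden{\exc_2}$ to the antecedent, and then apply ($\linearcontract$-split) to merge them into $(\semden{\exc_1'} \otimes \semden{\exc_2'}) \linearcontract (\semden{\exc_1} \otimes \semden{\exc_2})$, which by the homomorphism property is precisely $\semden{\exc'} \linearcontract \semden{\exc}$. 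This reaches the desired leaf.

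The base clause $\pol_\coal \vDash \emptyset \triangleleft \emptyset$ is where I expect the real friction: here the contract degenerates to the vacuous $I \linearcontract I$, and, read bottom-up, no rule can manufacture a fresh $\theta$ at a leaf except by extracting a banged copy from the policy, which need not contain $!(I \linearcontract I)$. I plan to dispatch this either by inversion on a normal proof (\autoref{thm:col-normal-ncr})---the merging and ($\linearcontract$-left) steps treat this contract as the unit, so it can be excised to give a proof of $\semantics{\pol_\coal}, \Sigma \vdash \sigma$---or, more cleanly, by arranging the induction so that every non-empty accepted exchange is decomposed through the clause-2 atoms, leaving the empty contract with no real content. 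This unit-contract bookkeeping, together with making the context-agnostic form of the induction hypothesis precise, is the main obstacle; the two substantive clauses are otherwise routine rule manipulations.
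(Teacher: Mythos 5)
Your proposal is correct and follows the paper's own proof almost step for step: the paper inducts on the definition of $\vDash$, dispatches the clause-2 base exactly by your (Cont)/($!$-left)/($\otimes$-left) bookkeeping on the banged policy factors, and handles composition by stacking the two induction-hypothesis derivations---appealing to \autoref{thm:monotonicity} at precisely the point where you rightly observe that a strengthening to un-banged $\theta$ contexts is required---before merging them with ($\linearcontract$-split) via the identity $\semden{\exc_1} \otimes \semden{\exc_2} = \semden{\exc_1 \uplus \exc_2}$. The empty-contract base case you flag is the one spot the paper waves off as ``trivial,'' and of your two proposed repairs the second is the viable one: reading the derivation downward no rule can eliminate an un-banged $I \linearcontract I$ (and excision through normal forms would yield a closed proof rather than the open-assumption derivation the lemma demands), whereas normalizing $\vDash$-derivations so that clause 1 never feeds clause 3---empty contributions being units for $\uplus$, with uninvolved coalitions weakened in at the call site of \autoref{thm:fairderive}---closes the case.
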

\begin{proof}
	By induction on the definition of $\vDash$.
	The base case is trivial.
	Assume that $\pol_{\coal} \vDash \exc \triangleleft \exc'$, $\pol_{\coal} \vDash \exc'' \triangleleft \exc'''$, and therefore
	 $\pol_{\coal} \vDash (\exc \uplus \exc'') \triangleleft (\exc' \uplus \exc''')$.
	
	We can write the following, where $\Pi$, $\Pi'$ exist by the induction hypothesis and~\autoref{thm:monotonicity}.
	\[
	\prfsummary[$\Pi$]
	{
		\prfsummary[$\Pi'$]
		{
			\prftree[noline]
			{\semantics{\pol_\coal}, \semden{\exc'} \linearcontract \semden{\exc}, \semden{\exc'''} \linearcontract \semden{\exc''}, \Sigma \vdash \sigma}
		}
		{\semantics{\pol_\coal}, \semden{\exc'} \linearcontract \semden{\exc}, \Sigma \vdash \sigma}
	}
	{\semantics{\pol_\coal}, \Sigma \vdash \sigma}
	\]
	Then we can apply ($\linearcontract$-split) to the top of the derivation, obtaining $\semantics{\pol_\coal}, (\semden{\exc'} \otimes \semden{\exc'''}) \linearcontract (\semden{\exc} \otimes \semden{\exc''}), \Sigma \vdash \sigma$.
	
	Then, it suffices noticing that $\semden{\exc} \otimes \semden{\exc'} = \semden{\exc \uplus \exc'}$.
\end{proof}

\begin{lemma}\label{thm:fairderive}
	For any agreement $\exc$, and for any $\Sigma, \sigma$, a derivation exists from $\semantics{\exc}, \Sigma \vdash \sigma$ to
	$\biguplus_{\usr \in \Usr} \semden{\pol_\coal}, \Sigma \vdash \sigma$.
\end{lemma}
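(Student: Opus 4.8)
The plan is to combine the decomposition of an agreement given by \autoref{def:fairexc} with the per-coalition derivations of \autoref{thm:acceptderive}, stack them, and then collapse the contracts they produce into the single hypothesis $\semden{\exc}$. Concretely, since $\exc$ is an agreement, \autoref{def:fairexc} supplies for every coalition $\coal \subseteq \Usr$ a pair $\exc_\coal, \exc_\coal'$ with $\pol_\coal \vDash \exc_\coal' \triangleleft \exc_\coal$ and $\biguplus_\coal \exc_\coal = \biguplus_\coal \exc_\coal' = \exc$; coalitions with $\exc_\coal = \exc_\coal' = \emptyset$ contribute nothing and can be discarded, and finiteness of $2^{\Usr}$ guarantees the construction below terminates.

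For each coalition I would invoke \autoref{thm:acceptderive} on $\pol_\coal \vDash \exc_\coal' \triangleleft \exc_\coal$ (matching its $\exc,\exc'$ to $\exc_\coal',\exc_\coal$), obtaining a derivation with conclusion $\semden{\pol_\coal}, \Sigma \vdash \sigma$ and open leaf $\semden{\pol_\coal}, \semden{\exc_\coal} \linearcontract \semden{\exc_\coal'}, \Sigma \vdash \sigma$. Starting from $\biguplus_\coal \semden{\pol_\coal}, \Sigma \vdash \sigma$ I would stack these derivations one coalition at a time, using monotonicity (\autoref{thm:monotonicity}) to carry the banged policies of the other coalitions along the way. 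After all coalitions are processed, the open leaf becomes $\biguplus_\coal \semden{\pol_\coal}, \{\semden{\exc_\coal} \linearcontract \semden{\exc_\coal'}\}_{\coal}, \Sigma \vdash \sigma$.

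Then I would collapse the accumulated contracts: repeated applications of ($\linearcontract$-split) merge them into $\bigl(\bigotimes_\coal \semden{\exc_\coal}\bigr) \linearcontract \bigl(\bigotimes_\coal \semden{\exc_\coal'}\bigr)$. Since $\semden{\cdot}$ sends $\uplus$ to $\otimes$ (as already noted in the proof of \autoref{thm:acceptderive}), the agreement equalities make both sides encode $\exc$, so up to the associativity and commutativity of $\otimes$ this contract is $\semden{\exc} \linearcontract \semden{\exc}$. I would then apply ($\linearcontract$-left) once: its first two premises $\bigotimes_\coal \semden{\exc_\coal} \vdash \bigotimes_\coal \semden{\exc_\coal'}$ and its converse hold by \autoref{thm:deltaproof}, because both sides are multisets of linear implications encoding the same $\exc$; its third premise is the continuation $\biguplus_\coal \semden{\pol_\coal}, \semden{\exc}, \Sigma \vdash \sigma$. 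Finally a block of (Weak) (and ($\Omega$-$\otimes$-left)) steps strips the now-unused banged policies off that continuation, producing the desired open leaf $\semden{\exc}, \Sigma \vdash \sigma$; gluing the pieces gives a derivation with that leaf and conclusion $\biguplus_\coal \semden{\pol_\coal}, \Sigma \vdash \sigma$.

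I expect the main obstacle to be precisely the stacking step, where the bare contracts $\semden{\exc_{\coal''}} \linearcontract \semden{\exc_{\coal''}'}$ produced for earlier coalitions sit in the context while the derivation of \autoref{thm:acceptderive} for a later coalition is applied. \autoref{thm:monotonicity} as stated only adds the banged components $\Omega, \Xi$, not such $\Theta$-formulas. I would discharge this by the same induction on the rules that proves \autoref{thm:monotonicity}: every rule occurring in the derivations of \autoref{thm:acceptderive}, namely (Weak), (Cont), ($!$-left), the $\otimes$-left variants and ($\linearcontract$-split), remains applicable in the presence of extra $\Theta$-formulas, which stay inert as spectators. Once this bookkeeping is in place, the concluding ($\linearcontract$-split)/($\linearcontract$-left)/(Weak) block is entirely routine.
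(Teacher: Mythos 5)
Your proposal is correct and takes essentially the same route as the paper's own proof: decompose the agreement coalition-wise, introduce each contract $\semden{\exc_\coal} \linearcontract \semden{\exc_\coal'}$ via \autoref{thm:acceptderive}, merge them with ($\linearcontract$-split), and discharge the resulting contract---whose two sides both encode $\exc$---by a single application of ($\linearcontract$-left), the only immaterial difference being that the paper weakens away the policies before ($\linearcontract$-left) rather than in the continuation. Your closing observation is apt: the paper likewise cites \autoref{thm:monotonicity} at the stacking step, where what is actually needed is that spectator $\Theta$-formulas stay inert under the rules used in \autoref{thm:acceptderive}, so your explicit rule-by-rule check makes the argument more careful than the original.
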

\begin{proof}
	Assume $\exc$ is an agreement, then $\exc_\coal$ and $\exc_\coal'$ exists, and $\pol_\coal \vDash \exc_C \triangleleft \exc_C'$ for any $\coal$.
	Then, by~\autoref{thm:acceptderive} and~\autoref{thm:monotonicity}, the following derivation exists
	\[
	\prfsummary
	{
		\prftree[r]{($\linearcontract$-left)}
		{
			\prftree
			{\Pi}
			{\delta \vdash \delta'}
		}
		{
			\prftree
			{\Pi'}
			{\delta' \vdash \delta}
		}
		{\delta, \Sigma \vdash \sigma}
		{\delta \linearcontract \delta', \Sigma \vdash \sigma}
	}
	{\biguplus_{\usr \in \Usr} \semden{\pol_\coal}, \Sigma \vdash \sigma}
	\]
	with $\delta = \bigotimes_{\coal} \semden{\exc_\coal'}$ and $\delta' = \bigotimes_{\coal} \semden{\exc_\coal}$.
	We conclude by noticing that the existence of $\Pi$ and $\Pi'$ is guaranteed by the fact $\exc$ is an agreement, and
	\[
	\biguplus_{\coal}\exc_\coal = \biguplus_{\coal} \exc_\coal' = \exc \quad \text{and}\quad  \semantics{\exc} = \delta
	\]
\end{proof}

\begin{lemma}\label{thm:deltaexc}
	For any $\MuACstate, \MuACstate'$ and $\exc$, if $\MuACstate \xrightarrow{\exc} \MuACstate'$ then $\semantics{\exc}, \semantics{\MuACstate} \vdash \semantics{\MuACstate'}$ is provable in \MuACL.
\end{lemma}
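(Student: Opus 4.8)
The plan is to prove the claim by induction on the total number of transfers $n = \sum_{\tran \in \exc} \exc(\tran)$ of the exchange $\exc$, constructing the required derivation bottom-up and producing the atoms of $\semantics{\MuACstate'}$ one transfer at a time. Since we work in \MuACL\ \emph{without} the (Cut) rule, the derivation must be cut-free; the restricted form of ($\multimap$-left) in \autoref{fig:MuACLfull} fires an implication at a leaf to produce a single atom, so the natural strategy is to assemble $\semantics{\MuACstate'}$ with ($\otimes$-right) while firing exactly one implication per transfer.

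For the base case $n = 0$ we have $\exc = \emptyset$, hence $\semantics{\exc} = I$; moreover condition (2) of \autoref{def:ee} forces $\MuACstate' = \MuACstate$. It then suffices to derive $\semantics{\MuACstate} \vdash \semantics{\MuACstate}$, obtained by decomposing the antecedent into its atomic factors with ($\otimes$-left), reassembling the succedent from (Ax) leaves with ($\otimes$-right), and using ($I$-right) for the empty state; a final ($I$-left) discards the leading $\semantics{\exc} = I$.

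For the inductive step I would pick any transfer $\tau = \usr \xmapsto{\res} \usr' \in \exc$ and set $\exc_- = \exc \setminus \{\tau\}$. Because $\tau$ occurs in $\exc$, condition (1) gives $\MuACstate(\usr)(\res) \geq 1$, so $\semantics{\MuACstate} = \res@\usr \otimes \semantics{\MuACstate_-}$ where $\MuACstate_-$ is $\MuACstate$ with one copy of $\res$ removed from $\usr$; likewise conditions (1) and (2) together give $\MuACstate'(\usr')(\res) \geq 1$, so $\semantics{\MuACstate'} = \res@\usr' \otimes \semantics{\MuACstate'_-}$ with $\MuACstate'_-$ obtained by removing one copy of $\res$ from $\usr'$. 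Applying ($\otimes$-left) exposes $\res@\usr$ together with the factor $\res@\usr \multimap \res@\usr'$ of $\semantics{\exc}$ in the antecedent, and a single ($\otimes$-right) closes the proof: the left branch derives $\res@\usr, \res@\usr \multimap \res@\usr' \vdash \res@\usr'$ by (Ax) followed by ($\multimap$-left), while the right branch is $\semantics{\exc_-}, \semantics{\MuACstate_-} \vdash \semantics{\MuACstate'_-}$, supplied by the induction hypothesis.

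The main obstacle is the bookkeeping needed to invoke that hypothesis, i.e. checking that $\MuACstate_- \xrightarrow{\exc_-} \MuACstate'_-$ is again a transition of the exchange environment. This is a direct but careful verification of conditions (1) and (2): removing the single transfer $\tau$ lowers both $\usr$'s outgoing count of $\res$ and its holdings by one (so (1) still holds at $(\usr,\res)$, and all other coordinates are untouched since $\usr \neq \usr'$), while for (2) the adjustments at $(\usr,\res)$ and $(\usr',\res)$ cancel exactly against the two removed endpoints and every other coordinate is unchanged. As $|\exc_-| = n-1$, the induction applies. Finally I would note the degenerate empty-state and empty-exchange cases are absorbed by ($I$-left)/($I$-right), and that no ($\linearcontract$-left), ($\linearcontract$-split), nor (Cut) is ever used, so the resulting proof indeed lives in \MuACL.
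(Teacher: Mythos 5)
Your proposal is correct and follows essentially the same route as the paper's proof: induction on the size of $\exc$, peeling off one transfer $\usr \xmapsto{\res} \usr'$, closing the left branch with (Ax) and ($\multimap$-left) and the right branch with the induction hypothesis, assembled via ($\otimes$-right) and ($\otimes$-left). The only difference is that you spell out the bookkeeping the paper compresses into ``by definition'' — namely that $\MuACstate_- \xrightarrow{\exc_-} \MuACstate'_-$ remains a valid transition and that $\MuACstate'(\usr')(\res) \geq 1$ — which is a welcome, and correct, addition of detail rather than a different argument.
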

\begin{proof}
	By induction on the size of $\exc$.
	The case $\exc = \emptyset$ is straightforward.
	Let $\exc$ be $\exc' \uplus \{ \tr \}$ with $\tr = \usr \xmapsto{\res} \usr'$.
	By definition, $\st = \{ (\usr, \res) \} \uplus \st''$, $\st' = \{ (\usr', \res) \} \uplus \st'''$ with $\st'' \xrightarrow{\exc'} \st'''$.
	
	By induction hypothesis, a proof $\Pi$ exists for $\semantics{\exc'}, \semantics{\st''} \vdash \semantics{\st'''}$.
	Then the following is a proof for $\semantics{\exc}, \semantics{\st} \vdash \semantics{\st'}$.
	\[
	\prftree[r]{($\otimes$-left)}
	{
		\prftree[r]{($\otimes$-right)}
		{
			\prftree[r]{($\multimap$-left)}
			{
				\prftree[r]{(Ax)}
				{\res@\usr \vdash \res@\usr}
			}
			{\semantics{\{\tr\}}, \res@\usr \vdash \res@\usr'}
		}
		{
			\prftree{\Pi}
			{\semantics{\exc'},	\semantics{\st''} \vdash \semantics{\st'''}}
		}
		{\semantics{\exc'}, \semantics{\{\tr\}}, \semantics{\st} \vdash \semantics{\st'}}
	}
	{\semantics{\exc}, \semantics{\st} \vdash \semantics{\st'}}
	\]
\end{proof}

\begin{lemma}\label{thm:fairnesstovalidity}
	For all $\st$ and $\st'$, if $\st \rightarrow_{ok} \st'$, then the sequent $\biguplus_{\usr \in \Usr} \semden{\pol_\coal}, \semantics{\st} \vdash \semantics{\st'}$ is valid in \MuACL.
\end{lemma}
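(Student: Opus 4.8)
The plan is to assemble the proof by composing the two completeness building blocks established immediately above, in the same two-part fashion (policy reconstruction plus state transformation) used for the converse soundness direction \autoref{thm:validitytofairness}. The agreement witnessing the transition is the object that links the two parts.

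First I would unfold the definition of $\rightarrow_{ok}$: the hypothesis $\st \rightarrow_{ok} \st'$ says precisely that there is an agreement $\exc$ with $\st \xrightarrow{\exc} \st'$. Next I would apply \autoref{thm:deltaexc} to this transition, obtaining a closed \MuACL\ proof of
\[
\semantics{\exc}, \semantics{\st} \vdash \semantics{\st'}.
\]
This yields the lower portion of the proof tree, in which the encoded exchange $\semantics{\exc}$ (a tensor of linear implications) drives the encoded source state to the encoded target state by repeated applications of $(\multimap\text{-left})$.

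Then I would invoke \autoref{thm:fairderive} with $\Sigma := \semantics{\st}$ and $\sigma := \semantics{\st'}$. Since $\exc$ is an agreement, that lemma supplies a \MuACL\ derivation whose single open leaf is exactly $\semantics{\exc}, \semantics{\st} \vdash \semantics{\st'}$ and whose root is the target sequent $\biguplus_{\usr \in \Usr} \semden{\pol_\coal}, \semantics{\st} \vdash \semantics{\st'}$; internally this derivation rebuilds the policy encoding from the agreement decomposition $\biguplus_{\coal} \exc_\coal = \biguplus_{\coal} \exc_\coal' = \exc$ using $(\linearcontract\text{-left})$ and $(\linearcontract\text{-split})$. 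Grafting the closed proof from \autoref{thm:deltaexc} onto that open leaf produces a closed proof of the target sequent, which is therefore valid.

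I do not anticipate a real obstacle at this level, since all the combinatorial work has already been discharged into the two cited lemmas; the statement is essentially their composition, dual to the decomposition of \autoref{thm:validitytofairness}. The only point needing care is the instantiation $\Sigma := \semantics{\st}$, $\sigma := \semantics{\st'}$, so that the open leaf generated by \autoref{thm:fairderive} coincides syntactically with the conclusion of \autoref{thm:deltaexc}. Once this match is checked, the two trees graft together directly and the argument is complete.
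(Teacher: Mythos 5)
Your proposal is correct and takes exactly the paper's route: the paper proves this lemma precisely ``by composing the derivations of \autoref{thm:fairderive} and \autoref{thm:deltaexc}'', i.e.\ grafting the closed proof of $\semantics{\exc}, \semantics{\st} \vdash \semantics{\st'}$ from \autoref{thm:deltaexc} onto the open leaf of the derivation supplied by \autoref{thm:fairderive} with $\Sigma = \semantics{\st}$ and $\sigma = \semantics{\st'}$. Your write-up merely spells out this composition (including the syntactic match of the leaf), which the paper leaves implicit.
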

\begin{proof}
	By composing the derivations of~\autoref{thm:fairderive} and \autoref{thm:deltaexc}.
\end{proof}

We can now prove the compilation from MuAC to \MuACL\ to be correct and complete.
\correction*
\begin{proof}
	By~\autoref{thm:validitytofairness} and~\autoref{thm:fairnesstovalidity}.
\end{proof}

We investigate now exchange environments with debits.
\begin{lemma}\label{thm:deltaexccmp}
	For each $\st, \st'$ and $\exc$, if $\st \xdashrightarrow{\exc} \st'$
	then $\semantics{\exc}, \semantics{\st} \vdash \semantics{\st'}$ is valid in \MuACLs.
\end{lemma}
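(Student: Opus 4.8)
The plan is to mirror the structure of the non-debt case \autoref{thm:deltaexc}, but to replace the single proof tree by a composition of smaller proofs glued with the (Cut) rule of \autoref{fig:starcut}. I would argue by induction on the cardinality of the exchange $\exc$. In the base case $\exc = \emptyset$, condition~(2) of \autoref{def:ee} forces $\MuACstate = \MuACstate'$, so the goal is $\semantics{\MuACstate} \vdash \semantics{\MuACstate}$, which is provable already in \MuACL\ (hence in \MuACLs) using only (Ax), ($\otimes$-left), ($\otimes$-right) and the $I$-rules.

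For the inductive step I would peel off a single transfer $\tran = \usr \xmapsto{\res} \usr'$ that is \emph{immediately firable} at $\MuACstate$, i.e. one with $\MuACstate(\usr)(\res) \geq 1$. Firing it yields an ordinary (debt-free) step $\MuACstate \xrightarrow{\{\tran\}} \MuACstate_1$, so \autoref{thm:deltaexc} supplies a \MuACL-proof of $\semantics{\{\tran\}}, \semantics{\MuACstate} \vdash \semantics{\MuACstate_1}$. A short check of conditions~(1$'$) and~(2) shows that removing $\tran$ leaves a valid debt-transition $\MuACstate_1 \xdashrightarrow{\exc \setminus \{\tran\}} \MuACstate'$ of strictly smaller size, to which the induction hypothesis applies, giving a \MuACLs-proof of $\semantics{\exc \setminus \{\tran\}}, \semantics{\MuACstate_1} \vdash \semantics{\MuACstate'}$.

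I would then glue the two proofs by cutting on the intermediate state $\semantics{\MuACstate_1}$, taking the debt-free step as the left premise and the inductive proof as the right premise, and finally collapse the two exchange formulas into one with ($\otimes$-left), using that $\semantics{\{\tran\}} \otimes \semantics{\exc \setminus \{\tran\}} = \semantics{\exc}$. This is precisely the dual of the composition argument used for the (Cut) case in \autoref{thm:deltaexctofaircomp}, where $\semantics{\exc_1} \otimes \semantics{\exc_2} = \semantics{\exc_1 \uplus \exc_2}$ is exploited in the same way.

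The step I expect to be the main obstacle is guaranteeing the existence of an immediately firable transfer. This is exactly where debts bite: condition~(1$'$) lets an agent give away a resource she does not yet own, so it is not automatic that any single transfer can be grounded in the current allocation. The delicate case is a purely circular debt, in which every resource being given is held by no one and is supplied only by another transfer of the same exchange, so naive peeling deadlocks. Resolving it requires a global sequentialization argument — ordering the transfers so that each is backed by the resources released by those fired before it — and it is this reordering that the (Cut) rule records. I would therefore isolate such circular sub-exchanges, show that an incoming transfer can always be scheduled before its matching outgoing one whenever the cycle is ultimately grounded by some real resource, and feed the resulting firing order into the inductive peeling above.
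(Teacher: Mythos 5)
Your decomposition is genuinely different from the paper's. The paper also inducts on the size of $\exc$, but it never sequentializes the exchange into state-to-state steps: when the peeled transfer $\tr = \usr \xmapsto{\res} \usr'$ is not backed by $\st$, the paper pairs it with a transfer $\tr' = \usr'' \xmapsto{\res} \usr$ that pays the debt and applies (Cut) on the intermediate \emph{resource}, composing $\semantics{\{\tr'\}}, \res@\usr'' \vdash \res@\usr$ with $\semantics{\{\tr\}}, \res@\usr \vdash \res@\usr'$ into a single virtual transfer $\res@\usr'' \vdash \res@\usr'$, and then removes \emph{two} transfers at once in the induction. You instead apply (Cut) on the intermediate \emph{state} $\semantics{\MuACstate_1}$. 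Your gluing is sound: your check that firing a backed transfer preserves conditions~(1$'$) and~(2) for the residual exchange goes through, and cutting on the full successor state even avoids a subtlety that the tensor-split style of \autoref{thm:deltaexc} runs into under debts (there the residual transition is taken from a state with the transferred token parked aside, which breaks~(1$'$) when the receiver immediately re-gives a token she does not otherwise own, whereas your $\MuACstate_1$ keeps it available). Note also that the paper's case analysis, as written, only succeeds for a well-chosen peeled transfer, so a scheduling choice is implicitly present there too.

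The gap is exactly the one you flagged, and it cannot be closed as the statement is phrased. Your greedy scheduling works precisely when every debt chain is grounded: if no transfer is firable, then every giver of $\res$ owns none, so by~(1$'$) each receives at least as much $\res$ as she gives, and summing over the givers shows the $\res$-transfers form closed cycles among zero-owning agents with no external supply. But condition~(1$'$) does \emph{not} exclude such ungrounded cycles: $\exc = \{\Alice \xmapsto{\res} \Bob, \Bob \xmapsto{\res} \Alice\}$ with neither agent owning $\res$ satisfies~(1$'$) (each has outflow $1 \leq 0 + 1$) and yields a legal transition $\MuACstate \xdashrightarrow{\exc} \MuACstate$. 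For it no firing order exists, and indeed the sequent $(\res@\Alice \multimap \res@\Bob), (\res@\Bob \multimap \res@\Alice), \semantics{\MuACstate} \vdash \semantics{\MuACstate}$ is not provable even with (Cut): both implications are linear and must be consumed, yet neither antecedent is ever derivable, and the whole derivation stays inside a fragment sound for linear logic, where this sequent fails. So your caveat ``whenever the cycle is ultimately grounded'' cannot be dropped. To be fair, the paper's own proof is silently incomplete at the same point: its second case asserts $\MuACstate = \{(\usr'',\res)\} \uplus \MuACstate''$, i.e.\ that the paying agent owns the resource, which the definition of $\xdashrightarrow{\exc}$ does not guarantee. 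In sum: with your greedy-scheduling lemma written out, your argument establishes the statement for all grounded exchanges --- which is all the paper's proof establishes as well --- while the ungrounded-cycle case is a counterexample to the lemma as stated rather than a defect specific to your route.
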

\begin{proof}
	By induction on the size of $\exc$.
	The case $\exc = \emptyset$ is straightforward.
	Let $\exc$ be $\exc' \uplus \{ \tr \}$ with $\tr = \usr \xmapsto{\res} \usr'$.
	We consider two cases.
	If $\st = \{ (\usr, \res) \} \uplus \st''$, $\st' = \{ (\usr', \res) \} \uplus \st'''$ with $\st'' \xrightarrow{\exc'} \st'''$, and then the proof is built as in~\autoref{thm:deltaexc}.
	
	Otherwise $\exc' = \exc'' \uplus \{ \tr' \}$ with $\tr' = \usr'' \xmapsto{\res} \usr$, and
	$\st = \{ (\usr'', \res) \} \uplus \st''$, $\st' = \{ (\usr', \res) \} \uplus \st'''$ with $\st'' \xrightarrow{\exc''} \st'''$.
	
	By induction hypothesis, a proof $\Pi$ exists for $\semantics{\exc''}, \semantics{\st''} \vdash \semantics{\st'''}$.

	Then the following is a proof for $\semantics{\exc}, \semantics{\st} \vdash \semantics{\st'}$.
	\[
	\prftree[r,doubleline]{($\otimes$-left)}
	{
		\prftree[r]{($\otimes$-right)}
		{
			\prftree
			{\Pi'}
			{\semantics{\{\tr\}}, \semantics{\{\tr'\}}, \res@\usr'' \vdash \res@\usr'}
		}
		{
			\prftree{\Pi}
			{\semantics{\exc''},	\semantics{\st''} \vdash \semantics{\st'''}}
		}
		{\semantics{\exc''}, \semantics{\{\tr\}}, \semantics{\{\tr'\}}, \semantics{\st} \vdash \semantics{\st'}}
	}
	{\semantics{\exc}, \semantics{\st} \vdash \semantics{\st'}}
	\]
	with $\Pi'$ defined as
	\[
	\prftree[r]{(Cut)}
	{
		\prftree[r]{($\multimap$-left)}
		{
			\prftree[r]{(Ax)}
			{\res@\usr'' \vdash \res@\usr''}
		}
		{\semantics{\{\tr'\}} \res@\usr'' \vdash \res@\usr}
	}
	{
		\prftree[r]{($\multimap$-left)}
		{
			\prftree[r]{(Ax)}
			{\res@\usr \vdash \res@\usr}
		}
		{\semantics{\{\tr\}} \res@\usr \vdash \res@\usr'}
	}
	{\semantics{\{\tr\}}, \semantics{\{\tr'\}}, \res@\usr'' \vdash \res@\usr'}
	\]
\end{proof}

\begin{lemma}\label{thm:fairnestovaliditycomp}
	For any $\st$ and $\st'$, if $\st \dashrightarrow_{ok} \st'$, then the sequent $\biguplus_{\usr \in \Usr} \semden{\pol_\coal}, \semantics{\st} \vdash \semantics{\st'}$ is valid in \MuACLs.
\end{lemma}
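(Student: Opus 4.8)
The plan is to mirror exactly the argument used for the debt-free case in \autoref{thm:fairnesstovalidity}, replacing its appeal to \autoref{thm:deltaexc} with the debt-aware \autoref{thm:deltaexccmp}. First I would unfold the definition of $\dashrightarrow_{ok}$: from $\st \dashrightarrow_{ok} \st'$ I obtain an agreement $\exc$ witnessing $\st \xdashrightarrow{\exc} \st'$. The goal then splits into two independent pieces that are glued at the leaf $\semantics{\exc}, \semantics{\st} \vdash \semantics{\st'}$.

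For the lower part of the derivation tree I would invoke \autoref{thm:fairderive} with $\Sigma = \semantics{\st}$ and $\sigma = \semantics{\st'}$. Since $\exc$ is an agreement, this yields a derivation whose single open assumption is $\semantics{\exc}, \semantics{\st} \vdash \semantics{\st'}$ and whose conclusion is exactly the target sequent $\biguplus_{\usr \in \Usr} \semden{\pol_\coal}, \semantics{\st} \vdash \semantics{\st'}$. For the upper part I would close that open leaf using \autoref{thm:deltaexccmp}, which, precisely because $\st \xdashrightarrow{\exc} \st'$ holds, provides a full \MuACLs\ proof of $\semantics{\exc}, \semantics{\st} \vdash \semantics{\st'}$. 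Composing the two—plugging the proof supplied by \autoref{thm:deltaexccmp} into the open assumption of the derivation supplied by \autoref{thm:fairderive}—produces the desired \MuACLs\ proof.

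The one point needing a word of justification is that \autoref{thm:fairderive} is stated for \MuACL, whereas the conclusion we want lives in \MuACLs. This is harmless: \MuACLs\ is \MuACL\ augmented with the (Cut) rule, so every \MuACL\ derivation is verbatim also a \MuACLs\ derivation, and the construction of \autoref{thm:fairderive} (which uses only ($\linearcontract$-left), ($\linearcontract$-split) and the rules deriving $\delta \vdash \delta'$) applies unchanged. Consequently the only place where debts—and hence the (Cut) rule—genuinely enter is localized in \autoref{thm:deltaexccmp}, while the policy-side reasoning is identical to the debt-free case. I therefore expect no real obstacle: all the substantive content has already been discharged in \autoref{thm:fairderive} and \autoref{thm:deltaexccmp}, and this lemma is a routine composition of the two.
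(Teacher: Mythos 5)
Your proposal is correct and takes essentially the same route as the paper, whose proof is precisely the composition of the derivation from \autoref{thm:fairderive} with the proof from \autoref{thm:deltaexccmp}, glued at the leaf $\semantics{\exc}, \semantics{\st} \vdash \semantics{\st'}$. Your explicit remark that the \MuACL\ derivation of \autoref{thm:fairderive} carries over verbatim to \MuACLs\ (since the latter merely adds the (Cut) rule) is a detail the paper leaves implicit, and it is handled correctly.
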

\begin{proof}
	By composing the derivations of~\autoref{thm:fairderive} and \autoref{thm:deltaexccmp}.
\end{proof}

The following theorem states the equivalence between provability in \MuACLs\ and agreement-labelled transitions in an exchange environment with debts.
\Logiccorrectnessstar*
\begin{proof}
	By~\autoref{thm:validitytofairnesscomp} and~\autoref{thm:fairnestovaliditycomp}.
\end{proof}

\end{document}